\newtheorem{definition}{Definition}[section]
\newtheorem{theorem}{Theorem}[section]
\newtheorem{lemma}{Lemma}[section]
\newtheorem{corollary}{Corollary}[section]
\newtheorem{proposition}{Proposition}[section]
\newtheorem{example}{Example}[section]
\newcommand{\Ex}[1]{\mathbb{E}\left[ #1 \right]}
\newcommand{\pr}[1]{\Pr\left[#1\right]}
\newcommand{\Adv}{\mathcal{A}}
\newcommand{\Com}[1]{\mathrm{Commit}(#1)}
\newcommand{\abs}[1]{{\left| #1 \right|}}
\newcommand{\negl}[1]{{\texttt{negl}\left( #1 \right)}} 
\newcommand{\poly}[1]{\texttt{poly}\left( #1 \right)}
\newcommand{\E}[2]{\mathbb{E}_{#1}\left[#2\right]}
\newcommand{\cN}{\mathcal{N}}
\newcommand{\cA}{\mathcal{A}}
\newcommand{\cF}{\mathcal{F}}
\newcommand{\cG}{\mathcal{G}}
\newcommand{\bN}{\mathbb{N}}
\newcommand{\bR}{\mathbb{R}}
\newcommand{\D}{\mathcal{D}}
\newcommand{\F}{\mathcal{F}}
\newcommand{\Real}{\mathsf{R}} 
\newcommand{\game}{\mathcal{G}}
\newcommand{\rSum}[3]{\sum\limits_{#1 = #2}^{#3}}
\newcommand{\PRP}{\mathsf{PRP}}
\newcommand{\Comp}{\mathsf{Comp}}
\DeclareMathOperator*{\argmax}{arg\,max}
\newcommand{\bits}{
\{0,1\}
}
\newcommand{\rX}{X}
\newcommand{\rY}{Y}
\newcommand{\rZ}{Z}
\newcommand{\rhX}{\hat{X}}
\newcommand{\rhY}{\hat{Y}}
\newcommand{\ch}{{\hat{c}}}
\newcommand{\hc}{{\hat{c}}}
\newcommand{\mEmpDom}{\underset{m, \delta}{\gtrsim}}
\newcommand{\compEmpDom}{{\gtrsim}}
\newcommand{\statDom}{\gtrsim}
\newcommand{\secparam}{\kappa}
\newcommand{\samples}{m}
\newcommand{\idealGame}{\mathcal{G}^{\mathcal{F}}}
\newcommand{\realGame}{\mathcal{G}^{\Pi}}
\newcommand{\ensemble}[1]{\{{#1}_{\secparam}\}_{\secparam\in\mathbb{N}}}
\newcommand{\ensembleU}[2]{\{{#1}^{\secparam}\left({#2}\right)\}_{\secparam\in\mathbb{N}}}
\newcommand{\hs}{\hat{s}}
\newcommand{\bids}{\mathbf{b}}
\newcommand{\values}{\mathbf{v}}
\newcommand{\meanDominance}[3]{\pr{ \frac{1}{#3}\rSum{j}{1}{#3
}{{#1}^{(j)}} > \frac{1}{#3} \rSum{j}{1}{#3}{{#2}^{(j)}} }}
\newcommand{\conMeanDominance}[4]{\pr{ \frac{1}{#3}\rSum{j}{1}{#3
}{{#1}^{(j)}} > \frac{1}{#3} \rSum{j}{1}{#3}{{#2}^{(j)}} \mid #4}}
\newcommand{\meanDominanceEQ}[3]{\pr{\frac{1}{#3} \rSum{j}{1}{#3
}{{#1}^{(j)}} \ge \frac{1}{#3} \rSum{j}{1}{#3}{{#2}^{(j)}} }}
\newcommand{\meanDominanceEqual}[3]{\pr{\frac{1}{#3} \rSum{j}{1}{#3
}{{#1}^{(j)}} = \frac{1}{#3} \rSum{j}{1}{#3}{{#2}^{(j)}} }}
\newcommand{\meanDominanceU}[3]{\pr{ \frac{1}{#3}\rSum{j}{1}{#3
}{{U_i}^{(j)}(#1)} > \frac{1}{#3} \rSum{j}{1}{#3}{U_i^{(j)}(#2)} }}
\newcommand{\Sim}{\mathsf{Sim}}
\newcommand{\vitm}{\ensuremath{M}\xspace} 
\newcommand{\vfunc}{\ensuremath{\mathcal{F}}\xspace} 
\newcommand{\vprot}{\ensuremath{\Pi}\xspace}
\begin{document}
\title{Pseudo-Equilibria, or: How to Stop Worrying About Crypto and Just Analyze the Game}
%
%

\author{
Alexandros Psomas\thanks{Purdue University. Email: \texttt{apsomas@purdue.edu}}
\and
Athina Terzoglou\thanks{Purdue University. Email: \texttt{aterzogl@purdue.edu}}
\and
Yu Wei\thanks{Georgia Tech. Email: \texttt{ywei368@gatech.edu}}
\and
Vassilis Zikas\thanks{Georgia Tech. Email: \texttt{vzikas@gatech.edu}}
}
\date{}
%
%
%
\maketitle              
\begin{abstract}
We revisit the problem of a game theorist analyzing a game that uses cryptographic protocols. Ideally, the game theorist should be able to ignore all implementation details of the cryptographic protocols and abstract them as ideal, implementation-independent primitives, in a way that conclusions in the ``ideal world'' can be faithfully transferred to the ``real world,'' where real protocols are implemented by cryptography.
Achieving this goal is crucial, as the game theorist cannot --- and should not be expected to --- grapple with the full complexity of cryptographic implementations. This is particularly relevant in the era of Web3, where the widespread adoption of distributed ledgers has created a pressing need for a common language that bridges cryptography and game theory.

We propose a new solution concept: the \emph{pseudo-Nash equilibrium}. Informally, a (poly-time) strategy profile is a pseudo-Nash equilibrium if no (poly-time) player observes a noticeable, i.e., non-negligible, (expected) utility gain by playing a different (poly-time) strategy. Pseudo-Nash is substantially simpler and more accessible to game theorists than any existing notion that attempted to address the mismatch in the (asymptotic) cryptographic method and game theory. We prove, in a very general sense, that Nash equilibria in games that use idealized, unbreakable cryptography correspond naturally to pseudo-Nash equilibria when idealized cryptography is instantiated with actual protocols (under state-of-the-world assumptions). Our translation is not only conceptually simpler than existing approaches, but also more general: it does not require tuning or restricting utility functions in the game with idealized cryptography to accommodate idiosyncrasies of cryptographic implementations. In other words, pseudo-Nash equilibria allow us to separately and seamlessly study game-theoretic and cryptographic aspects.
\end{abstract}

\section{Introduction}

(Distributed) cryptography and game theory both study interactions among agents. Cryptography traditionally takes an adversarial approach; protection is demanded against the worst-case (mis)behavior of the agents.
Game theory takes the view that agents are selfish and aim to optimize their own utility. The widespread adoption of blockchain protocols has set a stage where both viewpoints are invaluable, as decentralized systems must be both strategically robust and cryptographically secure.

However, while the interaction of the two fields has been fruitful, we believe it is fair to say that a universally adopted model that allows for the two fields to interact seamlessly has yet to emerge. Arguably, the most ambitious goal would be a model such that game theorists can work in an ``ideal world'' that completely ignores all implementation details of cryptographic protocols (i.e., treats cryptographic primitives as black boxes that provide a certain functionality), but such that game-theoretic solution concepts, e.g., equilibria, naturally carry over to the ``real-world,'' where cryptographic primitives are replaced by their (imperfect) implementations. 
Such a model, by design, would also allow cryptographers to implement cryptographic primitives without worrying about downstream effects on game-theoretic predictions.
Simply put, our work attempts to answer the following question: \emph{Is this ambitious goal attainable?}

\subsection{Our contributions}

Unfortunately, the answer to the aforementioned question is, in general, ``no.'' This is because cryptographic protocols provide guarantees only under assumptions. There are, broadly speaking, two types of assumptions cryptographic protocols require. The first type is \emph{state-of-the-world} assumptions; examples of this type are computational hardness, ideal obfuscation,  secure hardware, and access to certain ideal functionalities or setups, which are common assumptions in the design of cryptographic protocols like encryption, digital signatures, commitments, MPC, etc. The second type is \emph{behavioral} assumptions (e.g., ``honest majority''), which are assumptions about the parties that implement the protocol at hand. The reason we cannot have a model that seamlessly abstracts away all cryptographic details, but is faithful to game-theoretic solution concepts, is because of this second type. Many central solution concepts in game theory, e.g., Nash equilibria, are predictions about players' behavior. Therefore, intuitively, it is not possible to ignore elements of a game that only work under specific player behavior. In \Cref{apx: examples} (\Cref{example: nothing works for behavioral}), we present a concrete example, where the behavioral assumption of `honest majority'' fails to ensure stability against rational agents (in the context of a blockchain game).


We note that such an example is neither meant to exclude the possibility of designing a protocol that achieves a desired specification under assumptions about the utility of rational agents, nor is it intended as a claimed contribution of our work. Rather, its goal is to provide an abstract perspective on the challenges of combining cryptography and game theory, tailored to the mindset of a game theorist who wishes to use cryptography in their solution.  
Indeed, the study of rational cryptography has yielded several novel ways in which specific tasks, like secret sharing or function evaluation, can be achieved under specific rationality assumptions, e.g., parties/agents being selfish but curious~\cite{fuchsbauer2010efficient,gordon2006rational,halpern2004rational,izmalkov2005rational,kol2008games,lepinksi2005collusion}
(preferring to learn the secret while preferring that others do not). What the above example reaffirms is that although a cryptographic proof considers worst-case adversaries, i.e., makes all attacks by such adversaries ineffective, a cryptographic security (proof) of a protocol under behavioral assumptions (like honest majority) does not render it a stable solution against arbitrary rational agents. In fact, even in the context of the above example, numerous works~\cite{bahrani2024undetectable,brown2019formal,carlsten2016instability,eyal2018majority,ferreira2024computing,ferreira2022optimal,ferreira2021proof,fiat2019energy,goren2019mind,kiayias2016blockchain,sapirshtein2017optimal,tsabary2018gap} show how, under several different classes of blockchain protocols, there are several ways that miners can alter their behavior in the protocol to manipulate timestamps, their rewards in the underlying cryptocurrency, or even the value of the cryptocurrency itself. Thus, the existence of these counterexamples is well-known (but perhaps not articulated using the ideal vs real-world perspective we take in this work).

Since our goal is unattainable in general, the next natural question is whether it is attainable under conditions. And, given our discussion so far, a natural condition would be to use cryptographic protocols under state-of-the-world assumptions. Our next observation is that standard game-theoretic solution concepts do not quite work. In fact, similar counterexamples have been motivating the drive to revised notions of equilibria~\cite{DBLP:conf/focs/GarayKMTZ13,gradwohl2013sequential,Jet:HalpernP15,halpern2016sequential,DBLP:conf/wine/HalpernPS14,halpern2016computational} (see Section~\ref{subsec: related} for details). In \Cref{sec: guessing game} (\Cref{example: Nash fails}), we demonstrate a simple ``Guessing game'' to illustrate how negligible-probability deviations (i.e., brute-forcing a cryptographic commitment) can alter Nash and $\epsilon$-Nash. This occurs because attempting to break crypto on the side is weakly preferred to not doing so since it offers a potential utility gain, however small. This phenomenon can be amplified when exploiting outcomes with very large utility that occur with negligible probability.
Even though such examples rule out the Nash equilibrium (and $\epsilon$-Nash) as a general ``plug-and-play'' solution concept, it seems intuitively possible to fix the issue by slightly changing the rules. So, for example, we could ``assume-away'' the undesirable actions (e.g., players' attempts to break cryptography). However, assuming away cryptographic attacks is problematic: what does it formally mean to assume away that a player ``attempts to guess a secret value''? 

Perhaps more troubling, transitioning from a world of ideal cryptographic implementations (the ``ideal world'') to a world with real, imperfect implementations (the ``real world'') can result in two games with qualitatively different equilibria and approximate equilibria. In \Cref{apx: auction} (\Cref{example: auction where ideal vs real is weird}), we provide such an example in the context of second-price auctions. In such cases, the real-world equilibrium is not just the ideal strategy with a side attempt to ``break crypto'', but rather the incentive to exploit cryptographic implementations derails the intended mechanism design. We demonstrate that while bidding truthfully is a Nash equilibrium in the ideal world, the real-world Nash equilibrium shifts to non-truthful bidding to exploit on negligible events. 

All our examples so far paint a bleak picture, at least for Nash and $\epsilon$-Nash equilibria. Yet, a Nash equilibrium in an ideal game (where, e.g., agents do not attempt to break crypto, because it is impossible to do so) seems like the ``correct'' prediction of how agents will behave in the corresponding real games. But, if these strategies are not Nash or $\epsilon$-Nash equilibria in the real game, then what are they? In a nutshell, a key contribution of our paper is to propose, analyze, and showcase a new solution concept --- the \emph{pseudo-Nash equilibrium} --- that allows us to reconcile this gap.\footnote{We note that the term Pseudo-Nash Equilibrium was previously defined in the context of sequential games~\cite{chou1985finitely},  with very low adoption, in a definition irrelevant to indistinguishability. Given the wide adoption of the term pseudo-random for distributions indistinguishable from random, and the intuition 
of our notion, which is aimed at capturing strategies indistinguishable from Nash, we find the name Pseudo-Nash appropriate for it.
}

\textbf{Pseudo-equilibria.}
Let us recall the classical notion of Nash equilibrium (NE): A strategy profile $s = (s_1, s_2, \dots, s_n)$ is a Nash equilibrium if, for every agent $i$, the expected utility of $i$ when playing strategy $s_i$ is at least her utility when deviating to strategy $s'_i$ (when all other agents' strategies remain fixed). The above definition makes no assumption about the efficiency of strategies in $s$. However, in games that use cryptography, it is important to exclude strategies that require inefficient (e.g., super-polynomial in the cryptographic security parameter) computation. Indeed, allowing players this amount of computation makes breaking cryptography a feasible strategy for them. 
A strawman approach to solve the above conundrum is to require strategies to be polynomial-time computable. It is, however, not hard to verify that such a simple patch cannot be the solution: First, by doing so, one necessarily changes the description of the game, resulting in restricted and often complicated stability definitions that require hard-wiring cryptographic security parameters into the game---or, transitioning from simple standard games to game-families parameterized by the security parameter. Such a choice might seem natural to a cryptographer, but it substantially dilutes our goal of delivering a stability notion that is friendly to economists and game theorists with little exposure to cryptographic definitions. 

Second, and more importantly, the above modification does not even solve the problem. To see why, one can look at the Guessing Game (\Cref{example: Nash fails}): Even for a polynomial bounded player, brute-forcing the cryptographic primitive is still the best response. In fact, on a more technical note, one would need to actually bound the player's reasoning (or rationality) to be polynomial time. To do so, one would need to consider such reasoning as part of the player's strategy, which is known to lead to cumbersome definitions.  

Another strawman approach, which takes care of the above issues, would be to define pseudo-equilibrium as a strategy whose utility profile is (computationally) indistinguishable from a real equilibrium. This, however, is also quite problematic: first, in classical equilibrium definitions, one is interested in expected utilities. However, two random variables that are computationally indistinguishable may not have the same expectation. For example, consider a game where the utility random variables induced by two strategies are $U_1 = \{0\text{ w.p. } 1-\epsilon \;\&\; 1 \text{ w.p. }\epsilon\}$  and $U_2 = \{0\text{ w.p. } 1-\epsilon \;\&\; -1 \text{ w.p. }\epsilon\}$. These two random variables are computationally indistinguishable for negligible $\epsilon$, but the expected value of $U_1$ is positive, whereas that of $U_2$ is negative. Furthermore, two random variables with the same exception but completely different support (e.g., $U_1 = \{1\text{ w.p. } 1/2 \;\&\; -1 \text{ w.p. }1/2\}$  and $U_2 = \{10\text{ w.p. } 1/2 \;\&\; -10 \text{ w.p. }1/2\}$). These are equivalent for a game theorist, but are clearly distinguishable.

Finally, defining pseudo-equilibrium, e.g., pseudo-Nash, as being indistinguishable from Nash, would restrict this notion to games where their idealized-crypto variant can be shown to have a Nash equilibrium. This approach, which is implicit in~\cite{halpern2019sequential}, substantially reduces the applicability of pseudo-equilibrium and makes the notion applicable only with respect to such games, which again makes it less accessible to non-crypto-savvy domain experts.    

In this work, we take a different approach, which makes the fact that pseudo-Nash is indistinguishable from Nash (when such Nash exists) a consequence of the definition, rather than the definition itself. First, we observe that, intuitively, a necessary condition for ``strange'' strategies not to emerge as profitable deviations in any game (e.g., the ``real game'') is that events that occur with negligible probability (e.g., ``break crypto'') should be ignored. In other words, the condition ``deviations should not improve expected utility'' in the definition of Nash should be replaced; we should instead focus on different (statistical) properties of the utilities of strategies. Specifically, we build on the intuition that a player should be indifferent between two computationally indistinguishable utilities, since no PPT observer can reliably tell them apart.
Hence, a deviation whose benefit arises only on negligible-probability events should not affect strategic preferences. Our definition of a pseudo-equilibrium formalizes this intuition.

We begin by attempting to replace ``larger expectation'' in the definition of Nash with the concept of indistinguishability. Recall that two random variables $X$ and $Y$ are $\epsilon$-indistinguishable, if, for any distinguisher $\D$, $| \Pr[\D(X) = 1] - \Pr[\D(Y) = 1] | \leq \epsilon$. A first observation is that in the context of a game, where the random variables $X$ and $Y$ naturally correspond to utilities of strategies (e.g. a strategy and a deviation), this definition overlooks a crucial point: Nash equilibrium relies on comparing (the utility of) different strategies to see if one is better than (i.e., dominates) the other. (In)distinguishability, however, does not allow us to take such a decision: if a strategy $s$ is a lot better than a strategy $s'$, we should be able to distinguish them, but the existence of such a distinguisher does not help us decide which strategy we should render dominant.


A straightforward remedy is to drop the absolute value from the previous definition. Adjusting to our game-theoretic goals, perhaps we should define a strategy $s$, whose utility is distributed according to a random variable $X$, to be ``$\epsilon$-preferred'' to a strategy $s'$, whose utility is distributed according to a random variable $Y$, if, for any distinguisher $\D$, $\Pr[\D(Y) = 1] - \Pr[\D(X) = 1] \leq \epsilon$. Hence, when $X$ is a lot ``better'' than $Y$, we allow for distinguishers that pick $X$ much more often. Yet, this definition also has an issue, albeit a more nuanced one.

Concretely, requiring the above  {\em for all} distinguishers implies requiring it for the best one, i.e., the one that maximizes $\Pr[\D(Y) = 1] - \Pr[\D(X) = 1]$. Intuitively, $\D$ can be seen as implementing the best statistical test.  However, the outcome of this best distinguisher might have a weak (or no) dependence on the expectations of $X$ and $Y$.  This makes it challenging (and in some cases, impossible) to connect a notion based on this distinguisher to standard game theory concepts that only look at the expectation of (the outcome) of strategies. Thus, the corresponding equilibrium notion would not necessarily be implied by Nash equilibrium in standard games, diverging from our goal to find a version of Nash (satisfying the same intuitive stability properties) that is cryptography-friendly.

Nonetheless, the above intuition is on the right track. The key is to identify a statistical test (i.e., distinguisher) whose decision approximately follows the relation of the expectations, i.e., the test should favor strategy $X$ when the expectation of $X$ is greater than that of $Y$. This brings us to our (informal) definition of a pseudo-equilibrium, where instead of considering arbitrary distinguishers, we focus on a specific distinguisher that compares the sample means of the two random variables. As we will show, this distinguisher tells us all we need to decide which strategy should be preferred. The following (informal) definition uses the following standard notation from game theory: for any strategy profile $s = (s_1, s_2, \dots, s_n)$ (i.e., where each player $i$ plays $s_i$) we denote by $(s; s_{-i})$ the strategy profile in which player~$i$ plays $s$, and every player~$j\neq i$ plays $s_j$.

\begin{definition}[Pseudo-Nash (informal)]\label{dfn: informal pseudo}
Let $s = (s_1, s_2, \dots, s_n)$ be a strategy profile, and let $U_i = U_i(s; s_{-i})$ be the random variable that indicates the utility of player $i$ under strategy $s$, when all other players' strategies are given by $s_{-i}$. Then, $s$ is an $(m,\epsilon)$-pseudo-Nash equilibrium, if for all players $i$, and deviations $s'$ with utility random variable $U'_i =U'_i(s';s_{-i})$,
\[
\Pr\left[ \frac{1}{m} \sum_{j=1}^m U'^{(j)}_i  \geq \frac{1}{m}\sum_{j=1}^m U^{(j)}_i \right] - \Pr\left[ \frac{1}{m}\sum_{j=1}^m U^{(j)}_i \geq \frac{1}{m} \sum_{j=1}^m U'^{(j)}_i \right] \leq \epsilon,
\]
where $U^{(j)}_i$ and $U'^{(j)}_i$ are i.i.d. samples from $U_i, U'_i$ respectively.
\end{definition}

The connection between the above definition and our discussion of distinguishers/tests is evident: It can be thought of as a distinguisher who observes the utility from $m$ independent executions of the game under strategy profiles $(s;s_{-i})$ and $(s';s_{-i})$, and compares their empirical means to make its decision of dominance. This also hints towards the fact (which we prove) that the definition of pseudo-Nash equilibrium is robust against negligible-probability events. 
Indeed, rather than reasoning about the expected utility of each strategy (as in Nash equilibrium), players reason based on the empirical mean of $m$ samples; as long as $m$ is polynomial, negligible events are unlikely to be observed. This allows for the invariance of negligible events in strategic behavior. Note that the notions of negligible and polynomial are only meaningful in parameterized games. The formal definition of pseudo-Nash (Definition~\ref{def: computationalPseudo}) captures how the notion seamlessly applies to such games (as well as all classical normal form games from the game theory literature). In a nutshell, the definition of pseudo-Nash (Definition~\ref{def: computationalPseudo}) is the first notion that achieves all the following desired properties, simultaneously:




\begin{enumerate}[leftmargin=*]
\item \emph{Simplicity}: Our notion does not require any knowledge from cryptography, and can be defined for any game. This is in contrast to state-of-the-art proposals for crypto-friendly equilibria, e.g.,~\cite{halpern2016computational} which require special and complicated classes of games. See~\Cref{subsec: related} and Appendix~\ref{appendix:computationalNash} for more details.
\item \emph{Compatibility with Game Theory}: In standard (non-parameterized) games, pseudo-Nash equilibrium is well-defined and is equivalent to Nash equilibrium.
\item \emph{Indistinguishability/Compatibility with cryptography}: The replacement of random variables with computationally indistinguishable ones does not affect our notion of dominance. 
\item \emph{Unrestricted utilities}: Unlike many existing attempts to devise a notion of equilibrium for computational games which require restrictions on the utility, (e.g.,~\cite{halpern2016computational} assumes utilities of low probability events which are polynomial in a parameter of the game) a pseudo-equilibrium is not sensitive to the magnitude of the utility of low probability events. 
\end{enumerate}

In the technical part of this paper, we prove the above properties and present examples from the literature along with simple games, which showcase our notion and highlight its benefits over prior proposals. Most importantly, for the natural class of normal-form games that corresponds to games using cryptography---{\em computational games}---we can prove a general ideal-crypto-replacement theorem: 

\begin{theorem}[Informal]
Consider any (ideal) computation game $\mathcal{G^I}$ using ideal cryptographic primitives. Consider the (real) computation game $\mathcal{G^R}$  where the ideal primitive is replaced by its cryptographic implementation---secure under a state-of-the-world assumption.  If a strategy $s^\mathcal{I}$ is a Nash or pseudo-Nash equilibrium in $\mathcal{G^I}$, then the corresponding strategy $s^\mathcal{R}$---where ideal cryptography is replaced with real cryptography---is a pseudo-Nash equilibrium in  $\mathcal{G^R}$.
\end{theorem}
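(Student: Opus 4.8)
The plan is to prove the theorem via a standard cryptographic hybrid argument that bridges the ideal game $\mathcal{G}^\mathcal{I}$ and the real game $\mathcal{G}^\mathcal{R}$, using the four properties of pseudo-Nash equilibria promised in the introduction. First I would unpack the definition of a computational game and fix notation: each such game is parameterized by the security parameter $\secparam$, and a strategy profile induces, for each player $i$, a utility random variable $U_i(s; s_{-i})$. The claimed correspondence ``$s^\mathcal{R}$ is $s^\mathcal{I}$ with ideal cryptography replaced by its implementation'' must be made precise as a syntactic transformation of strategies, so that any efficient (polynomial in $\secparam$) deviation $s'$ in the real game can be ``pulled back'' to an efficient deviation in the ideal game — this is where the state-of-the-world (rather than behavioral) nature of the assumption is essential, since the environment/other players behave identically in both worlds.

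The core argument proceeds in two steps. Step one: show that if $s^\mathcal{I}$ is a Nash (or pseudo-Nash) equilibrium in $\mathcal{G}^\mathcal{I}$, then $s^\mathcal{I}$ remains a pseudo-Nash equilibrium in $\mathcal{G}^\mathcal{I}$ — for genuine Nash this is the ``Compatibility with Game Theory'' property (property 2), and for pseudo-Nash it is immediate. Step two, the crux: transfer pseudo-Nash from $\mathcal{G}^\mathcal{I}$ to $\mathcal{G}^\mathcal{R}$. Here I would argue that for every player $i$ and every efficient real-world deviation $s'^\mathcal{R}$, the utility random variable $U_i^\mathcal{R}(s'^\mathcal{R}; s^\mathcal{R}_{-i})$ is computationally indistinguishable from $U_i^\mathcal{I}(s'^\mathcal{I}; s^\mathcal{I}_{-i})$ for the corresponding ideal deviation $s'^\mathcal{I}$ (and likewise $U_i^\mathcal{R}(s^\mathcal{R}; \cdot)$ is indistinguishable from $U_i^\mathcal{I}(s^\mathcal{I}; \cdot)$); this indistinguishability is exactly the security guarantee of the cryptographic implementation under the state-of-the-world assumption, packaged as a reduction — any distinguisher on the utility distributions yields an attacker on the primitive. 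Then I invoke the ``Indistinguishability'' property (property 3): replacing the random variables appearing in the pseudo-Nash condition by computationally indistinguishable ones does not affect dominance, so the mean-comparison distinguisher's bias stays negligible, only degrading by a negligible additive term absorbed into $\epsilon$. Chaining the two hybrids (ideal deviation $\to$ real deviation, and ideal equilibrium strategy $\to$ real equilibrium strategy) gives the conclusion. The ``Unrestricted utilities'' property (property 4) is what lets this go through without assuming anything about the magnitude of $R$ or other low-probability payoffs — the $\pm\infty$-style rewards of Examples~\ref{example: Nash fails} and~\ref{example: auction where ideal vs real is weird} cause no trouble because the pseudo-Nash distinguisher only looks at which sample mean is larger.

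The main obstacle I anticipate is making the reduction in step two fully rigorous, i.e., formalizing the map between real and ideal deviations so that the indistinguishability of utility random variables genuinely reduces to breaking the cryptographic primitive. The subtlety is that a real-world deviator need not ``respect'' the structure of the ideal primitive at all — it can run arbitrary efficient computation on the transcript — so one must show that any such deviation's effect on $i$'s utility is still simulable in the ideal world up to negligible error; this presumably requires the implementation to satisfy a composable (e.g., UC-style) security notion, and the proof will need to invoke that the number of samples $m$ and the distinguisher's running time are both polynomial in $\secparam$ so that the negligible security error is not amplified beyond negligibility when taking $m$ i.i.d. samples and comparing sample means. A secondary technical point is handling the quantifier order: pseudo-Nash quantifies over all deviations $s'$, but the reduction must hold uniformly, so I would either restrict attention to efficient deviations (which suffices, since inefficient ones are precisely what pseudo-Nash is designed to tolerate) or argue that inefficient deviations are handled directly by the negligible-probability-event intuition. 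I expect the formal Definition~\ref{def:computational-mean-dominance} and the precise specification of computational games to carry most of this weight, so that the theorem's proof is essentially ``apply property 2, then property 3 via a reduction to the state-of-the-world assumption.''
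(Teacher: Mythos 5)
Your proposal matches the paper's proof in substance: the paper first uses the Nash/pseudo-Nash equivalence for the (ideal) game, then, in the formal translation theorem, maps any real-world deviation $\hat{s}_i$ to the ideal deviation $Sim(\hat{s}_i)$ supplied by the UC-style simulator so that the corresponding utility ensembles are computationally indistinguishable, and concludes via the two indistinguishability-preserves-dominance propositions --- exactly your ``apply property 2, then property 3 via a reduction to the state-of-the-world assumption,'' including your observation that composable security is what lets arbitrary efficient deviations be pulled back to the ideal world. The only cosmetic difference is that the paper argues by contradiction (an observably profitable deviation in the real game would yield one in the ideal game) rather than in the forward direction.
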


We note that, in addition to having all the above desirable properties, pseudo-Nash also has a rather intuitive interpretation, which reinforces our claim that it is the natural notion for the games we aim to capture, and beyond. Recall that we are interested in equilibrium for players whose reasoning is computationally limited. The definition of pseudo-Nash captures the perception---from the perspective of a player who can only think of at most $m$ iterations of the game---of whether deviating from $s_i$ dominates $s_i$ (within some error $\epsilon$.) We thus conjecture that our notion might have connections to prospect theory~\cite{eec14168-5714-3ca8-b073-d038266f2734,RePEc:kap:jrisku:v:5:y:1992:i:4:p:297-323}; investigating such connections is, in our opinion, an outstanding research direction.

\subsection{Related Work}\label{subsec: related}

Several works have attempted to use game theory within cryptographic protocols (typically MPC or its special cases, e.g., \cite{dodis2000cryptographic,fuchsbauer2010efficient,gordon2006rational,halpern2004rational,izmalkov2005rational,kol2008games,lepinksi2005collusion}).  The idea of such {\em game-theoretic cryptography} models is to incorporate incentives into the
parties' misbehavior by treating them as rational agents acting
according to some (partially) known preferences. A cryptographic protocol
execution induces a sequential game among these agents/parties and is considered ``secure'' if they have no incentive to deviate, i.e., if it induces a game-theoretic equilibrium. Notably, several of these works have highlighted mismatches between the tools used in the two disciplines. For example, \cite{kol2008games} shows that under standard equilibrium definitions, cryptography cannot be used in certain games in which privacy is important for the utility, e.g., games where selfish parties want to learn a common function on their inputs but prefer to be the only one who learns it. Most interestingly, a frequent reason for impossibilities has been rooted in the idea that it might be a best response for the adversary to try to break cryptography. Similarly, several works have used cryptographic primitives to improve game theory and mechanism design solutions (e.g.,~\cite{bradford2008protocol,essaidi2022credible,ferreira2020credible,nurmi1993cryptographic,tyagi2023riggs}), e.g., by relying on commitments to force consistency between different stages of the game.

Most relevant to our goals, several works have attempted to resolve mismatches between cryptography and game theory~\cite{DBLP:conf/focs/GarayKMTZ13,gradwohl2013sequential,Jet:HalpernP15,halpern2016sequential,DBLP:conf/wine/HalpernPS14,halpern2016computational}.
The standard approach in this area has been to either adapt the utility to account for 
computational complexity and asymptotics, e.g., by allocating a cost on computational steps~\cite{Jet:HalpernP15}, or  
develop new definitions of stability grounded in cryptography. Such definitions typically change the nature of the game, e.g., by transitioning from a simple, single game to a parameterized sequence of (usually more complex) games, in order to accommodate asymptotic reasoning, and define an equilibrium notion on top of this sequence. As such, they have typically been much more cumbersome compared to their (arguably simple and intuitive) game-theoretic analogues, such as Nash or subgame-perfect equilibria. We conjecture that this might be a reason why these new definitions, despite their evident novelty---and even support for replacing ideal cryptography with its cryptographic  implementation~\cite{DBLP:conf/focs/GarayKMTZ13,Jet:HalpernP15,DBLP:conf/wine/HalpernPS14}---have not been widely adopted by economists or game theorists, who are not necessarily familiar with theoretical computer science and cryptography idiosyncrasies. In contrast, our goal is to give such researchers a convenient tool to incorporate cryptography in their game analysis, without worrying about the aforementioned artifacts.

In order to understand the novelty and potential of the notion of pseudo-equilibrium, it is worth contrasting it with the notion of {\em computational Nash} in extensive form games~\cite{halpern2016computational}, which, to the best of our knowledge, is the state-of-the-art among equilibrium definitions that allow using cryptography within game-theory arguments.\footnote{We focus here on~\cite{halpern2016computational} but our reasoning and comparison applies to most proposed notions of equilibrium for games using cryptography, as long as they do not modify the underlying utility, e.g., by assuming that computation costs~\cite{Jet:HalpernP15}.} For readers unfamiliar with this notion, we have included its definition in Appendix~\ref{appendix:computationalNash}. In a nutshell, the notion of computational Nash~\cite{halpern2016computational} defines a new class of (sequences of) games, termed {\em Computable Uniform Sequence of Games}, which embodies the idea that cryptographic protocols can be viewed as extensive-form games, where adversarial faults correspond to deviations. In such a game definition, one needs to carefully define properties of the history, utility, and players' moves to ensure they are polynomially computable---in fact, probabilistic polynomial-time, in short PPT (see Definition~\ref{def: hp computational game}). As discussed below, this makes the notion of computational Nash applicable only to games that fit such a description.  
The above complexity makes analyzing computational Nash (see Definition~\ref{def: computational Nash})  directly a particularly challenging task. In fact, this is acknowledged in~\cite{halpern2016computational}, which offers a blueprint for finding computational Nash, which, as will become apparent below, can only be used when we are analyzing a game that is derived by replacing idealized cryptography in a sequential game with real cryptography. The core idea of this methodology is to ``reverse-engineer'' the cryptographic simulation paradigm to establish a mapping of the game using cryptography to the idealized-cryptography game. The details of this mapping are not relevant to our treatment, and the actual definition and its connection to the simulation paradigm have already been established in~\cite{halpern2016computational}. However, for readers who might be questioning whether the notion of pseudo-Nash is an important step in the direction of devising tools for non-crypto-experts to reason about games involving cryptography, we find it useful to contrast our notion to the above definition (Definition~\ref{def: ideal game represented by the computational game}).

In addition to the evident difference in complexity of the definitions, there are several qualitative and quantitative arguments in favor of pseudo-NE compared to CNE: First, as discussed above, CNE can only be defined for a very restricted class of games;  and even for this class, to make the analysis tractable one would typically require that the game itself is derived from an idealized-cryptography game, and even then, the analysis would need some degree of familiarity with advanced cryptographic reasoning (e.g., the simulation paradigm). In contrast, pseudo-NE is defined for any game. For the special class of games that use cryptography, we equip it with an automatic translation theorem that ensures that one can replace idealized cryptography with real-world cryptography.  Second, the existence of a Nash equilibrium does not necessarily imply the existence of CNE in contrast to pseudo-NE (see~\Cref{sec: guessing game,section:HTGAME}). Third, CNE is sensitive to the magnitude of the utilities of different events. For example, negligible events with huge (exponential) utility make the reasoning from~\cite{halpern2016computational} inapplicable and, as we demonstrate, eliminate the possibility of CNE. This is in contrast to pseudo-NE, which, by definition, renders negligible events irrelevant. As we show in~\Cref{section:HTGAME}, this distinction makes a big difference in the analysis of classical rational-cryptography games, e.g., rational secret sharing~\cite{halpern2004rational}. We note that this third distinction may seem somewhat esoteric, as in traditional game theory, it is reasonable to assume that utilities are constant. This is, however, not the case: first, when dealing with parameterized games that use asymptotic cryptography, it is reasonable to assume that the utilities depend on the underlying parameter. In this case, it is unclear why one should limit this dependency to be polynomial. In fact, there are natural games (even without using cryptography) in which utility would be exponential in a specific parameter (e.g., the size of the player's strategies): Imagine a version of the guessing game (\Cref{example: Nash fails}) where the leader (Player 1) chooses an $\kappa$-bit string]; then the follower (Player 2) attempts to guess the leader's choice; every bit he guesses correctly doubles his reward. It is straightforward to show that the unique Nash equilibrium has both players choose uniformly at random. However, one can show that the above strategy cannot be proven to be CNE (and the aforementioned blueprint to prove CNE from~\cite{halpern2016computational} does not help here, since there is no ideal cryptography). In contrast, we can show that the Nash strategy is also pseudo-NE. In fact, as discussed in Section~\ref{section:defPNE} we view pseudo-NE as the natural extension of Nash to such parameterized games; furthermore, our ideal-to-real theorem can be used to directly prove that replacing in the pseudo-NE of this game random coins with pseudo-random coins yields another pseudo-NE.

\section{Preliminaries}\label{sec:prelims}
Our goal is to develop tools that can be used by game-theorists and cryptographers alike,  with little to no exposure to the other field's reasoning and advanced tools. Our results offer a seamless translation from games with idealized cryptography to games with real cryptography. Given these goals, in Sections~\ref{sec:prelims_crypto}  (resp. Section~\ref{sec:prelims_gt},) below, we outline the basic definitions and tools from the cryptographic (resp. game theory) literature that a game theorist (resp. cryptographer) needs to understand the translation. Such an exposition is useful for the paper to serve its goal, but a CRYPTO reviewer is welcome to skip Sections~\ref{sec:prelims_crypto}.

\subsection{Cryptography notation and definitions}\label{sec:prelims_crypto}

State-of-the-world assumptions that are typical in cryptography rely on asymmetries that (are assumed to) exist in certain problems. E.g., one-way functions are, intuitively, easy to compute and hard to invert, pseudo-random generators are efficiently computable length-expanding deterministic algorithms but their output on random inputs is hard to distinguish from samples from a random distribution, etc. However, ``hardness'' of a problem does not make it impossible to solve; in fact, most cryptographic hardness assumptions only apply to hardness on average (i.e., on random inputs). As such, a cryptographic statement would typically embrace the idea that the adversary might with ``tiny'' probability violate its claimed security (by violating the underlying assumption).  

Capturing ``tiny'' in a way that allows for cryptographic proofs is tricky. For example, in an encryption scheme, we would want that the adversary has a tiny probability of recovering information about the plaintext. But if the key is small, say 5 bits, then with probability $2^{-5}=1/32$ the adversary can guess it and recover the whole plaintext. As such, the notion of tiny is defined to correspond to ``eventually tiny,'' i.e., for cryptographic keys whose size is larger than some value $\secparam_0$, usually referred to as the {\em security parameter}. This gives rise to the following notion of {\em negligibility} aimed at capturing the above intuition of eventually tiny:

\begin{definition}\label{def: negl}
    A function $\delta:\bN \rightarrow \bR$ is negligible if for every constant $c$  there exists a $\secparam_c\in\mathbb{N}$ such that for all $\secparam
    \ge \secparam_c$ it holds $\delta(\secparam)<\frac{1}{\secparam^c}$.
\end{definition}

Most cryptographic security definitions compare a cryptographic construction, that typically involves a key, to an ideal primitive, that the construction is supposed to securely realize. Security requires that the random variable that corresponds to the outcome of an execution of the cryptographic construction cannot be distinguished from the random variable that corresponds to the outcome of an invocation of the ideal primitive (we defer details of such a definition to Section~\ref{sec: pseudo-for-crypto-games}). As such, the notion of (computational) indistinguishability is deeply rooted in the cryptographic method (and as we shall see, will be the key in defining a crypto-friendly notion of stability in games that use cryptography). We shortly define computational indistinguishability. 

For a random variable $\rX$ and a randomized algorithm $\D$ with a binary output, that can draw a sample from the probability distribution $P_{\rX}$ of $\rX$, we denote by $\D^{\rX}(x)$ the random variable which corresponds to the outcome of $\D$ on input an $x$ sampled from $\rX$. We also denote by $1^\secparam$ the unary representation of $\secparam$, i.e., the string consisting of $\secparam$ ones.

\begin{definition}[Computational Indistinguishability]\label{def: compInd}
    Let $\rX=\{\rX_\secparam\}_{\secparam\in\mathbb{N}},$ $\rY=\{\rY_\secparam\}_{\secparam\in\mathbb{N}}$ be a pair of ensembles of random variables.  $\rX$ is computationally indistinguishable from $\rY$, denoted by $\rX \cong \rY$, if there exists a negligible function $\negl{\cdot}:\bN \rightarrow [0,1]$
    such that for every probabilistic polynomial time (PPT) algorithm $\D$ with a binary output, called the {\em distinguisher}, the following holds: 
    \begin{align*}
\abs{\pr{\D(\rX_\secparam;1^\secparam) = 1} - \pr{\D(\rY_\secparam; 1^\secparam) = 1}} \leq \negl{\secparam}.
    \end{align*}
\end{definition}

We will also use the standard closure of computational indistinguishability under taking polynomially many independent samples~\cite{GoldreichBook1}. Namely, if $X \cong Y$, then for every constant $c\in\mathbb{N}$, the joint distributions of $\secparam^c$ i.i.d.\ samples from $X_\secparam$ and from $Y_\secparam$ are computationally indistinguishable (by a hybrid argument).

The quantity $\abs{\pr{\D(\rX_\secparam;1^\secparam) = 1} - \pr{\D(\rY_\secparam; 1^\secparam) = 1}}$ is often referred to as the {\em distinguishing advantage} of $\D$ in distinguishing 
$\rX_\secparam$ from $\rY_\secparam$, as it captures, intuitively, how similar $\rX_\secparam$ and $\rY_\secparam$ look to $\D$. In fact, it is not hard to see that if we do not restrict $\D$ to be PPT, i.e., if we take the distinguishing advantage of the best (even inefficient) distinguisher, then this will be equal to the standard {\em statistical distance} between the  $\rX_\secparam$ and $\rY_\secparam$. For the curious reader, we note in passing that providing $\D$ the input $1^\secparam$ is to ensure that it runs in time polynomial in $\secparam$, which is needed to avoid trivial counter-examples of rendering cryptographic constructions secure because the distinguisher does not have sufficient time to parse $\rX_\secparam$ (see~\cite{GoldreichBook1} for a detailed discussion).

\subsection{Game theory notations and definitions}\label{sec:prelims_gt}

We define the notion of pseudo-equilibrium on standard normal form games. We denote by $\game([n],\cN, U)$ a game with $n$ players and with action set $\cN=\cN_1\times\ldots\times\cN_n$. For a mixed strategy profile $s\in \cN$ we are interested in the utility random variable $U_i(s_i,s_{-i})$ of player $i\in[n]$, where $s_i$ denotes the strategy of player $i$ and $s_{-i}$ the strategies of all other players. In such games, the Nash equilibrium is defined as usual:

\begin{definition}[Nash Equilibrium]
A strategy profile $s=(s_1,\ldots,s_n)$ is a Nash equilibrium (NE) of a game $\game$ if, for all $i\in[n]$ and for all deviating strategies $\hs_i$ it holds $\E{}{U_i(s)}\ge\E{}{U_i(\hs_i,s_{-i})}.$ 
\end{definition}

A common relaxation of NE to an approximate notion, called $\epsilon$-Nash, allows for strategies that are suboptimal at most $\epsilon$ (additive) factor. 

\begin{definition}[$\epsilon$-Nash Equilibrium]
A strategy profile $s=(s_1,\ldots,s_n)$ is an $\epsilon$-Nash equilibrium ($\epsilon$-NE) of a game $\game$ for some $\epsilon\geq 0$ if, for all $i\in[n]$ and for all deviating strategies $\hs_i$ it holds $\E{}{U_i(s)}\ge\E{}{U_i(\hs_i,s_{-i})}-\epsilon.$ 
\end{definition}

Looking ahead in the technical part of our paper, we remark that our notion of pseudo-equilibrium will allow us to seamlessly extend the above classical notions also to parameterized games---like the guessing game (\Cref{example: Nash fails}), modified so that player one has a variable-size input (or input that depends on a cryptographic security parameter) and the utilities depend on this size. 

\section{Empirical Dominance and Pseudo-Nash Equilibrium}\label{section:defPNE}

We introduce a new notion of ordering random variables with respect to their empirical means. We say that a random variable  $\rX$ $(m,\delta)$-empirically means dominates $\rY$ if the empirical mean of $\rX$ tends to be higher than $\rY$. Formally,

\begin{definition}[Empirical Mean Dominance]\label{def:emd}
    Let $\rX, \rY$ be a pair of random variables. Then $\rX$ {\bf $(\samples, \delta)$-empirical mean dominates}(EM-dominates) $\rY$, denoted by $\rX \mEmpDom \rY$, for $\samples \in \mathbb{N}^+$ and $\delta \in [0,1]$:
    \begin{align*}
        \meanDominanceEQ{\rY}{\rX}{m}-\meanDominanceEQ{\rX}{\rY}{m} \le \delta\\
    \end{align*}
    where $\rX^{(j)},\rY^{(j)}$ are i.i.d. samples from $\rX,\rY$ respectively.
\end{definition}

As the name suggests, our goal is to use empirical dominance to compare (mixed) strategies, $s$ and $\hat{s}$, in normal-form games. In this context, the random variables $X$ and $Y$ will indicate the utilities of $s$ and $\hat{s}$, respectively in the game.\footnote{Note that we mean the actual utility that players receive when they play these strategies rather than expected utility.}
In order to account for computational considerations and enable cryptographic reasoning, we make the following restriction on empirical dominance. 

\begin{definition}[Computational Mean Dominance]\label{def:computational-mean-dominance}
   Let $\rX=\{\rX_\secparam\}_{\secparam\in\mathbb{N}},$ $\rY=\{\rY_\secparam\}_{\secparam\in\mathbb{N}}$ be a pair of random variable ensembles.  Then $\rX$ {\bf computational} mean dominates $\rY$, denoted by $\rX \statDom \rY$,
   if  for all constants $c \geq 1$, there exist constants $\ch>c,\;\secparam_0>0$ such that for all $\secparam\ge\secparam_0$: 
    \begin{align*}
       \meanDominanceEQ{\rY_\secparam}{\rX_\secparam}{\secparam^{\ch}}-\meanDominanceEQ{\rX_\secparam}{\rY_\secparam}{\secparam^{\ch}} < \frac{1}{\secparam^{c}}\\
    \end{align*}
    where $\rX_{\secparam}^{(j)},\rY_{\secparam}^{(j)}$ are i.i.d. samples from $\rX_\secparam,\rY_\secparam$ respectively.
\end{definition}

As discussed in the introduction, the intuitive interpretation of the above $\rX \statDom \rY$ dominance definition is as follows. Consider a
distinguisher who uses the empirical means of $X$ and $Y$ as a statistical test to decide whether or not the expected value of $X$ is above the expected value of $Y$. As long as this distinguisher takes at most polynomially many---i.e., $\secparam^{\ch}$ for some constant $\ch$---samples of $X$ and $Y$, with all but negligible ---eventually diminishing faster than $1/k^c$ for any $c$---probability, he will decide that $X$ dominates $Y$.

In the context of analyzing games, the parameter $\secparam$ in the above ensembles corresponds to the size of the game's description, which is a natural input to any distinguisher performing the above statistical test. As such the above notion applies to comparing two strategies $s$ and $\hat{s}$ in standard (fixed) games similarly to Definition~\ref{def:emd}, by setting $X_\secparam=X_1$ and $Y_\secparam=Y_1$, for all $k\in\mathbb{N}$, where $X_1$ and $Y_1$ are the utilities of the two strategy profiles, $s$ and $\hat{s}$, respectively. In fact, somewhat surprisingly,  as we prove in Lemmas~\ref{lemma: E[X]>E[Y]impliesComDominance} and~\ref{lemma: E[X]=E[Y]impliesComDominance},  restricting the notion of empirical dominance as above yields a definition equivalent to standard dominance of expectations, something which is not true otherwise. Looking ahead this will render our pseudo-Nash notion equivalent to Nash in such games, which in our opinion is the ultimate "sanity check" for any such notion.

Most interestingly for the goals of our paper,  by incorporating ensembles into the definition, we are now able to reason about parameterized games---think of the guessing game from our introduction where the size of the input string is a parameter and affects the utility (e.g., the more bits of player 1's input that player 2 guesses the higher his utility). We believe that pseudo-Nash is the natural adaptation of classical Nash to such games. Most importantly,  it means that we can now seamlessly capture games using cryptography by simply making the security parameter (in unary notation, $1^\secparam$) part of the game description.

Given the above notion of dominance, we can now define our notion of pseudo-Nash equilibrium. Similar to Nash equilibrium, a strategy profile is a pseudo-equilibrium if for all players, the random variable (ensembles) that corresponds to their utility dominates all other utility random variables that come from a unilateral deviation. Formally, we define pseudo-equilibrium.

\begin{definition}[Pseudo-equilibrium]
    \label{def: computationalPseudo}
    A computational pseudo-equilibrium of a (possibly parameterized) game $\game([n], \cN,U)$ 
    is a strategy profile $s = (s_1, \cdots, s_n)\in\cN$ that for all $i\in [n]$ and for all $\hs_i\in\cN_i$, the utility random variable (ensemble) of player $i$ according to $s_i$, $ U_i(s_i,s_{-i}) = \ensembleU{U_i}{s} $, computationally  EM dominates the utility variable (ensemble) of player $i$ according to $\hs_i$, $U_i(\hs_i,s_{-i}) = \ensembleU{U_i}{(\hs_i,s_{-i)}}$.  That is, $s$ is a computational pseudo-equilibrium if $\;\forall i\in [n],\forall\hs_i\in\cN_i: U_i(s_i,s_{-i}) \compEmpDom U_i(\hs_i,s_{-i})$.  
\end{definition}

\subsection{Equivalence to Nash for Standard Games}

In this section, we show that for standard fixed description (i.e., non-parameterized) games, our pseudo-equilibrium notion is equivalent to Nash equilibrium.

\begin{theorem}
\label{thrm: Nash equiv Pseudo}
For a normal-form game $\game([n],\cN,U)$,  with bounded utilities, a strategy profile $s=(s_1,\ldots,s_n)$ is a Nash equilibrium if, and only if, it is a pseudo-Nash equilibrium (\Cref{def: computationalPseudo}).
\end{theorem}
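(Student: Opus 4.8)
The plan is to prove both directions by leveraging the single-random-variable structure: in a fixed (non-parameterized) game, the utility ensembles are constant, $U_i(s_i, s_{-i}) = \{U_i(s_i,s_{-i})\}_{\secparam \in \mathbb{N}}$ with the same distribution at every index, so computational mean dominance reduces to the statement that for all $c \geq 1$ there is $\hat c > c$ and $\secparam_0$ with
\[
\meanDominanceEQ{Y}{X}{\secparam^{\hat c}} - \meanDominanceEQ{X}{Y}{\secparam^{\hat c}} < \frac{1}{\secparam^c}
\]
for $\secparam \geq \secparam_0$, where $X = U_i(s_i,s_{-i})$ and $Y = U_i(\hat s_i, s_{-i})$. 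The key observation is that the left-hand side is a function of the sample size $m$ alone (call it $g(m)$), so the whole condition is really: $g(m) \to 0$ as $m \to \infty$, since $\secparam^{\hat c}$ can be made to grow arbitrarily fast. Hence $X \compEmpDom Y$ holds if and only if $g(m) \to 0$, and the theorem will follow once I show $g(m) \to 0 \iff \mathbb{E}[X] \geq \mathbb{E}[Y]$. This is exactly the content of Lemmas~\ref{lemma: E[X]>E[Y]impliesComDominance} and~\ref{lemma: E[X]=E[Y]impliesComDominance} (cited in the excerpt), together with their converse.

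Concretely, I would structure the argument as follows. First, observe that Definition~\ref{def: computationalPseudo} for a fixed game $\game([n],\cN,U)$ says: $s$ is a pseudo-Nash equilibrium iff for every $i$ and every $\hat s_i$, $U_i(s_i,s_{-i}) \compEmpDom U_i(\hat s_i, s_{-i})$. So it suffices to prove the pointwise claim: for any two bounded random variables $X, Y$ (utilities in a finite game are bounded), $X \compEmpDom Y$ if and only if $\mathbb{E}[X] \geq \mathbb{E}[Y]$. The ``if'' direction splits into the strict case $\mathbb{E}[X] > \mathbb{E}[Y]$ and the equality case $\mathbb{E}[X] = \mathbb{E}[Y]$, handled by Lemmas~\ref{lemma: E[X]>E[Y]impliesComDominance} and~\ref{lemma: E[X]=E[Y]impliesComDominance} respectively — in both cases one picks $\hat c$ large enough that $\secparam^{\hat c}$ exceeds whatever sample-count threshold the concentration bound requires. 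For the ``only if'' direction, I would argue the contrapositive: if $\mathbb{E}[X] < \mathbb{E}[Y]$, then by the (weak) law of large numbers the empirical mean of $m$ i.i.d. copies of $Y$ exceeds that of $X$ with probability tending to $1$, so $\meanDominanceEQ{Y}{X}{m} \to 1$ while $\meanDominanceEQ{X}{Y}{m} \to 0$, giving $g(m) \to 1$; since this limit is bounded away from $0$, for $c = 1$ there is no choice of $\hat c$ making $g(\secparam^{\hat c}) < 1/\secparam$ eventually, so $X \not\compEmpDom Y$. Assembling: $s$ is pseudo-Nash $\iff$ for all $i, \hat s_i$, $\mathbb{E}[U_i(s_i,s_{-i})] \geq \mathbb{E}[U_i(\hat s_i, s_{-i})]$ $\iff$ $s$ is Nash.

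The main obstacle — the only genuinely delicate point — is the equality case $\mathbb{E}[X] = \mathbb{E}[Y]$ in the ``if'' direction, i.e., showing $g(m) \to 0$ there. When the means are equal, the difference of empirical means $\frac{1}{m}\sum (X^{(j)} - Y^{(j)})$ is a mean-zero average whose distribution becomes (after $\sqrt{m}$-scaling) asymptotically symmetric around $0$ by the CLT, so $\Pr[\text{avg} > 0]$ and $\Pr[\text{avg} < 0]$ both approach roughly $1/2$ and their difference approaches $0$ — but one must handle the atom at exactly $0$ (e.g. when $X$ and $Y$ are identically distributed, or more subtly when $X^{(j)} - Y^{(j)}$ is lattice-valued) and argue the point mass $\Pr[\text{avg} = 0]$ also vanishes, or at least that $\meanDominanceEQ{Y}{X}{m} - \meanDominanceEQ{X}{Y}{m} \to 0$ even accounting for it. Since Definition~\ref{def:emd} uses the $\geq$ (weak inequality) on both sides, the atom appears with a $+$ sign in both terms and cancels, which is precisely why the definition is stated with $\geq$ rather than $>$; I would make this cancellation explicit. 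Beyond that, everything is routine: boundedness of utilities gives uniform integrability and makes all the limit interchanges legitimate, and the freedom to choose $\hat c$ arbitrarily large is what lets ``$g(m) \to 0$'' be upgraded to ``$g(\secparam^{\hat c}) < 1/\secparam^c$ eventually'' for every $c$.
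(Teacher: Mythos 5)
Your proposal follows essentially the same route as the paper: reduce the theorem to the pointwise claim that, for bounded $X,Y$, computational mean dominance holds iff $\mathbb{E}[X]\ge\mathbb{E}[Y]$; prove the ``if'' direction by splitting into the strict case (Hoeffding-type concentration, the paper's \Cref{lemma: E[X]>E[Y]impliesComDominance}) and the equal-means case (Berry--Esseen, \Cref{lemma: E[X]=E[Y]impliesComDominance}); and prove the ``only if'' direction by the contrapositive (\Cref{lemma: XdomY implies E[X]>=E[Y]}). Your remark that the atom at $0$ cancels because both probabilities use weak inequalities is also made explicitly in the paper's proof of \Cref{lemma: E[X]=E[Y]impliesComDominance}.

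One caveat is worth fixing. Your organizing claim that the dominance condition is equivalent to $g(m)\to 0$, and that ``the freedom to choose $\hat c$ arbitrarily large'' upgrades mere convergence to $g(\secparam^{\hat c})<1/\secparam^{c}$, is not correct as stated: the target $1/\secparam^{c}$ itself shrinks with $\secparam$, so a function such as $g(m)=1/\log m$ tends to $0$ yet violates the bound for every choice of $\hat c$; and in the other direction, when $\mathbb{E}[X]>\mathbb{E}[Y]$ one has $g(m)\to -1$ rather than $0$, so convergence to $0$ is not necessary either. What actually closes the equal-means case is the \emph{rate}: Berry--Esseen gives $g(m)=O(m^{-1/2})$, hence $g(\secparam^{\hat c})=O(\secparam^{-\hat c/2})$, which beats $\secparam^{-c}$ once $\hat c>2c$. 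Since you invoke that lemma anyway, your proof survives, but the polynomial rate --- not the bare limit together with a large $\hat c$ --- is the reason it does.
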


The proof relies on three technical lemmas relating computational mean dominance to expected utility comparisons for bounded-support random variables. 
Full proofs appear in ~\Cref{app:equiv-nash-proofs}.

\begin{lemma}
    \label{lemma: E[X]>E[Y]impliesComDominance}
    Let $\rX,\rY$ be two random variables with bounded support $R$ such that $\E{}{\rX}>\E{}{\rY}$. Then $\rX$ computationally dominates $\rY$, $\rX\compEmpDom\rY$(\Cref{def:computational-mean-dominance}).  
\end{lemma}

\begin{lemma}
     \label{lemma: E[X]=E[Y]impliesComDominance}
     Let $\rX,\rY$ be two random variables with bounded support $R$ such that $\E{}{\rX}=\E{}{\rY}$. Then $\rX$ computationally dominates $\rY$, $\rX\compEmpDom\rY$, and $\rY$ computationally dominates $\rX$, $\rY\compEmpDom\rX$, according to \Cref{def:computational-mean-dominance}.
\end{lemma}

\begin{lemma}
     \label{lemma: XdomY implies E[X]>=E[Y]}
     Let $\rX,\rY$ be two random variables with bounded support $R$. If $\rX$ computationally dominates $\rY$, according to \Cref{def:computational-mean-dominance}, then $\Ex{\rX}\ge\Ex{\rY}$.
\end{lemma}

\begin{proof}[Proof of~\Cref{thrm: Nash equiv Pseudo}] Given the above three lemmas, the proof of~\Cref{thrm: Nash equiv Pseudo} proceeds as follows:

\noindent We will show each direction separately.

\noindent($\implies$) \textbf{Nash implies Pseudo.} We want to show that a Nash equilibrium also satisfies \Cref{def: computationalPseudo}. That is, there exists a $\secparam_0$ such that for all $\secparam\ge\secparam_0$ and for all players $i$ and for all alternative strategies $\hs_i$ : 

    \begin{align*}
    &\meanDominanceU{\hs_i;s_{-i}}{s}{\secparam^{4c}}\\ 
    &\qquad\qquad\qquad-\meanDominanceU{s}{\hs_i;s_{-i}}{\secparam^{4c}} \le \frac{1}{\secparam^c} 
    \end{align*}

Since the strategy profile $s=(s_1,\ldots,s_n)$ is a Nash equilibrium, we have for all $i\in[n]$ and for every unilateral deviation $\hat{s}_i \in \cN_i$ of player $i$ the $\E{}{U_i(s_i;s_{-i})} \ge \E{}{U_i(\hat{s}_i;s_{-i})}.$ Let us define $\Delta = \min_{i,\hs_i} \{\E{}{U_i(s_i;s_{-i})} - \E{}{U_i(\hs_i;s_{-i})} \} \ge 0$. By construction of the game, the utilities are bounded, i.e., for all strategy profiles $U_i(\cdot) \in [a,b]$. Let $R = \abs{b-a}$ be the range. Consider player $i$ and the deviating strategy $\hs_i$ that minimizes the difference in expectation $\Delta$. 

\noindent\textit{Case 1:}  $\Delta > 0$. In this case, we can directly apply~\Cref{lemma: E[X]>E[Y]impliesComDominance}, to show that  $U_i(s_i;s_{-i})\compEmpDom U_i(\hs;s_{-i}).$

\noindent\textit{Case 2:}  $\Delta = 0$ In this case we can directly apply \Cref{lemma: E[X]=E[Y]impliesComDominance}

\noindent($\impliedby$) \textbf{Pseudo implies Nash.}
Assume that the strategy profile $s = (s_1,\dots,s_n)$ is a pseudo-Nash equilibrium. By definition, for every player $i\in [n]$, for every unilateral deviation $\hat{s}_i \in \cN_i$, for every constant $c$there exists a sample size $\secparam_0$ such that for all $\secparam\ge \secparam_0$ the following holds:
    
    \begin{align*}
    &\meanDominanceU{\hs_i;s_{-i}}{s}{\secparam^{4c}}\nonumber\\
    &\qquad\qquad\qquad- \meanDominanceU{s}{\hs_i;s_{-i}}{\secparam^{4c}} \le \frac{1}{\secparam^c}  
    \end{align*}

where $U_i^{(j)}(s_i,s_{-i})$ and $U_i^{(j)}(\hat{s}_i,s_{-i})$ denotes $j^{\text{th}}$ independent sample from the corresponding utility random variables. By~\Cref{lemma: XdomY implies E[X]>=E[Y]}, this condition implies that for every $i\in [n]$ and every deviation $\hat{s}_i\in \cN_i$,
\[
\E{}{U_i(s_i;s_{-i})} \ge \E{}{U_i(\hat{s}_i;s_{-i})},
\]
which is exactly the Nash equilibrium condition.

\end{proof}

\subsection{Beyond Nash for Parameterized Games} 

Recall that our goal is to define a notion of stability for games using cryptography, which eliminates unnatural Nash equilibria (e.g., brute-forcing crypto). The above equivalence theorem makes one wonder, if Nash is equivalent to pseudo-Nash, how can our notion provide more stability if it collapses to the very concept it seeks to improve? The resolution is that equivalence between Nash and pseudo-Nash does not hold in parametrized games (i.e., games where the utility depends on the security parameter $\secparam$). As a sanity check, we describe two  ensembles $\rX=\ensemble{\rX}, \rY=\ensemble{\rY}$ for which  $\E{}{\rY_\secparam} > \E{}{\rX_\secparam} $ for all $\secparam$; but, the empirical mean of $\rY_\secparam$ does not converge to its expectation with polynomial samples in $\secparam$, which would make the two equilibrium notions distinct.

    Consider the following two random variable ensembles $\rX=\ensemble{\rX},$ and $ \rY=\ensemble{\rY}$. Let $\rX_\secparam = \{0 \;\text{w.p.}\; 1/2,2 \;\text{w.p.}\; 1/2 \}$ and $\rY_\secparam = \{0 \;\text{w.p.}\; 1-1/2^\secparam,2^{2\secparam} \;\text{w.p.}\; 1/2^\secparam\}$. It is immediate that $\E{}{\rX_\secparam}=1$ and $\E{}{\rY_\secparam}=2^{\secparam}$, hence $\E{}{\rX_\secparam} < \E{}{\rY_\secparam}$ for all $\secparam$. However, we will show that $\rX$ computationally EM dominates $\rY$, $\rX \compEmpDom \rY$. Intuitively, the probability that $\rSum{j}{1}{\secparam^c}\rY_\secparam>0$ is negligible. On the other hand, the probability that $\rSum{j}{1}{\secparam^c}\rX_\secparam=0$ is also negligible.

    To show that $\rX$ computationally dominates $\rY$ we need to show that for all $c\ge1$ there exists a $\ch>c$ and a $\secparam_0$ such that for all $\secparam\ge\secparam_0$:
    \begin{equation}\label{eq: exampleComp}
        \meanDominance{\rY_\secparam}{\rX_\secparam}{\secparam^\hc} -\meanDominance{\rX_\secparam}{\rY_\secparam}{\secparam^\hc}\le\frac{1}{\secparam^c} \\
    \end{equation}

Rearranging the LHS of~\Cref{eq: exampleComp}

    \begin{align*}
        &\meanDominance{\rY_\secparam}{\rX_\secparam}{\secparam^\hc} -\meanDominance{\rX_\secparam}{\rY_\secparam}{\secparam^\hc} \\
        ={}&\meanDominance{\rY_\secparam}{\rX_\secparam}{\secparam^\hc}- \left(1 -\meanDominanceEQ{\rY_\secparam}{\rX_\secparam}{\secparam^\hc}\right)\\
        ={}&2\meanDominance{\rY_\secparam}{\rX_\secparam}{\secparam^\hc}+\meanDominanceEqual{\rY_\secparam}{\rX_\secparam}{\secparam^\hc} -1\\
    \end{align*}

To show that $\rX$ dominates $\rY$ we will upper bound the probabilities separately. We will show that both probabilities are negligible and therefore the difference is negative, thus strictly less than $\frac{1}{\secparam^c}$.

First,
    \begin{align*}
    \meanDominance{\rY_\secparam}{\rX_\secparam}{\secparam^\hc} &= 1-\pr{\forall j\in[\secparam^\hc]: \rY_\secparam^{(j)}=0} \\
        &=1-\left(1-\frac{1}{2^\secparam}\right)^{\secparam^\hc}\le 1-1+\frac{\secparam^\hc}{2^\secparam}\le \negl{\secparam}
    \end{align*}
        
   Next,

   \begin{align*}
       \meanDominanceEqual{\rY_\secparam}{\rX_\secparam}{\secparam^\hc} &= \pr{\forall j\in[\secparam^\hc]: \rY_\secparam^{(j)}=0} \pr{\forall j\in[\secparam^\hc]: \rX_\secparam^{(j)}=0}\\
       &\le\pr{\forall j\in[\secparam^\hc]: \rX_\secparam^{(j)}=0}=\frac{1}{2^{\secparam^\hc}}\le \negl{\secparam}
   \end{align*}

That concludes the example that demonstrates a random variable dominates a random variable with much higher expected utility.

\section{Properties of Pseudo-Nash and Crypto-Friendliness} %

The previous section demonstrated that computational mean dominance (and pseudo-Nash) ``plays well'' with game theory. In this section, we demonstrate that they also play well with cryptography, developing the basic tools that will allow us to prove our main theorem (stability-preserving ideal to real cryptography transition). More concretely, 
we will show that the definition of computational mean dominance (\Cref{def:computational-mean-dominance}) is compatible with computational indistinguishability (\Cref{def: compInd}). The following results show that: (i) if two random variables are computationally indistinguishable then we have bidirectional dominance (\Cref{lemma: comp ind implies bidirectional dominance}) (ii) if a random variable $\rX$ dominates a random variable $\rY$ then changing either with an indistinguishable one will not affect the dominance (\Cref{prop: EM-domination and indistinguishability for X} for replacing the dominating ensemble;  ~\Cref{prop: EM-domination and indistinguishability for Y} for replacing the dominated ensemble.)
Conceptually, these results formalize the intuition that negligible-probability events should not affect incentives. In particular, \Cref{lemma: comp ind implies bidirectional dominance} shows that a player should be indifferent between two computationally indistinguishable utility ensembles, since no PPT observer given polynomially many samples can reliably tell them apart. Consequently, a deviation that improves utility only on a negligible-probability event should not affect preferences. This is captured by \Cref{prop: EM-domination and indistinguishability for X,prop: EM-domination and indistinguishability for Y} showing that computational mean dominance is invariant under replacing either the dominating or dominated ensemble by a computationally indistinguishable one, even if the replacement changes the expected utility by a large amount.


First, we show that if two ensembles of random variables $\rX = \ensemble{\rX}$, $\rY=\ensemble{\rY}$ are computationally indistinguishable, then we have bidirectional computational empirical mean dominance.  

\begin{lemma}\label{lemma: comp ind implies bidirectional dominance}
    Let  $\rX = \ensemble{\rX}$, $\rY=\ensemble{\rY}$ be a pair of random variable ensembles such that $\rX$ and $\rY$ are computationally indistinguishable. Then $\rX$ computationally-EM dominates $\rY$,  and $\rY$ computationally-EM dominates $\rX$. That is $\rX\cong\rY \implies \rX\compEmpDom\rY\;\&\;\rY\compEmpDom\rX$.  
\end{lemma}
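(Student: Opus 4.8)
The plan is to prove the contrapositive in spirit: assuming $\rX \cong \rY$, suppose for contradiction that $\rX$ does \emph{not} computationally-EM dominate $\rY$, and build out of this failure a PPT distinguisher that contradicts computational indistinguishability. By Definition~\ref{def:computational-mean-dominance}, the failure of $\rX \compEmpDom \rY$ means there is a constant $c \ge 1$ such that for \emph{every} constant $\ch > c$ and every $\secparam_0$, there is some $\secparam \ge \secparam_0$ with
\[
\meanDominanceEQ{\rY_\secparam}{\rX_\secparam}{\secparam^{\ch}} - \meanDominanceEQ{\rX_\secparam}{\rY_\secparam}{\secparam^{\ch}} \ge \frac{1}{\secparam^{c}}.
\]
So, picking any valid polynomial sample budget $\secparam^{\ch}$ (say $\ch = c+1$), there are infinitely many $\secparam$ on which the empirical-mean comparison test itself has advantage at least $1/\secparam^c$ — i.e. on those $\secparam$, drawing $\secparam^{\ch}$ i.i.d.\ samples of $\rY_\secparam$ and $\secparam^{\ch}$ i.i.d.\ samples of $\rX_\secparam$, the event "$\rY$'s empirical mean $\ge$ $\rX$'s empirical mean" is noticeably more likely than its reverse.

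Next I would package this empirical-mean test as a single PPT distinguisher $\D$ against the \emph{original} ensembles $\rX, \rY$ in the sense of Definition~\ref{def: compInd}. The subtlety is that Definition~\ref{def: compInd} gives $\D$ a single sample of its input, whereas the test above compares two empirical means each built from $\secparam^{\ch}$ samples. The standard fix is a hybrid/averaging argument: consider the distinguisher $\D$ that, on input a single sample $z$ (drawn either from $\rX_\secparam$ or from $\rY_\secparam$), internally draws $\secparam^{\ch}$ fresh i.i.d.\ samples of $\rX_\secparam$ and $\secparam^{\ch}-1$ fresh i.i.d.\ samples of $\rY_\secparam$ (it can do this since it knows the ensembles, or is given sampling access — this matches how dominance is used for utilities of strategies), forms $\overline{\rX}$ from the $\secparam^{\ch}$ $\rX$-samples and forms $\overline{\rY}$ from $z$ together with the $\secparam^{\ch}-1$ $\rY$-samples, and outputs $1$ iff $\overline{\rY} \ge \overline{\rX}$ (breaking ties, say, by a fair coin, or by a symmetric convention; one must handle the "$=$" case carefully so that the two probabilities in the dominance definition line up with $\pr{\D=1}$ and $1-\pr{\D'=1}$). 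A symmetric telescoping over which coordinate of the $\rY$-empirical-mean is "the real input" then shows that $\pr{\D^{\rY_\secparam}=1} - \pr{\D^{\rX_\secparam}=1}$ is, up to the symmetric-test bookkeeping, exactly (a fixed fraction of) the dominance gap above, hence $\ge \frac{1}{\secparam^{c'}}$ for infinitely many $\secparam$ and a fixed $c'$. That contradicts $\rX\cong\rY$, which demands the distinguishing advantage be negligible. Running the same argument with the roles of $\rX$ and $\rY$ swapped gives $\rY \compEmpDom \rX$.

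The main obstacle I anticipate is the bookkeeping around ties and the direction of the inequality: Definition~\ref{def:computational-mean-dominance} subtracts two probabilities of the \emph{weak} events $\{\overline{\rY}\ge\overline{\rX}\}$ and $\{\overline{\rX}\ge\overline{\rY}\}$, which overlap on $\{\overline{\rX}=\overline{\rY}\}$, whereas the distinguisher's advantage is naturally phrased with a single $0/1$ output. One has to choose the distinguisher's behavior on ties so that $\pr{\D^{\rY_\secparam}=1}-\pr{\D^{\rX_\secparam}=1}$ equals (or lower-bounds a constant multiple of) $\meanDominanceEQ{\rY_\secparam}{\rX_\secparam}{\secparam^{\ch}} - \meanDominanceEQ{\rX_\secparam}{\rY_\secparam}{\secparam^{\ch}}$; the clean way is to have $\D$ output $1$ with probability $1$ on $\overline{\rY}>\overline{\rX}$, probability $1/2$ on the tie, and $0$ otherwise, so that $\pr{\D=1}$ becomes $\frac12\big(1 + \pr{\overline{\rY}>\overline{\rX}} - \pr{\overline{\rX}>\overline{\rY}}\big)$ and the gap is reproduced exactly. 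A secondary point to get right is that the number of internal samples $\secparam^{\ch}$ is polynomial in $\secparam$, so $\D$ is genuinely PPT (assuming each sample is drawable in polynomial time, which holds in our setting since these random variables are utilities of efficiently-samplable strategy profiles); and that "for infinitely many $\secparam$ the advantage is $\ge 1/\secparam^c$" indeed negates "negligible," which is immediate from Definition~\ref{def: negl}. Everything else is routine.
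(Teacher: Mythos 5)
Your proposal is correct and follows essentially the same route as the paper: negate the dominance condition, turn the empirical-mean comparison into a PPT distinguisher whose advantage is (half of, or exactly) the dominance gap, handle ties by a symmetric convention, and contradict $\rX\cong\rY$. The only difference is presentational: the paper first builds a many-sample distinguisher (outputting $\bot$ on ties) and then cites the standard multi-sample-to-single-sample reduction from~\cite{GoldreichBook1} as a black box, whereas you inline that hybrid argument into the distinguisher itself --- both require the same samplability assumption you correctly flag.
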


\begin{proof}
    \label{proof: lemma: comp ind implies bidirectional dominance}
    We prove this lemma by contradiction. Assume that for two computationally indistinguishable random variables ensembles $\rX$ and $\rY$, the bidirectional dominance does not hold. Since $\rX\compEmpDom\rY$ and $\rY\compEmpDom\rX$ are symmetric, without loss of generality, we assume that $\neg(\rX\compEmpDom\rY)$ holds.  Then we can construct a probabilistic polynomial-time (PPT) distinguisher that distinguishes $\rX$ from $\rY$ with a non-negligible advantage. 
    

     By definition, $\neg(\rX\compEmpDom\rY)$ implies that there exists a constant $c$ such that for all $\ch>c$, we set $\ch=4c$, and for all $\secparam_0$ there exists a $\secparam\ge\secparam_0$ such that:

\begin{equation}\label{eq: X not doms Y}
    \meanDominance{\rY_\secparam}{\rX_\secparam}{\secparam^{4c}} -\meanDominance{\rX_\secparam}{\rY_\secparam}{\secparam^{4c}} \ge \frac{1}{\secparam^c} 
\end{equation}


    Consistent with the cryptographic literature, we assume that the distinguisher is non-uniform (i.e., can take a polynomial advice, which in our case will be the constant $c$ guaranteed to exist from the above observation). We consider the distinguisher $\D_1: \bits^* \to \{0,1,\bot\}$ (\Cref{alg: distingusiher X and Y}), that takes as input the running time token $1^\secparam$  and $\secparam^{4c}$ i.i.d.\ samples from $\rZ_\secparam$, where $\rZ_\secparam$ is equally likely distributed according to $\rX_\secparam$ or $\rY_\secparam$. The distinguisher then outputs (i) $\bits$ that indicates whether the samples were drawn from $\rX_\secparam$ or $\rY_\secparam$ respectively (ii) $\bot$, which represents that $\D_1$ chooses to abort from making a guess.

\begin{figure}[ht]
    \centering
    \framebox{%
    \parbox{0.8\textwidth}{%
    \bigskip
    \centering
    \textbf{Distinguisher} $\D_1(1^\secparam)$
   
    \begin{enumerate}
        
        \item $\D_1$ gets $c$ as its advice. 
        \item Compute the empirical mean of $\secparam^{4c}$ i.i.d. samples from $\rZ$.
        
        Let $\overline{Z}=\frac{1}{\secparam^{4c}}\rSum{i}{1}{\secparam^{4c}}\rZ^{(i)},$ 
        
        \item Compute the empirical mean of $\secparam^{4c}$ i.i.d. samples from $\rX_\secparam$.
        
        Let $\overline{X}=\frac{1}{\secparam^{4c}}\rSum{i}{1}{\secparam^{4c}}\rX^{(i)}_\secparam,$  
        
        \item If $\overline{\rZ} > \overline{\rX} $, output $1$; If $\overline{\rZ} < \overline{\rX} $, output $0$; Otherwise output $\bot$.

    \end{enumerate}
    }
    }
    \caption{Distinguisher $\D_1$ for $\rX, \rY$}
    \label{alg: distingusiher X and Y}

    \end{figure}

    First, we analyze the performance of the distinguisher $\D_1$ when the samples are drawn from $\rX_\secparam$.
    \begin{align}
        & \pr{\D_1(\rX^{(1)}_\secparam, \cdots, \rX^{(\secparam^{4c})}_\secparam; 1^{\secparam}) = 0} - \pr{\D_1(\rX^{(1)}_\secparam, \cdots, \rX^{(\secparam^{4c})}_\secparam; 1^{\secparam}) = 1} \nonumber\\
        ={}& \pr{\overline{\rX}_\secparam' < \overline{\rX}_\secparam  } - \pr{\overline{\rX}_\secparam' > \overline{\rX}_\secparam } \nonumber \\
        ={}& 0 \label{ineq: proof: 1}.
    \end{align}

    Where $\overline{\rX}_\secparam'$ and $\overline{\rX}_\secparam$ are independent copies of the empirical mean random variable of $\rX_\secparam$ with $\secparam^{4c}$ samples. The last equality holds due to symmetry.
    
    Similarly, we obtain the following when $\D_1$ has access to $\rY$.
    \begin{align}
        & \pr{\D_1(\rY_{\secparam}^{(1)}, \cdots, \rY_{\secparam}^{(\secparam^{4c})}; 1^{\secparam}) = 1} - \pr{\D_1(\rY_{\secparam}^{(1)}, \cdots, \rY_{\secparam}^{(\secparam^{4c})}; 1^{\secparam}) = 0} \nonumber\\
        ={}&\pr{\overline{\rY}_\secparam >  \overline{\rX}_\secparam} -\pr{\overline{\rX}_\secparam > \overline{\rY}_\secparam} \nonumber\\
        ={}&\meanDominance{\rY_\secparam}{\rX_\secparam}{\secparam^{4c}} -\meanDominance{\rX_\secparam}{\rY_\secparam}{\secparam^{4c}} \nonumber\tag{From \Cref{eq: X not doms Y}} \\
        \geq{}& \frac{1}{\secparam^c}.\label{ineq: proof: 2}
    \end{align}

    Summing up Inequality~(\ref{ineq: proof: 1}) and Inequality~(\ref{ineq: proof: 2}), we get 
    \begin{align*}
        &\left(\pr{\D_1(\rX^{(1)}_\secparam, \cdots, \rX^{(\kappa^{4c})}_\secparam; 1^{\secparam}) = 0} - \pr{\D_1(\rY_{\secparam}^{(1)}, \cdots, \rY_{\secparam}^{(\secparam^{4c})}; 1^{\secparam}) = 0} \right) \\
        +{}&\left(\pr{\D_1(\rY_{\secparam}^{(1)}, \cdots, \rY_{\secparam}^{(\secparam^{4c})}; 1^{\secparam}) = 1} - \pr{\D_1(\rX^{(1)}_\secparam, \cdots, \rX^{(\secparam^{4c})}_\secparam; 1^{\secparam}) = 1} \right) \geq \frac{1}{\secparam^c},
    \end{align*}

    The final inequality implies that the distinguishing advantage of the distinguisher is at least $\frac{1}{2\secparam^c}$, which is non-negligible.

    From the existence of the distinguisher $\D_1$, we have that there exists a PPT distinguisher $\D_2$ that can distinguish $\rX_\secparam$ from $\rY_\secparam$ with non-negligible advantage, even when given only a single sample from either $\rX_\secparam$ or $\rY_\secparam$. This result follows from a standard result in the cryptographic literature~\cite{GoldreichBook1}. The existence of $\D_2$ implies that $\neg\big(\rX\cong\rY\big)$ and we reach a contradiction.
\end{proof}

\begin{proposition}
    \label{prop: EM-domination and indistinguishability for X}
    Let $\rX=\{\rX_\secparam\}_{\secparam\in\mathbb{N}}, \rhX=\{\rhX_\secparam\}_{\secparam\in\mathbb{N}}, \rY=\{\rY_\secparam\}_{\secparam\in\mathbb{N}}$ be random variable ensembles, such that $\rX,\rhX$ are computationally indistinguishable. Then $\rX$ computationally dominates $\rY$ if, and only if, $\rhX$ computationally  dominates $\rY$. That is, if $\rX \cong \rhX$ then  $\rX \compEmpDom \rY \Leftrightarrow\rhX \compEmpDom \rY$.
\end{proposition}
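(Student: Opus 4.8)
The plan is to prove the proposition by contraposition, in the style of \Cref{lemma: comp ind implies bidirectional dominance}: from a failure of the dominance transfer I will build a PPT distinguisher separating $\rX$ from $\rhX$, contradicting $\rX\cong\rhX$. Since computational indistinguishability is symmetric ($\rX\cong\rhX$ iff $\rhX\cong\rX$), it suffices to prove one implication, say $\rX\compEmpDom\rY\Rightarrow\rhX\compEmpDom\rY$; the reverse follows by swapping the roles of $\rX$ and $\rhX$. So I assume $\rX\cong\rhX$ and $\rX\compEmpDom\rY$, suppose toward a contradiction that $\neg(\rhX\compEmpDom\rY)$, and let $c^*$ be the constant witnessing this negation.

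First I would extract a \emph{non-negligible} gap between ``how much'' $\rhX$ fails to dominate $\rY$ and ``how much'' $\rX$ dominates it. For random variables $A,B$ with bounded support and a sample count $m$, write $g_m(A,B):=\pr{\overline{B}\ge\overline{A}}-\pr{\overline{A}\ge\overline{B}}$ for the bias of the empirical-mean comparison run on $m$ i.i.d.\ samples of each (with the two empirical means $\overline{A},\overline{B}$ independent); since ties cancel this also equals $\pr{\overline{B}>\overline{A}}-\pr{\overline{A}>\overline{B}}$, and \Cref{def:computational-mean-dominance} reads: $\{A_\secparam\}\compEmpDom\{B_\secparam\}$ iff for every $c$ there are $\ch>c$ and $\secparam_0$ with $g_{\secparam^{\ch}}(A_\secparam,B_\secparam)<1/\secparam^c$ for all $\secparam\ge\secparam_0$. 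Applying $\rX\compEmpDom\rY$ with target constant $c^*+1$ yields some $\ch>c^*+1$ and $\secparam_0$ with $g_{\secparam^{\ch}}(\rX_\secparam,\rY_\secparam)<1/\secparam^{c^*+1}$ for $\secparam\ge\secparam_0$. Since $\ch>c^*$, the negation $\neg(\rhX\compEmpDom\rY)$ --- which quantifies over \emph{all} exponents $>c^*$ and all thresholds --- gives infinitely many $\secparam\ge\max\{\secparam_0,2\}$ with $g_{\secparam^{\ch}}(\rhX_\secparam,\rY_\secparam)\ge 1/\secparam^{c^*}$. Writing $m=\secparam^{\ch}$, for each such $\secparam$:
\[
g_{m}(\rhX_\secparam,\rY_\secparam)-g_{m}(\rX_\secparam,\rY_\secparam)\;\ge\;\frac{1}{\secparam^{c^*}}-\frac{1}{\secparam^{c^*+1}}\;\ge\;\frac{1}{2\secparam^{c^*}}.
\]

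Next I would build the distinguisher, mirroring \Cref{alg: distingusiher X and Y}. On input $1^\secparam$, with $(c^*,\ch)$ as (constant-size, hence polynomial) non-uniform advice, $\D$ draws $m=\secparam^{\ch}$ samples from the challenge ensemble $\rZ_\secparam\in\{\rX_\secparam,\rhX_\secparam\}$ and $m$ fresh samples from the fixed ensemble $\rY_\secparam$ (sampled directly, just as $\rX_\secparam$ is sampled inside $\D_1$), forms empirical means $\overline{\rZ},\overline{\rY}$, and outputs $1$ if $\overline{\rY}>\overline{\rZ}$, $0$ if $\overline{\rZ}>\overline{\rY}$, and $\bot$ on a tie. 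Then $\pr{\D^{\rZ_\secparam}=1}-\pr{\D^{\rZ_\secparam}=0}=g_m(\rZ_\secparam,\rY_\secparam)$, so
\[
\bigl(\pr{\D^{\rhX_\secparam}=1}-\pr{\D^{\rX_\secparam}=1}\bigr)+\bigl(\pr{\D^{\rX_\secparam}=0}-\pr{\D^{\rhX_\secparam}=0}\bigr)=g_m(\rhX_\secparam,\rY_\secparam)-g_m(\rX_\secparam,\rY_\secparam)\;\ge\;\frac{1}{2\secparam^{c^*}}
\]
on the infinitely many good $\secparam$. Hence for each of them one of the two bracketed differences is at least $\tfrac{1}{4\secparam^{c^*}}$, so by pigeonhole either $\D$, or the distinguisher $\D'$ that relabels its $0$-output as $1$ and vice versa, achieves advantage $\ge\tfrac{1}{4\secparam^{c^*}}$ for infinitely many $\secparam$ --- a non-negligible function. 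Since this distinguisher consumes only $m=\poly{\secparam}$ samples, a standard hybrid argument (as used in the proof of \Cref{lemma: comp ind implies bidirectional dominance}, citing \cite{GoldreichBook1}) converts it into a single-sample PPT distinguisher with non-negligible advantage, contradicting $\rX\cong\rhX$.

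The hard part will be the quantifier bookkeeping linking the two one-sided hypotheses: $\rX\compEmpDom\rY$ is an ``eventually'' statement whose sample-size exponent $\ch$ is chosen \emph{after} the target constant, whereas $\neg(\rhX\compEmpDom\rY)$ is an ``infinitely often'' statement quantified over \emph{all} exponents $>c^*$; aligning them so the same sample count $m=\secparam^{\ch}$ is used on both sides while still leaving a non-negligible (not merely positive) gap is exactly what forces the ``$c^*$ versus $c^*+1$'' slack above. Everything else --- ties cancelling, the two-case pigeonhole over which half of the advantage is large, and the hybrid collapse to a single sample --- is routine and mirrors the corresponding steps in the proof of \Cref{lemma: comp ind implies bidirectional dominance}.
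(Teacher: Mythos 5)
Your proposal is correct and follows essentially the same route as the paper's proof: a contradiction argument that aligns the quantifiers of $\rX\compEmpDom\rY$ and $\neg(\rhX\compEmpDom\rY)$ at a common sample exponent so that the two bias bounds leave a non-negligible gap $1/\secparam^{c^*}-1/\secparam^{c^*+1}$, then feeds that gap into an empirical-mean-comparison distinguisher against $\rY$ and collapses to a single-sample distinguisher via the standard hybrid. Your explicit pigeonhole/relabeling step and the ``$c^*$ versus $c^*+1$'' slack are just slightly more careful renderings of the paper's choice of constants $c_1>c_2$ and its factor-of-two advantage claim.
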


\begin{proof}
    \label{proof: prop: EM-domination and indistinguishability for X}
   We prove the forward direction; the reverse direction follows by swapping $\rX$ and $\rhX$. Assume $\rX \compEmpDom \rY$ and $\rX \cong \rhX$ but $\neg(\rhX \compEmpDom \rY)$. We will derive a contradiction by constructing a PPT distinguisher.

    From $\rX \compEmpDom \rY$ we have that for all constants $c_1\ge1$, there exists a constant $d_1>c_1$ and $\secparam_1$ such that for all $ \secparam\ge\secparam_1,$
    \begin{equation}\label{eq: X doms Y} \meanDominance{\rY_\secparam}{\rX_\secparam}{\secparam^{d_1}} -\meanDominance{\rX_\secparam}{\rY_\secparam}{\secparam^{d_1}} < \frac{1}{\secparam^{c_1}}
    \end{equation}

    Similarly,  from  $\neg(\rhX \compEmpDom \rY)$ we have that there exists a constant $c_2\ge1$ such that for all $d_2>c_2$ and $\secparam_2$ there exists $\secparam\ge\secparam_2$ with

    \begin{equation}\label{eq: X' not doms Y}
      \meanDominance{\rY_\secparam}{\rhX_\secparam}{\secparam^{d_2}} -\meanDominance{\rhX_\secparam}{\rY_\secparam}{\secparam^{d_2}} \ge  \frac{1}{\secparam^{c_2}}   
    \end{equation}

First, we fix the polynomial sample size used by the distinguisher. Let $c_2\ge 1$ be the constant guaranteed by \Cref{eq: X' not doms Y}, and set $c_1 := c_2+1$. Since \Cref{eq: X doms Y} holds for all constants, there exists a constant $d>c_1$ and a threshold $\secparam_1$ such that \Cref{eq: X doms Y} holds for all $\secparam\ge \secparam_1$ when using this $d$. Because $d>c_1>c_2$, we may also instantiate the quantifier over $d_2>c_2$ in \Cref{eq: X' not doms Y} with the same exponent $d$. We therefore fix this $d$ and let the distinguisher use $\secparam^d$ samples throughout.

    Next, we show that the PPT distinguisher $\D$ in \Cref{alg: distingusiher X} has a non-negligible advantage of distinguishing between $\rX$ and $\rhX$. Consistent with the cryptographic literature, we assume that the distinguisher is non-uniform (i.e., can take a polynomial advice, which in our case will be the constant $d$ guaranteed to exist from the above observation). That is, there exists a polynomial $\poly{\cdot}$ that for all $\secparam_0$ there exists a $\secparam\ge\secparam_0:$
\begin{equation*}\label{eq: distinguishing advantage X,X'}  \abs{\pr{\D(\rX_{\secparam}^{(1)},\cdots,\rX_{\secparam}^{(\secparam^d)};1^\secparam)=1}
-\pr{\D(\rhX_{\secparam}^{(1)},\cdots,\rhX_{\secparam}^{(\secparam^d)};1^\secparam)=1}}
\ge \frac{1}{\poly{\secparam}}.
    \end{equation*}

    \begin{figure}[ht]
    \centering
    \framebox{%
    \parbox{0.8\textwidth}{%
    \bigskip
    \centering
    \textbf{Distinguisher} $\D(1^\secparam)$
   
    \begin{enumerate}
        
        \item $\D$ gets $d$ as its advice. 
        \item Compute the empirical mean of $\secparam^{d}$ i.i.d. samples from $\rZ_\secparam$.
        
        Let $\overline{Z}=\frac{1}{\secparam^{d}}\rSum{i}{1}{\secparam^{d}}\rZ_\secparam^{(i)},$ 
        
        \item Compute the empirical mean of $\secparam^{d}$ i.i.d. samples from $\rY_\secparam$.
        
        Let $\overline{Y}=\frac{1}{\secparam^{d}}\rSum{i}{1}{\secparam^{d}}\rY^{(i)}_\secparam,$  
        
        \item If $\overline{\rZ} > \overline{\rY} $, output $1$; If $\overline{\rZ} < \overline{\rY} $, output $0$; Otherwise output $\bot$.

    \end{enumerate}
    }
    }
    \caption{Distinguisher $\D$ for $\rX, \rhX$}
    \label{alg: distingusiher X}
    \end{figure}

    To lower-bound the distinguishing advantage of $\D$, we must ensure that for every $\secparam_0$ there exists
$\secparam\ge \secparam_0$ for which both \eqref{eq: X doms Y} and \eqref{eq: X' not doms Y} hold (with our fixed constants).
Let $\secparam_1$ be the threshold from \eqref{eq: X doms Y}, and define $\secparam_2:=\max\{\secparam_0,\secparam_1\}$.
By \eqref{eq: X' not doms Y}, there exists $\secparam\ge \secparam_2$ such that \eqref{eq: X' not doms Y} holds.
For this same $\secparam$, we also have $\secparam\ge \secparam_1$, hence \eqref{eq: X doms Y} holds as well. By the construction of the distinguisher, when $\D$ is given samples from $\rX$ we have that,

    \begin{align}\label{ineq: D for X}
    &\pr{\D(\rX_{\secparam}^{(1)}, \cdots, \rX_{\secparam}^{(\secparam^d)}; 1^{\secparam}) = 1} - \pr{\D(\rX_{\secparam}^{(1)}, \cdots, \rX_{\secparam}^{(\secparam^d)}; 1^{\secparam}) = 0} \nonumber\\
    ={}&
     \meanDominance{\rX_\secparam}{\rY_\secparam}{\secparam^d}-\meanDominance{\rY_\secparam}{\rX_\secparam}{\secparam^d}\\
    \ge{}& - \frac{1}{\secparam^{c_1}}  \tag{From \Cref{eq: X doms Y}}
    \end{align}

    On the other hand. When the distinguisher has access to samples from $\rhX$, we have

    \begin{align}\label{ineq: D for X'}
    &\pr{\D(\rhX_{\secparam}^{(1)}, \cdots, \rhX_{\secparam}^{(\secparam^d)}; 1^{\secparam}) = 0} - \pr{\D(\rhX_{\secparam}^{(1)}, \cdots, \rhX_{\secparam}^{(\secparam^d)}; 1^{\secparam}) = 1} \nonumber\\
    ={}&
    \meanDominance{\rY_\secparam}{\rhX_\secparam}{\secparam^d} -\meanDominance{\rhX_\secparam}{\rY_\secparam}{\secparam^d}\\
    \ge{}& \frac{1}{\secparam^{c_2}} \tag{From \Cref{eq: X' not doms Y}}
    \end{align}

    Summing up Inequality~(\ref{ineq: D for X}) and Inequality~(\ref{ineq: D for X'}), we have 
    \begin{align*}
        &\left(\pr{\D(\rhX^{(1)}_\secparam, \cdots, \rhX^{(\secparam^d)}_\secparam; 1^{\secparam}) = 0} - \pr{\D(\rX_{\secparam}^{(1)}, \cdots, \rX_{\secparam}^{(\secparam^d)}; 1^{\secparam}) = 0} \right) \\
        +{}&\left(\pr{\D(\rX_{\secparam}^{(1)}, \cdots, \rX_{\secparam}^{(\secparam^d)}; 1^{\secparam}) = 1} - \pr{\D(\rhX^{(1)}_\secparam, \cdots, \rhX^{(\secparam^d)}_\secparam; 1^{\secparam}) = 1} \right) \\
        {}&\geq \frac{1}{\secparam^{c_2}}-\frac{1}{\secparam^{c_1}},
    \end{align*}

    From the last inequality, we see the distinguishing advantage of $\D$ is at least $\frac{1}{2}(\frac{1}{\secparam^{c_2}}-\frac{1}{\secparam^{c_1}})$.  Notice that for $c_1>c_2$ there exists a $\poly{\cdot}$ such that $\frac{1}{2}(\frac{1}{\secparam^{c_2}}-\frac{1}{\secparam^{c_1}})\ge 1/ \poly{\secparam}$. 
    However, we know that such a distinguisher $\D$ cannot exist by definition (\Cref{def: compInd}). Therefore, we conclude that $\rhX \compEmpDom \rY.$  

\end{proof}

The proof of the following proposition follows the exact same reasoning as \Cref{prop: EM-domination and indistinguishability for X}, with only minor modifications. 

\begin{proposition}
    \label{prop: EM-domination and indistinguishability for Y}
    Let $\rX=\{\rX_\secparam\}_{\secparam\in\mathbb{N}}, \rY=\{\rY_\secparam\}_{\secparam\in\mathbb{N}}, \rhY=\{\rhY_\secparam\}_{\secparam\in\mathbb{N}}$ be random variable ensembles, such that $\rY,\rhY$ are computationally indistinguishable. Then $\rX$ computationally dominates $\rY$ if, and only if, $\rX$ computationally  dominates $\rhY$. That is, if $\rY \cong \rhY$ then  $\rX \compEmpDom \rY \Leftrightarrow \rX \compEmpDom \rhY$.
\end{proposition}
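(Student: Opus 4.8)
The plan is to mirror the proof of \Cref{prop: EM-domination and indistinguishability for X} essentially line for line, interchanging the role played by the ``dominating'' ensemble with that of the ``dominated'' one. As in that proof, it suffices to establish a single implication, say $\rX\compEmpDom\rY \Rightarrow \rX\compEmpDom\rhY$, since the converse follows by re-running the argument with $\rY$ and $\rhY$ swapped. I would argue by contradiction: assume $\rX\compEmpDom\rY$, $\rY\cong\rhY$, and $\neg(\rX\compEmpDom\rhY)$, and then construct a PPT distinguisher with non-negligible advantage between $\rY$ and $\rhY$, contradicting $\rY\cong\rhY$.

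The distinguisher is the obvious variant of the one in \Cref{alg: distingusiher X}: on input $1^\secparam$ together with $\secparam^{d}$ i.i.d.\ samples of $\rZ_\secparam$ (where $\rZ_\secparam$ is either $\rY_\secparam$ or $\rhY_\secparam$), it internally draws $\secparam^{d}$ i.i.d.\ samples of $\rX_\secparam$, forms the two empirical means $\overline{\rX}$ and $\overline{\rZ}$, and outputs $1$ if $\overline{\rX}>\overline{\rZ}$, outputs $0$ if $\overline{\rX}<\overline{\rZ}$, and outputs $\bot$ on a tie. The sample exponent $d$ is fixed exactly as in \Cref{prop: EM-domination and indistinguishability for X}: let $c_2$ be the constant witnessing $\neg(\rX\compEmpDom\rhY)$, pick any constant $c_1>c_2$, and let $d>c_1$ be the exponent that $\rX\compEmpDom\rY$ supplies for $c_1$; since $d>c_1>c_2$, this same $d$ is also an admissible value in the (universally-quantified-over-$d$) negated statement, and the same case analysis over the threshold $\secparam_0$ as in the cited proof shows that the ``$\rX$ dominates $\rY$'' inequality and the ``$\rX$ does not dominate $\rhY$'' inequality both hold for infinitely many $\secparam$. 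The distinguisher is non-uniform and takes $d$ (and $c_1$) as advice.

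For the advantage computation I would, when $\rZ_\secparam=\rY_\secparam$, invoke $\rX\compEmpDom\rY$ (discarding ties, which does not change the difference) to get $\pr{\D=1}-\pr{\D=0}=\meanDominance{\rX_\secparam}{\rY_\secparam}{\secparam^{d}}-\meanDominance{\rY_\secparam}{\rX_\secparam}{\secparam^{d}}\ge -1/\secparam^{c_1}$, and when $\rZ_\secparam=\rhY_\secparam$, invoke $\neg(\rX\compEmpDom\rhY)$ to get $\pr{\D=0}-\pr{\D=1}=\meanDominance{\rhY_\secparam}{\rX_\secparam}{\secparam^{d}}-\meanDominance{\rX_\secparam}{\rhY_\secparam}{\secparam^{d}}\ge 1/\secparam^{c_2}$. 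Summing these two inequalities precisely as in \Cref{prop: EM-domination and indistinguishability for X} shows that on one of the output symbols in $\{0,1\}$ the distinguisher separates $\rY_\secparam$ from $\rhY_\secparam$ with advantage at least $\tfrac12\bigl(1/\secparam^{c_2}-1/\secparam^{c_1}\bigr)$, which is non-negligible because $c_1>c_2$, and which occurs for infinitely many $\secparam$. The standard single-sample reduction (as used at the end of \Cref{lemma: comp ind implies bidirectional dominance}) then upgrades this to a single-sample PPT distinguisher, contradicting $\rY\cong\rhY$ and hence proving $\rX\compEmpDom\rhY$.

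The one genuine obstacle — and it is identical to the one already faced in \Cref{prop: EM-domination and indistinguishability for X} — is the quantifier mismatch: $\rX\compEmpDom\rY$ is a ``$\forall c\,\exists d$'' statement whereas $\neg(\rX\compEmpDom\rhY)$ is ``$\exists c\,\forall d$'', yet the distinguisher must commit to a single sample count $\secparam^{d}$ that simultaneously makes the $\rY$-side error term $\le 1/\secparam^{c_1}$ and keeps the $\rhY$-side gap $\ge 1/\secparam^{c_2}$; the fix is to choose $c_1>c_2$ and take the $d$ that the dominance hypothesis hands us for that $c_1$. A secondary nuisance, handled exactly as before, is the careful bookkeeping of strict versus non-strict inequalities and of the $\bot$/tie events so that the probability identities line up; everything beyond these points is a routine transcription of the previous proof.
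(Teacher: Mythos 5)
Your proposal is correct and is essentially the paper's own argument: the paper explicitly states that \Cref{prop: EM-domination and indistinguishability for Y} ``follows the exact same reasoning as \Cref{prop: EM-domination and indistinguishability for X}, with only minor modifications,'' and your transcription --- feeding the unknown samples into the $\overline{\rZ}$ slot, self-sampling $\rX$ for the reference mean, and resolving the $\forall c\,\exists d$ versus $\exists c\,\forall d$ quantifier mismatch by picking $c_1>c_2$ and taking the $d$ supplied by the dominance hypothesis --- is exactly the intended adaptation. The advantage bound $\tfrac12\bigl(1/\secparam^{c_2}-1/\secparam^{c_1}\bigr)$ and the reduction to a contradiction with $\rY\cong\rhY$ match the paper's proof of the companion proposition.
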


\section{Pseudo-Nash in Games that Involve Cryptography}\label{sec: pseudo-for-crypto-games}

In this section, we prove our main theorem, allowing us to translate (pseudo-) Nash equilibria in games that use idealized cryptography into pseudo-Nash in a game where the idealized cryptography is replaced by a (computationally secure) cryptographic protocol.

We start with a brief overview of basic concepts from the literature, defining the security of cryptographic protocols. We have included some intuition and concrete examples in~\Cref{sec:extraprelims}, below, for any interested reader who is unfamiliar with the simulation-based paradigm. The contents of the following subsection would likely be considered common knowledge within the cryptographic protocols community  (and can be safely skipped by a cryptographer), but will help a game theorist understand how to use our result.

\subsection{Add'l Preliminaries on Cryptographic (Composable) Security}\label{sec:extraprelims}
For clarity, we use multi-party cryptographic protocols as our running example, where parties wish to perform some joint computation securely. The common methodology to design and analyze such protocols can be outlined as follows: 

\begin{enumerate}
\item Define the model of computation that specifies the possible actions by the parties, and the ways in which parties can interact with one another. 
\item Specify the goal that the cryptographic protocol aims to achieve. This can be done either by describing properties that the protocol should have, e.g., for encryption, an attacker who observes the ciphertext should gain no information about the plaintext; or by specifying an idealized primitive, which in the cryptographic literature is called the ideal functionality, that embodies the goal that we are aiming to achieve. Consider, for example, the case of cryptographic commitments: a sender (the {\em committer}) with some input $x$ can run an interactive protocol with a receiver (the {\em verifier}) such that the following properties are satisfied: 
\begin{enumerate} 
\item (hiding) At the end of the above protocol, the receiver has not learned anything about $x$. 
\item (binding) The sender is ``committed'' to his input, i.e., there is a way to convince the receiver that $x$ was the sender's original input, but there is no way to lie that his input was some $x'\neq x$. 
\end{enumerate}
The above security goals are embodied by a simple commitment functionality, which upon receiving an input $x$ from the sender records it and informs the receiver that some input was received (without leaking anything about it); and once the input has been recorded and the receiver has been informed about it, the sender can instruct the functionality to reveal it to the receiver. It is easy to verify that the interaction of a protocol with the above functionality satisfies the security properties of a commitment. 
\item Specify the protocol, i.e., define the actions, and prove that it satisfies a cryptographic security definition (as discussed below).  
\end{enumerate} 

\paragraph{\sc Simulation-based security} 
We use the notion of {\em simulation-based security} as the security definition for cryptographic protocols. In addition to being the standard security definition in modern cryptographic protocols analysis,  simulation-based security is also suitable for our goal of seamlessly using cryptography within game theory, due to the composition theorems that it comes equipped with. In a nutshell, such theorems allow us to replace, within a bigger protocol, idealized cryptographic primitives with their cryptographic implementations, without destroying their security properties. This might sound very close to our main theorem. Therefore, it is worth here noticing a highly relevant difference between the standard approaches taken by cryptography and game theory to frame and reason about the (mis)behavior of parties in protocols and games, respectively:  In particular,  cryptography assumes that the adversary might corrupt several parties in a coordinated manner as opposed to rationality which is common in game theory. We note in passing that there are notable refinements of the cryptographic model to allow for capturing non-coordinated definitions~\cite{DBLP:conf/crypto/AlwenKMZ12}; similarly, game-theoretic stability notions that allow coordination have been proposed (see~\cite{10.1007/978-3-642-25280-8_1} for a nice survey.)  

We can use any composable (simulation-based) security cryptographic framework, e.g.,~\cite{DBLP:journals/joc/Canetti00,Canetti_Composable,DBLP:conf/tcc/BackesPW04} but for this exposition, we will focus on Canetti's Universal Composability (UC) framework~\cite{Canetti_Composable} which is the most broadly adopted in the modern literature. 

The model of computation in UC models (the strategies of the) parties as interactive Turing machines (ITMs).\footnote{In UC, the parties are in fact {\em instances} of ITMs, simulated by a single, larger ITM which is the environment. For sake of accessibility to the broader computer science and game theory audience, we refer to them as ITMs here.} These are Turing machines---a standard theoretical abstraction of computing devices---with dedicated input and output tapes (corresponding to I/O interface) and communication tapes that can be connected to other ITMs' communication tapes: If two ITMs $\vitm_1$ and $\vitm_2$  have connected (linked) communication tapes, then $\vitm_1$ can write to its communication tape shared with $\vitm_2$ so that $\vitm_2$ can read it and vice versa.

An interactive protocol among $n$ parties can be described as a vector ${\bf\pi}=(\pi_1,\ldots,\pi_n)$ of $n$ ITM's, where the $i$th ITM, $\pi_i$, describes the (interactive) strategy of the $i$-th party (this can be thought of as a machine that specifies all possible reactions of an agent in an extensive form game). For simplicity, we will assume that the $n$ parties' ITMs are not connected to one another; instead, when in the protocol we want to have Party $i$ send a message to Party $j$, we will assume that they are both connected to another ITM (corresponding to a communication channel between them) that is linked to both of them.

Given the above model, a party that deviates from the protocol can be captured as a party that replaces its protocol-prescribed ITM with a different one. The actual cryptographic security definition considers situations where several parties might deviate in a coordinated (by an adversary) manner. However, for the purpose of this work---since we only consider Nash equilibrium with rational players---it suffices to discuss the definition restricted to any one of the parties being corrupted (and deviating).

The security definition of a cryptographic protocol is given as follows: First, we specify the ideal goal by means of an ideal functionality $\vfunc$ (also an ITM), which as in Step 1, above, embodies the idealized version of our protocols goals, e.g., in the example above, a commitment functionality, which is an ITM with the behavior discussed above, that has communication tapes linked with the committer and the verifier. Then the definition of what it means for a protocol $\vprot$---which, recall, is a profile of ITMs---to {\em securely realize} $\vfunc$ against a single corruption is devised by comparing an execution of the protocol (the so-called {\em real world}) to an ideal invocation of the functionality $\vfunc$ (the so-called ideal world). In a nutshell, we require that there exists no distinguisher (also an ITM) that distinguishes an ideal-world execution from a real-world execution with better than negligible probability. (In the context of UC security, a distinguisher is also called the {\em environment} as it defines the protocol's input/output interfaces.)  

In a nutshell, a cryptographic protocol securely realizes an idealized cryptographic functionality $\vfunc$, if any deviation in the real world can be mapped to an indistinguishable deviation in the ideal world so that the input/output behavior of the two worlds remains indistinguishable.

A bit more formally, consider a distinguisher $\D$ who chooses inputs to the protocols ITMs and gets to observe their outputs. For any such $\D$, denote by $V^{\D,Real}_{{\bf\pi}=(\pi_1,\ldots,\pi_n)}$ the random variable corresponding to the view of $\D$ when interacting with the protocol. (This view includes $\D$'s own input and randomness, the security parameter, and all inputs and outputs to the ITMs in ${\bf\pi}$.) Similarly, $V^{\D,\vfunc,Ideal}_{(\phi_1,\ldots,\phi_n)}$ corresponds to the view of $\D$ in an ideal invocation of $\vfunc$, where each $\phi_i$ is the ``dummy'' ITM that simply forwards inputs from $\D$ to $\vfunc$ and hands any received inputs back to $\D$. Note that all ITMs involved in the above definition (including $\D$) are assumed to be probabilistic polynomial time (PPT).

We then say that protocol $\bf\pi$, is secure if there exists an ITM $\Sim$, called the {\em simulator}, such that for any $i\in[n]$, and any PPT ITM $\pi_i'$ the view $V^{D,Real}_{(\pi_i',\pi_{-i})}$ of $D$ when interacting with ${\bf\pi}$ where party $i$ plays $\pi_i'$ instead of $\pi_i$ is (computationally) indistinguishable  from the view $V^{D,Ideal}_{(\Sim(\pi_i'),\phi_{-i})}$ of $D$ in an ideal invocation where party $i$ plays the simulator's strategy (instead of the ``dummy'' one).

\subsection{Normal Form Computational Games}\label{sec:nfcg}

Combining the above definition---which allows to replace deviations in the cryptographic protocols with indistinguishable deviations in the idealized-crypto setting---with \Cref{prop: EM-domination and indistinguishability for X} 
already hints towards the proof of our main theorem. However, we still need to formally define games among ITMs which will allow us to use cryptography. We do so in the following: 

\begin{definition} [Normal Form Computational Game]
    A real normal form computational game $\game$ consists of the following:
    \begin{itemize}
        \item $[n] = \{1, \cdots, n\}$ is the set of players in the game. $\Omega$ is the set of all possible outcomes of the game.
        $\secparam$ is the security parameter.
        \item Players' strategies are interactive, probabilistic polynomial time (PPT) Turing machines. We use  $\cN_i = \{1, 2, \cdots\}$ to index these strategies/machines. That is, player $i$ chooses strategy $s \in \cN_i$ means that she is playing using the $s^\text{th}$ Turing machine. This Turing machine receives a single input: the security parameter, written in unary: $1^\secparam$. We write $s_i^{\secparam}$ for the strategy/Turing machine chosen by player $i$, executed with input $1^\secparam$. Let $\cN = \cN_1\times\ldots\times\cN_n $.

        \item The Turing machines chosen by the players can be linked via one or more Turing machines, which are fixed and part of the game description. (These will correspond to  ideal functionalities, or simple communication channels.) The security parameter $\secparam$ is written on the input tape of all ITMs in the game. 
        All players simultaneously pick their strategies (Turing machine $s_i \in \cN_i$ for every player $i$), and these Turing machines interact with each other according to the rules of the game, and a single outcome $\omega \in \Omega$ is selected (and learned by the players).

        \item $U_i: \Omega  \rightarrow \mathbb{R}$ is the random variable that indicates the utility of player $i$. We slightly overload notation and write $U_i^{\secparam}(s)$ for the utility of player $i$ when the strategy profile is $s=(s_1,\ldots,s_n)$ and the security parameter is $\secparam$. We write $U_i^{\secparam} (\hs_i; s_{-i})$  when player $i$ unilaterally deviates to $\hs_i$ and the rest play according to $s_{-i}$, and assume  $U_i^{\secparam}$ is efficiently computable in $\secparam$.
    \end{itemize}

\end{definition}

\subsection{Cryptographic Pseudo-Nash}
\label{sec: ideal pseudo implies cryptographic pseudo}

In this subsection, we formalize how pseudo-equilibria in a game that uses an ideal functionality $\F$ automatically translate to pseudo-equilibria
in the corresponding game where $\F$ is replaced by a cryptographic implementation $\Pi$.

\begin{definition}[Game implementation]\label{def: game implementation}
Let $\idealGame([n],\cN^{\F},U)$ be a normal-form computational game in which strategies may invoke an ideal functionality $\F$.
Let $\Pi$ be a protocol that securely realizes $\F$.
The \emph{implementation} of $\idealGame$ with $\Pi$, denoted $\realGame([n],\cN^{\Pi},U)$, is defined by a mapping
$\Comp_i:\cN_i^{\F}\to \cN_i^{\Pi}$ for each player $i$, where $\Comp_i(\sigma_i)$ executes $\sigma_i$ while replacing each call to $\F$ with an invocation of $\Pi$.
The outcome space $\Omega$ and utilities $U_i:\Omega\to\mathbb{R}$ are the same in $\idealGame$ and $\realGame$, and utilities are evaluated on the induced outcome distribution.
\end{definition}

\begin{theorem}\label{thr: pseudo implies pseudo}
Let $\F$ be an ideal functionality and $\Pi$ be a protocol that securely realizes $\vfunc$ against a single corruption. Let $\idealGame([n],\cN^\F,U)$ be a computational game with access to $\F$, and  $\realGame([n],\cN^\Pi,U)$ be its implementation with $\Pi$ (\Cref{def: game implementation}). If $\sigma\in \cN^\F$ is a computational pseudo-Nash equilibrium (\Cref{def: computationalPseudo}) of $\idealGame$, then the corresponding strategy profile $s=(\Comp(\sigma_1),\ldots, \Comp(\sigma_n))\in\cN^{\Pi}$is a computational pseudo-Nash equilibrium of $\realGame$.
\end{theorem}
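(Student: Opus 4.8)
The plan is to combine the UC security guarantee for $\Pi$ with the indistinguishability-robustness of computational mean dominance, namely \Cref{prop: EM-domination and indistinguishability for X} and \Cref{prop: EM-domination and indistinguishability for Y}. Fix a player $i$ and an arbitrary deviation $\hs_i \in \cN_i^{\Pi}$ in the real game $\realGame$. We must show
$U_i(s_i, s_{-i}) \compEmpDom U_i(\hs_i, s_{-i})$ as ensembles in $\secparam$, where $s$ is the strategy profile corresponding to $\sigma$. The first step is to set up the right distinguisher/environment: the whole game $\realGame$ (all the other players' ITMs $s_{-i}$, the communication-channel ITMs, and the utility-computing apparatus that outputs $U_i$) can be viewed as an environment $\D$ interacting with the subprotocol $\Pi$ on the one side and with party $i$'s (possibly deviating) ITM on the other. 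Since $\Pi$ securely realizes $\F$ against a single corruption, there is a simulator $Sim$ such that the view $V^{\D,Real}_{(\hs_i,\cdot)}$ in the real execution is computationally indistinguishable from $V^{\D,Ideal}_{(Sim(\hs_i),\phi_{-i})}$ in the ideal execution where $\F$ is used. Because the utility $U_i$ is a (polynomial-time) function of this view, computational indistinguishability of views transfers to computational indistinguishability of the utility ensembles: $U_i^{\realGame}(\hs_i,s_{-i}) \cong U_i^{\idealGame}(Sim(\hs_i),\sigma_{-i})$. Applying the same argument to the prescribed strategy $s_i$ (whose corresponding ideal strategy is $\sigma_i$, and the simulator acts trivially, or more precisely $Sim$ maps the honest protocol execution to the honest ideal interaction) gives $U_i^{\realGame}(s_i,s_{-i}) \cong U_i^{\idealGame}(\sigma_i,\sigma_{-i})$.

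Now invoke the hypothesis: $\sigma$ is a computational pseudo-equilibrium of $\idealGame$, so in particular for player $i$ and the ideal-world deviation $\hsigma_i := Sim(\hs_i) \in \cN_i^{\F}$ we have $U_i^{\idealGame}(\sigma_i,\sigma_{-i}) \compEmpDom U_i^{\idealGame}(\hsigma_i,\sigma_{-i})$. Chaining: by \Cref{prop: EM-domination and indistinguishability for X}, since $U_i^{\realGame}(s_i,s_{-i}) \cong U_i^{\idealGame}(\sigma_i,\sigma_{-i})$, the dominance relation survives replacing the dominating ensemble, giving $U_i^{\realGame}(s_i,s_{-i}) \compEmpDom U_i^{\idealGame}(\hsigma_i,\sigma_{-i})$; then by \Cref{prop: EM-domination and indistinguishability for Y}, since $U_i^{\idealGame}(\hsigma_i,\sigma_{-i}) \cong U_i^{\realGame}(\hs_i,s_{-i})$, we may replace the dominated ensemble as well, obtaining $U_i^{\realGame}(s_i,s_{-i}) \compEmpDom U_i^{\realGame}(\hs_i,s_{-i})$, which is exactly what \Cref{def: computationalPseudo} demands. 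Since $i$ and $\hs_i$ were arbitrary, $s$ is a computational pseudo-equilibrium of $\realGame$.

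The main obstacle — and the step requiring the most care — is the passage "indistinguishable views $\Rightarrow$ indistinguishable utilities" together with the fact that the deviation $\hs_i$ in the real game maps to a \emph{legitimate} deviation $Sim(\hs_i)$ in the ideal game. One must argue that (a) $Sim(\hs_i)$ is PPT (so it is an admissible strategy ITM in $\cN_i^{\F}$), which follows from $Sim$ being PPT and $\hs_i$ being PPT; (b) the environment one extracts from $\realGame$ is itself PPT, so that UC security applies — this uses that all the other players' strategies, the channel ITMs, and the outcome/utility computation are polynomial in $\secparam$ by the definition of a normal-form computational game; and (c) the utility random variable is obtained from the outcome $\omega$, which is part of (or efficiently computable from) $\D$'s view, so that a distinguisher for the utility ensembles yields a distinguisher for the views, contradicting security of $\Pi$ if the utilities were distinguishable. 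A subtle point worth spelling out is that the honest case $s_i \leftrightarrow \sigma_i$ is handled by the same simulator (the UC definition quantifies the simulator before the deviating ITM, and the honest protocol is one particular ITM $\pi_i' = \pi_i$), so no separate argument is needed there. Everything else is a direct composition of results already proved in the paper.
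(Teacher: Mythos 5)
Your proposal is correct and follows essentially the same route as the paper's proof: both use the UC simulator to map the real-world deviation $\hat{s}_i$ to an ideal-world deviation $Sim(\hat{s}_i)$, argue that indistinguishability of views yields indistinguishability of the utility ensembles, and then chain \Cref{prop: EM-domination and indistinguishability for X} and \Cref{prop: EM-domination and indistinguishability for Y} to transfer dominance. The only cosmetic difference is that the paper argues by contradiction while you argue directly; since both propositions are stated as equivalences, the two presentations are interchangeable.
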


\begin{proof}
Let $U_i(\sigma_i,\sigma_{-i})$ be the utility random variable ensemble of player $i$ when players playing strategy $\sigma_i$ in $\idealGame$. Similarly, let $U_i(s_i,s_{-i})=\ensembleU{U_i}{(s^*_i,s^*_{-i})}$ be the utility random variable ensemble of player $i$ when players playing strategy $s_i$ in $\realGame$. For ease of notation, we always use $\sigma$ to denote strategy profiles in $\idealGame$ and $s$ for strategy profiles in $\realGame$. 

We prove the theorem by contradiction. Suppose that $s=(s_1,\ldots,s_n)$ is not a computational pseudo-Nash equilibrium in $\realGame$. By \Cref{def: computationalPseudo}, this implies there exists a player $i$ and unilateral deviation $\hs_i\in\cN^\Pi$ in $\realGame$ such that the utility ensemble of $s$ does not computationally mean dominate the utility ensemble of the deviation, $\neg(U_i(s_i,s_{-i}) \compEmpDom U_i(\hs_i,s_{-i}))$, let  $\hs=(\hat{s}_i, s_{-i})$. This means that strategy $\hs_i$ has observably better utility for player $i$. Towards the contradiction, we will show that if such a strategy exists in the real game $\realGame$ then there exists a strategy in the ideal game $\idealGame$ that also has observably better utility.  By definition, this implies that $\sigma$ is not a pseudo-equilibrium profile.

By the simulation-based security of $\Pi$, for every PPT unilateral deviation $\hat{s}_i$ in the real game $G_{\Pi}$ there exists a PPT ideal-world strategy (simulator) $\Sim(\hat{s}_i)$ such that the real execution under $(\hat{s}_i,s_{-i})$ is simulated in the ideal game under $(\Sim(\hat{s}_i),\sigma_{-i})$. More formally, fix any PPT distinguisher $\D$. For any strategy profile $t$ in the real game $\realGame$ (resp., $\tau$ in the ideal game $\idealGame$), let $V^{\D,Real}_t$ (resp., $V^{\D,Ideal}_\tau$) denote the random variable corresponding to $\D$'s view in the execution induced by that profile. In particular, $V^{\D,Real}_{(\hat{s}_i,s_{-i})}$ is $\D$'s view in the real execution where player $i$ deviates to $\hat{s}_i$, and $V^{\D,Ideal}_{(\Sim(\hat{s}_i),\sigma_{-i})}$ is $\D$'s view in the corresponding ideal execution where player $i$ follows $\Sim(\hat{s}_i)$ Similarly, $V^{\D,Real}_{(s_i,s_{-i})}$ and $V^{\D,Ideal}_{(\sigma_i,\sigma_{-i})}$ denote $\D$'s views under the honest profiles in the real and ideal executions. Thus,
\begin{align*}
V^{\D,Real}_{(\hat{s}_i,s_{-i})} \cong V^{\D, Ideal}_{(\Sim(\hat{s}_i),\sigma_{-i})} \qquad \text{and} \qquad V^{\D,Real}_{(s_i,s_{-i})} \cong V^{\D, Ideal}_{(\sigma_i,\sigma_{-i})}.
\end{align*}

From the indistinguishability of the views, we get that the corresponding utility ensembles are also indistinguishable, since $U_i$ can be efficiently computed from $\D$'s view. That is 
\[ U_i(\hat{s}_i,s_{-i}) \cong U_i(\Sim(\hat{s}_i),\sigma_{-i})\qquad \text{and} \qquad U_i(s_i,s_{-i}) \cong U_i(\sigma_i,\sigma_{-i}).\]

Next, we apply the invariance of computational mean dominance under indistinguishable replacements. Given that $\neg (U_i(s_i, s_{-i}) \gtrsim U_i(\hat{s}_i, s_{-i}))$, we substitute the real deviation with its ideal counterpart. From \Cref{prop: EM-domination and indistinguishability for Y} we get $\neg \left( U_i(s_i, s_{-i}) \gtrsim U_i(\Sim(\hat{s}_i), \sigma_{-i}) \right).$ Similarly, we substitute the real honest execution with the ideal honest execution. Then, by applying  \Cref{prop: EM-domination and indistinguishability for X}, we get $\neg \left( U_i(\sigma_i, \sigma_{-i}) \gtrsim U_i(\Sim(\hat{s}_i), \sigma_{-i}) \right).$

The last statement implies that $\hat{\sigma}_i = \Sim(\hat{s}_i)$ is a unilateral deviation in the ideal game $\idealGame$ that violates the pseudo-Nash condition for $\sigma,$ contradicting that $\sigma$ is a computational pseudo-Nash equilibrium in $G_{\F}$ Therefore, our assumption was false, and $s$ is a computational pseudo-Nash equilibrium in $\realGame.$ 

\end{proof}

\section*{Acknowledgments} The authors would like to thank Joel Alwen and Georgios Amanatidis for their interesting discussions and their input in the early stages of this project. Vassilis Zikas was supported in part by NSF Awards No. 2448339 and No. 2531010, JPMorgan Chase, the AI Security Institute, the Stellar Development Foundation, and Sunday Group, Inc. Yu Wei was supported in part by NSF Awards No. 2448339 and No. 2531010, the AI Security Institute, and Sunday Group, Inc. Alexandros Psomas and Athina Terzoglou were supported in part by NSF CAREER Award CCF-2144208 and research awards from Halcyon Futures, Google, and Supra. Work was done in part while Vassilis Zikas and Alexandros Psomas were visiting the Simons Institute for the Theory of Computing, UC Berkeley. 
\clearpage

\bibliographystyle{splncs04}
\bibliography{refs}
\clearpage

\appendix
\section{Missing proofs}\label{app:equiv-nash-proofs}

\begin{proof}[Proof of~\Cref{lemma: E[X]>E[Y]impliesComDominance}]
    \label{proof: lemma: E[X]>E[Y]impliesComDominance}
    Let us define $\Delta = \E{}{\rX}-\E{}{\rY}>0$. We want to show that $\rX\compEmpDom\rY$, according to~\Cref{def:computational-mean-dominance}. That is, for all constants $c \geq 1$, there exists a $\ch>c$, which we set to $\ch=4c$, and a $\secparam_0>0$ such that for all $\secparam\ge\secparam_0$: 
    \begin{equation}\label{eq: compDomEnsemble}
        \meanDominanceEQ{\rY_\secparam}{\rX_\secparam}{\secparam^{4c}}-\meanDominanceEQ{\rX_\secparam}{\rY_\secparam}{\secparam^{4c}} < \frac{1}{\secparam^{c}}\\
    \end{equation}
    
    where $\rX_{\secparam}^{(j)},\rY_{\secparam}^{(j)}$ are i.i.d. samples from $\rX_\secparam,\rY_\secparam$ respectively. 
    
    Since $\rX$ and $\rY$ are random variables (that is $X_\secparam=X$ and $Y_\secparam=Y$ for all $\secparam$),  \Cref{eq: compDomEnsemble} is reduced to:
    \begin{equation}\label{eq: compDomLemma}
        \meanDominanceEQ{\rY}{\rX}{\secparam^{4c}}-\meanDominanceEQ{\rX}{\rY}{\secparam^{4c}} < \frac{1}{\secparam^{c}}\\
    \end{equation}

For a fixed $\secparam$ and constant $c$, define the empirical averages and apply Hoeffding's inequality~\cite{Books:H63}, for any $\epsilon>0$ we have

\[
\pr{\abs{\frac{1}{\secparam^{4c}} \sum_{j=1}^{\secparam^{4c}}\rX^{(j)} - \Ex{\rX}}\ge\epsilon}  \le 2\exp\left( -\frac{2\secparam^{4c}\epsilon^2}{R^2} \right)
\]
and
\[
\pr{\abs{\frac{1}{\secparam^{4c}} \sum_{j=1}^{\secparam^{4c}}\rY^{(j)} - \Ex{\rY}}\ge\epsilon}  \le 2\exp\left( -\frac{2\secparam^{4c}\epsilon^2}{R^2} \right)
\]

For ease of notation, let us define for fixed but arbitrary $\secparam,c$:
\[
\overline{\rX} = \frac{1}{\secparam^{4c}} \sum_{j=1}^{\secparam^{4c}}\rX^{(j)} \quad\text{and}\quad \overline{\rY} = \frac{1}{\secparam^{4c}} \sum_{j=1}^{\secparam^{4c}}\rY^{(j)}
\]

Set $\epsilon = \Delta/4$. We consider the following two events, $\overline{\rX}  \ge \Ex{\rX} - \Delta/4$ and $\overline{\rY}  \le \Ex{\rY} + \Delta/4$. When both events occur we have that $\overline{\rX} -  \overline{\rY} \ge \Ex{\rX} - \Delta/4 - ( \Ex{\rY} + \Delta/4) \ge \Delta/2 > 0$

\begin{align*}
    \pr{\overline{\rX}>  \overline{\rY}} &\ge \pr{\left(\overline{\rX}  \ge \Ex{\rX} - \Delta/4 \right) \;\&\; \left(\overline{\rY}  \le \Ex{\rY} + \Delta/4\right)}\\
    &= \pr{\left(\overline{\rX}  \ge \Ex{\rX} - \Delta/4 \right)}\cdot \pr{ \left(\overline{\rY}  \le \Ex{\rY} + \Delta/4\right)}\\
    &\ge\left(1-2\exp\left( -\frac{2\secparam^{4c}(\Delta/4)^2}{R^2} \right)\right) \left(1-2\exp\left( -\frac{2\secparam^{4c}(\Delta/4)^2}{R^2} \right)\right)\\
    &\ge 1- 4\exp\left( -\frac{\secparam^{4c}\Delta^2}{8R^2} \right)
\end{align*}

To conclude the proof, we need to show that there exists a $\secparam_0$ such that for all $\secparam\ge\secparam_0$ and for all $c\ge1$ \Cref{eq: compDomLemma} holds.  We can see that it suffices to find the number of samples required to obtain $\pr{\overline{\rX} > \overline{\rY}}\ge\frac{1}{2}$.  That is because $\pr{\overline{\rY}>\overline{\rX}}-\pr{\overline{\rX}>\overline{\rY}} \le 0 \le \frac{1}{\secparam^c}.$ Thus, we need

\begin{equation*}
    4\exp\left(-\frac{\secparam^{4c}\Delta^2}{8R^2}\right) \le \frac{1}{2} 
    \Leftrightarrow -\frac{\secparam^{4c}\Delta^2}{8R^2} + \ln 4 \le  -\ln2
    \Leftrightarrow \secparam^{4c} \ge \frac{8R^2}{\Delta^2}\ln8
\end{equation*}   
Thus, since $c\ge1$ it suffices to set $\secparam_0 = \left(\frac{8R^2}{\Delta^2}\ln8\right)^{1/4}.$

\end{proof}

\begin{proof}[Proof of~\Cref{lemma: E[X]=E[Y]impliesComDominance}]
 \label{proof: lemma: E[X]=E[Y]impliesComDominance}
 For two random variables $\rX,\rY$ with $\E{}{\rX}=\E{}{\rY}$, we want to show that $\rX\compEmpDom\rY$ and $\rY\compEmpDom\rX$. Since $\rX\compEmpDom\rY$ and $\rY\compEmpDom\rX$ are symmetric, without loss of generality we will show that $\rX\compEmpDom\rY$. That is, for all constants $c \geq 1$, there exist $\ch>c$ and a $\secparam_0>0$ such that for all $\secparam\ge\secparam_0$: 
    \begin{equation}\label{eq: compDomEnsemble X=Y}
        \meanDominanceEQ{\rY}{\rX}{\secparam^{\ch}}-\meanDominanceEQ{\rX}{\rY}{\secparam^{\ch}} < \frac{1}{\secparam^{c}}\\
    \end{equation}
    
    where $\rX^{(j)},\rY^{(j)}$ are i.i.d. samples from $\rX,\rY$ respectively.
 
 We will consider the random variable $\rZ = \rX - \rY$. By linearity of expectations we have that $\Ex{\rZ} = \Ex{\rX} -\Ex{\rY} = 0$.  For some fixed but arbitrary $\secparam,\hc$ let the empirical mean of $\rZ$ with $\secparam^{\ch}$ i.i.d. samples be
    \[
    \overline{\rZ}_{\secparam,\hc} =  \frac{1}{\secparam^{\hc}}\sum_{j=1}^{\secparam^{\hc}} \rZ^{(j)}=\frac{1}{\secparam^{\hc}} \sum_{j=1}^{\secparam^{\hc}} \left(\rX^{(j)}- \rY^{(j)}  \right) .
    \]
    
    Since the random variable $\rZ$ has finite support, we know that its moments are also finite. Let $\sigma^2 = \Ex{\rZ^2}<\infty $ and  $\rho = \Ex{\abs{\rZ}^3}<\infty$. If $\sigma^2=0$, then $X$ and $Y$ are constants. Since $\Ex{X}=\Ex{Y} $, they are the same constant, and their empirical means are always equal, satisfying the computational condition trivially. If $\sigma^2>0$, by the Berry–Esseen theorem~\cite{berry1941accuracy,esseen1942liapunov}, considering the cumulative distribution function (CDF) $F_{\secparam,c}(\cdot)$ of $\frac{\overline{\rZ}_{\secparam,\ch} \sqrt{\secparam^{\ch}}}{\sigma}$ and the CDF of normal distribution $\Phi(\cdot)$ we have that for a $C<0.47 $:
    \begin{equation}\label{eq: Berry-Esseen X=Y}
        \sup_{x\in\bR} \abs{F_{\secparam,c}(x)-\Phi(x)} \le \frac{C\rho}{\sigma^3\sqrt{\secparam^{\hc}}} 
    \end{equation}

    To satisfy \Cref{eq: compDomEnsemble X=Y} we show that for all constants $c$ and a $\ch>c$ there exists $\secparam_0$ such that for all $\secparam\ge\secparam_0$ we have $\pr{\overline{\rZ}_{\secparam,\hc}<0} - \pr{ \overline{\rZ}_{\secparam,\ch}>0} <\frac{1}{\secparam^c}$. Notice that we can remove the event that $\overline{\rZ}_{\secparam,\ch}=0$ from both probabilities without changing their difference. From \Cref{eq: Berry-Esseen X=Y} we have that $\pr{\overline{\rZ}_{\secparam,c}<0}\le\pr{\overline{\rZ}_{\secparam,c}\le0} \le \frac{C\rho}{\sigma^3\sqrt{\secparam^{\hc}}}   + \Phi(0) = \frac{C\rho}{\sigma^3\sqrt{\secparam^{\hc}}}  +\frac{1}{2}$. We can easily show that

    \begin{equation*}
        \pr{ \overline{\rZ}_{\secparam,\ch}>0} = 1 - \pr{\overline{\rZ}_{\secparam,\hc}\le0} \ge 1- \left(\frac{1}{2} + \frac{C\rho}{\sigma^3\secparam^{2c}} \right) = \frac{1}{2} - \frac{C\rho}{\sigma^3\secparam^{2c}} 
    \end{equation*}

    Plugging these inequalities in to \Cref{eq: compDomEnsemble X=Y} we get:
    \begin{equation*}
        \pr{\overline{\rZ}_{\secparam,\hc}<0} - \pr{ \overline{\rZ}_{\secparam,\ch}>0}  \le \frac{1}{2} + \frac{C\rho}{\sigma^3\sqrt{\secparam^{\hc}}}  - \left(\frac{1}{2} - \frac{C\rho}{\sigma^3\sqrt{\secparam^{\hc}}}  \right) \le\frac{2C\rho}{\sigma^3\sqrt{\secparam^{\hc}}}  
    \end{equation*}

    Therefore, we want that for all $c\ge1$

    \begin{equation*}
        \frac{2C\rho}{\sigma^3\sqrt{\secparam^{\hc}}}  < \frac{1}{\secparam^c} \implies \secparam^{\frac{\ch}{2} -c} > \frac{2C\rho}{\sigma^3} 
    \end{equation*}
    
    By selecting $\ch=2c+2$, we get that   
    \begin{equation*}
         \secparam > \frac{2C\rho}{\sigma^3}
    \end{equation*}

That concludes the proof. 

\end{proof}

\begin{proof}[Proof of~\Cref{lemma: XdomY implies E[X]>=E[Y]}]
\label{proof: lemma: XdomY implies E[X]>=E[Y]}
We prove by contradiction. By the fact that $\rX\compEmpDom\rY$ we have that for all constants $c \geq 1$, there exist $\ch>c$ and a $\secparam_0>0$ such that for all $\secparam\ge\secparam_0$: 
    \begin{equation}\label{eq: compDomEnsemble exp}
        \meanDominanceEQ{\rY}{\rX}{\secparam^{\ch}}-\meanDominanceEQ{\rX}{\rY}{\secparam^{\ch}} < \frac{1}{\secparam^{c}}\\
    \end{equation}
    
    where $\rX^{(j)},\rY^{(j)}$ are i.i.d. samples from $\rX,\rY$ respectively.

For the rest of the proof, it suffices to set $\ch=4c$.
For ease of notation, let us define for fixed but arbitrary $\secparam,c$ the empirical mean random variable of $\rX,\rY$:
\[
\overline{\rX} = \frac{1}{\secparam^{4c}} \sum_{j=1}^{\secparam^{4c}}\rX^{(j)} \quad\text{and}\quad \overline{\rY} = \frac{1}{\secparam^{4c}} \sum_{j=1}^{\secparam^{4c}}\rY^{(j)}
\]
Assume that $\Ex{Y}>\Ex{X}$.  Let $\Delta=\Ex{Y}-\Ex{X}>0$ We will show that for all $c\ge1$ there exists a $\secparam_1$ such that for all $\secparam\ge\secparam_1$ \Cref{eq: compDomEnsemble exp} does not hold. Notice that~\Cref{lemma: E[X]>E[Y]impliesComDominance} does not directly contradict~\Cref{eq: compDomEnsemble exp} since the lemma implies 
$\pr{\overline{\rY}>\overline{\rX}} -\pr{\overline{\rX}>\overline{\rY}} \ge \frac{1}{\secparam^c}.$ We need to show a tighter bound.  

 Towards the contradiction, we will show that $\pr{\overline{\rY}>\overline{\rX}} -\pr{\overline{\rX}>\overline{\rY}}  \ge \frac{2}{\secparam^c}\ge \frac{1}{\secparam^c}.$ Following the same steps as in~\Cref{lemma: E[X]>E[Y]impliesComDominance}.

By Hoeffding's inequality, for any $\epsilon>0$  we have:
\[
\pr{\abs{\overline{\rX} - \E{}{\rX}}\ge\epsilon} \le 2 \exp\left(-\frac{2\secparam^{4c}\epsilon^2}{R^2}\right)\]
and
 \[\pr{\abs{\overline{\rY} - \E{}{\overline{\rY}}}\ge\epsilon} \le 2 \exp\left(-\frac{2\secparam^{4c}\epsilon^2}{R^2}\right)\]
Setting $\epsilon = \Delta/4$, we have that,

\begin{align*}
   \pr{\overline{\rY}>\overline{\rX}}  &\ge  \pr{\overline{\rY}\ge \E{}{\rY} - \Delta/4 }\cdot\pr{\overline{\rX}\le \E{}{\rX} + \Delta/4 }\\
    &\ge \left(1-  2 \exp\left(-\frac{\secparam^{4c}\Delta^2}{8R^2}\right)\right)^2 \ge 1 - 4 \exp\left(-\frac{\secparam^{4c}\Delta^2}{8R^2}\right)
\end{align*}
To obtain a tighter to contradict~\Cref{eq: compDomEnsemble exp} bound we need to find a $\secparam_0$ such that for all $\secparam\ge\secparam_0$ and for all $c$ the $\pr{\overline{\rY}>\overline{\rX}} \ge1/2+1/\secparam^c.$  That implies 
$\pr{\overline{\rX}>\overline{\rY}} \le1/2-1/{\secparam^c}$. Combining the two we get that  $\pr{\overline{\rX}>\overline{\rY}}-\pr{\overline{\rY}>\overline{\rX}}  \le -\frac{2}{\secparam^c}$ which directly contradicts~\Cref{eq: compDomEnsemble exp}. We can find the number of samples required by

\begin{equation*}
   \pr{\overline{\rY}>\overline{\rX}}\ge  1 - 4 \exp\left(-\frac{\secparam^{4c}\Delta^2}{8R^2}\right) \ge1/2+\frac{1}{\secparam^c} \Leftrightarrow 4\exp\left(-\frac{\secparam^{4c}\Delta^2}{8R^2}\right) \le \frac{1}{2} - \frac{1}{\secparam^c} 
\end{equation*}   

We can easily see that $4\exp\left(-\frac{\secparam^{4c}\Delta^2}{8R^2}\right)$ is a decreasing function of $\secparam$ that goes to 0 as $\secparam$ goes to infinity. On the other hand $ \frac{1}{2} - \frac{1}{\secparam^c} $ is an increasing function of $\secparam$  that goes to $1/2$ as $\secparam$ goes to infinity. Furthermore, because of the monotonicity, if the inequality holds for some constant  $c$ it also holds for all $c'>c$. Thus, it suffices to show it for $c=1$. Finally, to show there exists a $\secparam_0$, consider any constant $\epsilon\in(1/2)$ such that $ 4\exp\left(-\frac{\secparam^{4}\Delta^2}{8R^2}\right) < \epsilon $ and $\epsilon < 1/2-1/\secparam$. Since we want both to be satisfied simultaneously we get
\[\secparam_0 > \max\left\{ \frac{8R^2}{\Delta^2}\ln\left(\frac{4}{\epsilon}\right) , \frac{2}{1-2\epsilon}\right\}.\]
That concludes the proof of \Cref{lemma: XdomY implies E[X]>=E[Y]}.

\end{proof}
\clearpage
\section{Examples}\label{apx: examples}

\begin{example}[Games on a Ledger]\label{example: nothing works for behavioral}
    Consider the problem of a blockchain miner. A blockchain implements the distributed ledger functionality, which, loosely speaking, provides the following functionality\footnote{We note that we oversimplify the functionality, in order to make our point. Examples of ledger functionalities implemented by mainstream blockchain protocols like Bitcoin and Ouroboros/Cardano---and corresponding proofs of security---can be found in~\cite{C:BMTZ17,CCS:BGKRZ18}.}: every party can provide inputs to the ledger, which is recorded on a future block; every party can receive the ledger's current state. The blockchain implements the ledger functionality assuming an honest majority of the participants, or of a resource held by the participants like hashing power (in proof of work) or stake (in proof of stake), i.e., the miners. Given an ideal ledger, consider a simple game between miners, where a miner gains a large reward $R$ when doing a task quickly, and a small reward $r$ when doing it slowly, where ``time'' is measured by block timestamps. For the sake of exposition, assume that the miner pays a large cost when doing the task quickly, but no cost when doing the task slowly, and thus the miner's best response under these rules is to do the task slowly. In the context of DeFi (Decentralized finance),~\cite{yaish2022blockchain} proves that, in certain blockchain protocols, it is possible for miners to deviate in a way that they can manipulate the timestamps of blocks. For our simple game, this would mean that the miner could claim the large reward $R$, contradicting the prediction in the ideal game.
    
\end{example}

\begin{example}[The Guessing Game]\label{example: Nash fails}
    Consider a simple, ideal two-player game, where player $1$ commits to $\kappa$ bits, and player $2$ needs to guess them. If player $2$ guesses all $\kappa$ bits, then she wins and gets a utility of $1$; otherwise, she loses and gets a utility of $-1$.
    An ideal commitment scheme $\textsc{C}$, for the case of two players, works as follows. Player $1$ sends a value $x$ to $\textsc{C}$, which returns a ``receipt'' $c$ to player $2$. Later, player $1$ sends ``open'' to $\textsc{C}$, and $\textsc{C}$ sends $x$ to player $2$. In the ideal world, $\textsc{C}$ is \emph{perfectly binding} (player $1$ cannot open the commitment $\textsc{C}$ to any $x' \neq x$) and \emph{perfectly hiding} (player $2$ cannot learn about $x$ from $c$). It is easy to see that, when one uses an ideal commitment scheme in the game described, in the unique Nash equilibrium of the game, both players pick an $\kappa$-bit string uniformly at random (and player $2$ wins with probability $1/2^\kappa$).

    In the ``real game,'' the commitment scheme can be implemented in various ways. Perfectly hiding and perfectly binding schemes are known to not exist, but, for example, the ElGamal-based commitment scheme is \emph{computationally hiding} and perfectly binding. Concretely, a ``real-world'' cryptographic commitment scheme consists of a function $Commit(x,r)$ that takes as input a message $x$ and a random value $r$, and outputs a commitment $c$; then ``open'' can be implemented by simply revealing $(x,r)$. Computationally hiding then means that, for $r$ and $r'$ chosen uniformly at random, the distributions of $Commit(x,r)$ and $Commit(x',r')$ are computationally indistinguishable, i.e., no efficient algorithm can distinguish which one is which with probability negligibly better than a random coin-toss.
    Importantly, the above indistinguishability statement does not exclude that an inefficient adversary, who, e.g., performs a brute-force attack by checking all values of $x$ and $r$ against $c$, can guess $x$ with good probability. In fact, when the commitment scheme is perfectly binding, such brute-forcing is always possible. 
    Thus, player $2$'s best response to every strategy of player $1$ is to attempt to guess $(x,r)$ by brute-forcing. And, even imposing computational constraints on player $2$, brute-forcing will still be 
    a strategy that succeeds with a probability $\delta$ better than random guessing. 
    Therefore, picking a uniformly random $\kappa$-bit string is not a best-response for player $2$. And, by making the rewards sufficiently large (e.g.,  infinite), it is not even an $\epsilon$ best-response, for any $\epsilon >0$.
\end{example}

\begin{example}[Auctions with Information Advantage]\label{example: auction where ideal vs real is weird}
    Consider running a sealed-bid second-price auction for a single item:  each bidder submits a {\em bid}  privately to the auctioneer; the winner is the highest bidder, and the payment is the second-highest bid. Sealed-bid second-price auctions have been extensively studied in game theory and mechanism design; one of their greatest appeals is {\em dominant strategy incentive compatibility} (truthfulness): submitting a bid equal to the true valuation is a dominant strategy for every bidder (i.e., weakly better than every other strategy, no matter what all other bidders are doing).
     
     Bidder $i$'s utility for the item is their private value $v_i$ minus the price they are asked to pay $p_i$ if they win and 0 otherwise. Additionally, bidder $i$ gets utility by knowing the identity of the highest valued bidder ($i^*=\arg\max_i v_i$); this scenario can occur, for example, if having information about who won the auction offers agents a competitive advantage in future auctions. Concretely, after the auction finishes, bidder $i$ gets a large reward $R$ with probability $q_i = \max\{ \beta_{i,i^*} - \frac{1}{n-1}, 0 \}$, where $\beta_{i,j}$ is bidder $i$'s posterior probability that $j$ is $i^*$ ($q_i$ can be interpreted as how much better $i$ can do, compared to a random guess).

    In our example, sealed bids are implemented by (ideal) commitment schemes. The auctioneer additionally applies a random permutation to the set of commitments before making them publicly available, to keep bidder identities anonymous. After collecting the openings, the auctioneer privately notifies each bidder whether they won (and if so, the price) and publicly announces the second-highest bid (price) by opening the corresponding (permuted) commitment. Revealing only the price is the minimal disclosure needed for bidders to verify the payment rule, while keeping the winner’s identity and the remaining bids hidden. In the ideal world, this anonymity is realized by an ideal random permutation; in the real world, it is implemented by a pseudorandom permutation ($\PRP$). 
    
    Under an ideal random permutation, the opening step provides no additional identity information beyond the random guess, since it is related to the second-highest bidder. Consequently, apart from the winner knowing her own identity, no one else can identify the highest bidder with probability better than $1/(n-1)$, so the additional reward is strategically irrelevant. Thus, bidders simply bid their true values, and assign $\beta_{i,j}=1/(n-1)$ for every $j\neq i$.

In the real world, under a cryptographic implementation, a first observation is that for bidders playing truthfully, attempting to  ``break" (i.e., invert) the $\PRP$ does not improve the bidder's utility since the only publicly opened commitment is the one corresponding to the second-highest bid (the price), which is unrelated to the identity of the highest-value bidder. However, for a high enough reward $R$, a profitable non-truthful strategy exists if the bidder can also invert the PRP: submit an extremely high bid to win the auction, and therefore learn the true highest value among other bids (since this is now the payment) together with the (permuted) position of the opened bid, which was previously unknown. Thus, even a negligible-probability inversion of the $\PRP$ can deanonymize the commitment and yield a strict increase in expected utility. Therefore, for large enough $R$, the Nash equilibrium of the real game has players bidding non-truthfully (while attempting to break crypto,) despite truth-telling being the equilibrium prediction in the ideal world.

\end{example}
\clearpage
\section{Applications}
In this section, we showcase our definition and our main theorem in three ways. First, we revisit the Guessing Game (\Cref{example: Nash fails}) to illustrate the fundamental mismatch between classical Nash (and computational Nash) and PPT strategies. We use this example to demonstrate how pseudo-Nash correctly ignores ``crypto-breaking'' deviations that, while occurring with negligible probability, can explode expected utility and disqualify an otherwise stable strategy from being a (computational) Nash equilibrium. Second, we formalize the auction example sketched in the introduction (\Cref{example: auction where ideal vs real is weird}), showing that the Nash equilibrium of the real can change drastically, even though in the ideal game the Nash equilibrium is exactly what someone expects. However, our theorem implies that the ideal equilibrium is preserved as a pseudo-Nash equilibrium in the real game. Finally, we revisit one of the first results in game-theoretic cryptography concerning rational secret sharing~\cite{halpern2004rational}, to show how pseudo-equilibrium avoids counter-intuitive impossibility phenomena that hinge on negligible-probability events. Moreover, unlike $\epsilon$-Nash and computational Nash equilibrium~\cite{halpern2016computational}, pseudo-equilibrium remains insensitive to the magnitude of the utility associated with such events.

\subsection{Guessing Game}
\label{sec: guessing game}

We are revisiting a slightly modified version of the two-player game presented in \Cref{example: Nash fails}. The game is zero-sum, player 1 commits to $\secparam$ bits, and player 2 needs to guess the string. For every bit correctly guessed, players 2's utility doubles. Formally, we consider the games shown in \Cref{table: Guessing Game}.

\begin{table}[ht]
\begin{tabularx}{\textwidth}{|>{\setlength\hsize{1\hsize}\setlength\linewidth{\hsize}}X|>{\setlength\hsize{1\hsize}\setlength\linewidth{\hsize}}X|}
\hline \rule{0pt}{4ex}
\textbf{Ideal Game} & \textbf{Real Game}\\
\hline
 \begin{itemize} 
    \item \textbf{Commit Phase:}  
    Player~1 selects $x\in\{0,1\}^{\secparam}$ and commits it using an ideal commitment.
    
    \item \textbf{Guess Phase:}  
    Player~2 chooses a guess $y\in\{0,1\}^{\secparam}$.
    
    \item \textbf{Reveal Phase:}  
    Player 1 can open the commitment to reveal $x$. If player 1 does not open the commitment, she receives utility $-2^\kappa$ ($u_2=2^\kappa).$
\end{itemize} &
\begin{itemize}
    \item \textbf{Commit Phase:}  
    Player~1 chooses $x\in\{0,1\}^{\secparam}$ and a random string $r$, then sends to Player~2 $c = \textsf{Commit}(1^\secparam, x; r)$.
    \item \textbf{Guess Phase:}  
    Player~2 outputs a guess $y$ based on the received $c$.
    \item \textbf{Reveal Phase:}  
    Player~1 sends $(x,r)$ to open the commitment. If $\textsf{Open}(1^\secparam,x,r)$ verifies, the utilities are assigned.
\end{itemize}
 \\\hline
\multicolumn{2}{|c|}{%
  \rule[-4ex]{0pt}{9ex}
  \parbox{0.95\textwidth}{\centering
    For both games, the utilities of the players are $u_2=-u_1=2^\ell$\\
    (where $\ell$ is the number of correctly matched bits).
  }%
}\\\hline
\end{tabularx}
\caption{Guessing Game}\label{table: Guessing Game}
\end{table}
In the ideal game, it is easy to show that the strategy profile where $x,y$ are chosen uniformly at random is a Nash equilibrium. To see this, we will show that there is no profitable deviation for player 2. Let player 1 choose $x$ uniformly at random. For any guess $y$ by player 2, the number of matching bits $L$ follows a binomial distribution $B(\kappa, 1/2)$. The expected utility for player 2 is:
\[
\mathbb{E}[U_2] = \sum_{\ell=0}^\kappa 2^\ell \cdot \Pr[L=\ell] = \sum_{\ell=0}^\kappa 2^\ell \cdot \binom{\kappa}{\ell} \left(\frac{1}{2}\right)^\kappa = \left(\frac{1}{2}\right)^\kappa \sum_{\ell=0}^\kappa \binom{\kappa}{\ell} 2^\ell 1^{\kappa-\ell}
\]

By the Binomial Theorem, this simplifies to:
\[
\mathbb{E}[U_2] = \left(\frac{1}{2}\right)^\kappa (2+1)^\kappa = \left(\frac{3}{2}\right)^\kappa
\]

Since player 1 is uniform, this expectation is identical for any string $y$ that player 2 chooses. Thus, no unilateral deviation by player 2 can increase her expected utility, making the uniform profile a Nash equilibrium.
A similar argument shows that player~1 has no profitable unilateral deviation.
Suppose player~2 selects $y$ uniformly at random.
Then, for any (possibly deviating) choice of $x\in\{0,1\}^\kappa$ by player~1,
each bit of $y$ matches the corresponding bit of $x$ independently with probability $1/2$,
and therefore the number of matching bits $L$ is distributed as $L\sim \mathrm{Bin}(\kappa,1/2)$.
Hence player~1's expected utility is
\[
\mathbb{E}[U_1] \;=\; -\mathbb{E}[2^L] \;=\; -\left(\frac{3}{2}\right)^\kappa,
\]
which is independent of $x$. Thus, no unilateral deviation by player~1 can improve her expected utility,
and the uniform strategy profile is a Nash equilibrium in the ideal game.

Furthermore, because the utility distributions for any guess strategy are identical in $\idealGame$ (given the other player plays uniformly at random), they are trivially computationally indistinguishable; thus, they satisfy bidirectional computational mean dominance by \Cref{lemma: comp ind implies bidirectional dominance}. Thus, the uniform profile is also a pseudo-Nash equilibrium in the ideal game.

In the real game, it is interesting to see that the uniform strategy (Nash of the ideal game) is not always a computational Nash (\Cref{def: computational Nash}). We first notice that the real game is a computable sequence of games, on which the computational Nash is defined.  This is by definition (\Cref{def: hp computational game}): the real game always involves two players, has two rounds, and the utility can be efficiently computed. For a strategy to be a computational Nash, for all players, there is no unilateral deviation that can increase the expected utility by a noticeable amount. To prove the uniform strategy is a computational Nash, for every PPT unilateral deviation and its corresponding utility random variable, denoted by $U_2^*$, there must exist a negligible function $\epsilon$ such that
\begin{align*}
    \E{}{U_2} \ge \E{}{U_2^*}- \epsilon(\secparam),
\end{align*}

Consider a PPT deviation $s_2^*$ where player 2 attempts to brute-force the cryptographic commitment. By the hiding property, any PPT strategy has at most a negligible advantage $\mu(\kappa)$ in guessing the string $x$. However, if player 2 guesses correctly, she receives $2^\kappa$. Her expected utility becomes approximately:
\[
\mathbb{E}[U_2^*] \approx \mathbb{E}[U_2] + 2^\kappa \cdot \mu(\kappa)
\]
For a strategy to be CNE, it must hold that $\mathbb{E}[U_2] \geq \mathbb{E}[U_2^*] - \mathsf{negl}(\kappa)$. This requires $2^\kappa \cdot \mu(\kappa)$ to be negligible. Because standard cryptographic security only guarantees that $\mu(\kappa)$ is smaller than any inverse polynomial (and not necessarily $1/2^\kappa$), the ``utility spike'' $2^\kappa$ could make the expected gain non-negligible. Thus, we conclude that the uniform strategy is not necessarily a computational Nash equilibrium. Furthermore, we note that it is not easy to find a strategy profile that is a computational Nash, since this requires analyzing the exact probability of success, and cryptography does not provide such tools, but rather asymptotic ‘negligible’ upper bounds.

In contrast, the pseudo-Nash equilibrium remains stable because it is based on the empirical mean of $\secparam^c$ samples. That is because negligible-probability events (successful brute-force) are unlikely to be observed in any polynomial number of samples. Hence, the uniform strategy remains a pseudo-Nash equilibrium in the real game. This follows directly from the~\Cref{thr: pseudo implies pseudo}.

\subsection{Auctions with information advantage}\label{apx: auction}

In more detail, we analyze~\Cref{example: auction where ideal vs real is weird},  which illustrates the robustness of the pseudo-Nash equilibrium in settings where cryptographic implementation details create "spikes" in expected utility that would otherwise derail standard Nash equilibria.

We consider a single-item second-price auction with $n$ bidders. The bidders have values $\values=(v_1,\ldots,v_n)$ such that $v_i\in[0,V]$ for all $i$. Each bidder $i$ submits a sealed bid $b_i$ (potentially different from $v_i$). The auction rule is defined by $(x,p)$, where $x_i(\bids)$ is the allocation and $p_i(\bids)$ is the payment. In a second-price auction, the highest bidder wins and pays the second-highest bid. 

In addition to standard quasi-linear utility (i.e., the utility of agent $i$ is $u_i(\bids) = v_i\cdot x_i(\bids)-p_i(\bids)$), players gain utility from identifying the highest-valued bidder. This models competitive scenarios where knowing the identity of a market leader grants a future strategic advantage. Let $i^*=\argmax_i\{v_i\}$ be the identity of the highest-valued bidder. Let $\beta_{i,j}$ be player $i$’s posterior probability that player $j$ is the highest-valued bidder. Player i receives a reward $R(\secparam)$ with probability:
\[q_i = \max \left\{ \beta_{i,i^*} - \frac{1}{n-1}, 0 \right\}\]
Therefore, the utility for player $i$ is $U_i(\bids)=(v_i\cdot x_i(\bids) - p_i(\bids)) + R(\secparam)\cdot q_i.$

Given the players' utilities, we consider a standard sealed-bid second-price auction with an anonymity twist: bids are committed and then published in permuted order. Formally, we define the normal-form computational game:
\begin{framed}
\noindent \underline{Round 1 (Commit)}:
\begin{itemize}
    \item Each bidder $i$ picks a bid $b_i$ and randomness $r_i$, and sends $c_i=\Com{b_i;r_i}$ to the auctioneer.
    \item The auctioneer samples a permutation $\pi$ (ideal: uniform; real: $\pi=\PRP_s$ for $s\leftarrow\{0,1\}^\kappa$),
    and broadcasts the permuted list $C=(c_{\pi^{-1}(1)},\ldots,c_{\pi^{-1}(n)})$.
\end{itemize}

\noindent \underline{Round 2 (Open to auctioneer)}:
\begin{itemize}
    \item Each bidder $i$ sends an opening $(b_i,r_i)$ to the auctioneer, who verifies $\Com{b_i;r_i}=c_i$.
\end{itemize}

\noindent \underline{Auction and public price opening}:
\begin{itemize}
    \item The auctioneer computes the winner $i_{(1)}$ and price $p=b_{i_{(2)}}$ (second-highest bid).
    \item The auctioneer privately sends $p$ to the winner and $\bot$ to all other bidders.
    \item The auctioneer publicly opens \emph{only} the commitment at position $\pi(i_{(2)})$ in $C$,
    revealing $p$ (and the corresponding opening), but not the underlying identity $i_{(2)}$.
\end{itemize}
\end{framed}
For simplicity, one may treat $\Com\cdot$ as an ideal commitment functionality (perfect hiding/binding). In either the ideal or real commitment model, bidder $i$ can typically recognize her own commitment in the broadcast list (equivalently, learns $\pi(i)$), since she knows $c_i$. This does not help deanonymize other bidders’ commitments, and all conclusions below hinge only on whether the opened position $\pi(i_{(2)})$ can be linked back to an identity.

\textbf{Ideal game $\idealGame$.} First, we analyze the ideal game where the auctioneer uses a uniform random permutation $\pi$. Since the only publicly opened commitment corresponds to the price bidder $i_{(2)}$, and the opened position is uniformly permuted and independent of bidder identities, the permutation and subsequent opening provide no additional advantage for identifying $i^*$ beyond random guessing.
Hence, for any bidder $i$, the optimal posterior based solely on this permutation signal yields $\beta_{i,i^*}=1/(n-1)$ and thus $q_i=0$. In this case, the truthful strategy profile $\sigma$ (where $b_i=v_i$) is a Nash equilibrium, following directly from the truthfulness of the second-bid auction. We note that although the reward $R(\secparam)$ depends on the security parameter, the ideal game $\idealGame$ remains non-parameterized in the sense of \Cref{thrm: Nash equiv Pseudo}. That is because the perfect anonymity of the ideal permutation ensures that $q_i=0$ regardless of $\secparam$, causing the term $R(\secparam)\cdot q_i=0.$ Therefore, for every strategy profile $\tau$, the utility random variable $  \{U_i^\secparam(\tau)\}_{\secparam\in \mathbb{N}}$ is constant for all $\secparam$ (i.e  $U_i^\secparam(\tau)=u_i(\tau)$ for all $\secparam$), and the ideal-world game family collapses to a normal-form game with bounded utilities. Since truthful bidding is a Nash equilibrium in this game, \Cref{thrm: Nash equiv Pseudo} implies that $\sigma$ is also a pseudo-Nash equilibrium in $\idealGame$.

\textbf{Real game $\realGame$.} In the implementation of $\realGame,$ the uniform permutation is replaced by a pseudorandom permutation ($\PRP$). Before considering deviations, note that if all bidders follow the truthful bidding strategy, then, as in the ideal game, the only publicly opened commitment corresponds to the price (the second-highest bid). Hence, simply attempting to deanonymize (invert the pseudorandom permutation) the opened commitment does not help identify $i^*$, since it belongs to the second-highest bidder rather than the highest-valued one.

However, there is a profitable deviation for a large enough reward. Fix any PPT ``deanonymization'' algorithm $\Adv$ that, given the public transcript and the opened position $\pi(i_{(2)})$, outputs a guess for the underlying identity. By $\PRP$ security, $\Adv$'s advantage in identifying the correct preimage is bounded by a negligible function; concretely, there exists $\varepsilon(\secparam)=\negl{\secparam}$ such that whenever $\Adv$ is applied to a transcript in which the opened commitment corresponds to some target bidder $j$, we have
\[
\Pr[\Adv\text{ outputs }j] \;\le\; \tfrac{1}{n-1} + \varepsilon(\secparam).
\]
Now fix a bidder $i\neq i^*$ and consider the following unilateral deviation:
\begin{itemize}
    \item bidder $i$ bids an extremely large value $b_i\gg V$ to \emph{force winning} the auction.
    \item after the public opening, bidder $i$ runs $\Adv$ and sets her posterior $\beta_i$ to put mass $1$ on $\Adv$'s output (so her advantage over random guessing is on the order of $\varepsilon(\secparam)$ whenever $\Adv$ succeeds).
\end{itemize}

Under this deviation (and assuming other bidders bid truthfully), bidder $i$ wins, and the price becomes the highest bid among the remaining bidders, which is $b_{i^*}=v_{i^*}$. Hence, the publicly opened commitment belongs to $i^*$. This is precisely the situation in which deanonymizing the opened commitment reveals $i^*$.  The deviation changes bidder $i$'s auction payoff from $0$ (when losing truthfully) to $v_i - v_{i^*}$ (when forcing a win), i.e., it incurs auction loss at most $V$ in magnitude. On the other hand, it increases bidder $i$'s information term by an amount on the order of $R(\secparam)\cdot \varepsilon(\secparam).$ Thus, for sufficiently large $R(\secparam)$ (e.g., $R(\secparam)\cdot \varepsilon(\secparam) > v_{i^*}-v_i$), this deviation strictly increases bidder $i$'s expected utility, implying that the truthful profile is not a Nash equilibrium in the real game.

On the other hand, the truthful profile remains a pseudo-Nash equilibrium. Intuitively, the only gain from the overbidding deviation comes from a negligible advantage in deanonymizing the opened position, which is ``invisible'' under any polynomial number of samples. Formally, viewing the ideal uniform permutation as an ideal functionality $\F_{perm}$ and the $\PRP$-based permutation as a protocol $\Pi$ that securely realizes it, \Cref{thr: pseudo implies pseudo} implies that the truthful pseudo-Nash equilibrium in $\idealGame$ is preserved as a pseudo-Nash equilibrium in $\realGame$.

\subsection{HT game}\label{section:HTGAME}
The classical \textit{$t$-out-of-$n$ secret sharing} problem~\cite{Shamir79,Blakley79} involves a \textit{dealer} $D$ who wants to distribute a secret $s$ among a group of $n$ players, $P_1, P_2, \dots, P_n$. The scheme ensures that: (1) any subset of at least $t$ players can collaboratively reconstruct the secret without requiring further input from the dealer, while (2) any subset of fewer than $t$ players gains no information about the secret. One of the famous examples, introduced by Shamir~\cite{Shamir79}, works as follows: the secret $m^*$ is an element of a finite field $\mathbb{F}$, where $|\mathbb{F}| > n$. The dealer selects a random polynomial $f(x)$ of degree at most $t-1$ such that $f(0) = m^*$. Each player $P_i$ receives a share $s_i = f(i)$. One can check that a subset of at least $t$ players can determine $f(x)$—and thereby recover $m^*$—through polynomial interpolation. Conversely, any subset of fewer than $t$ players learns nothing about $m^*$ due to the random choice of the polynomial $f$.

In the above classical setting, there is an assumption that at least $t\geq 2n/3$ players are willing to collaborate and pool their shares when reconstructing the secret. 
This 
assumption 
is known to be necessary and sufficient for secret sharing with robust reconstruction against malicious (arbitrarily deviating) parties. 

Later, Halpern and Teague \cite{halpern2004rational} explore a setting where players are neither fully honest nor completely malicious but are instead \textit{rational}, a concept known as \textit{$t$-out-of-$n$ rational secret sharing}. Depending on the utility functions of the players, one can observe that Shamir's protocol is not a Nash equilibrium in the rational setting~\cite{halpern2004rational}. Specifically, each player prefers learning the secret above all else, however, if given a choice, prefers that as few others as possible also learn the secret (an example utility is given in~\Cref{table: ht game}). Moreover, Halpern and Teague~\cite{halpern2004rational} showed that, beyond Shamir's protocol, any protocol stopping in a fixed number of rounds cannot be a Nash equilibrium. 

In the model of \textit{rational secret sharing}~\cite{halpern2004rational}, which we refer to as the \textit{HT game} in this section, there is a \textit{dealer} $D$ holding a secret $m^*$ and $n$ players, $P_1, \dots, P_n$. A threshold $t$ is fixed at the outset and known to all players. For our discussion, we focus on the specific case where $n=3$ and $t=3$. The protocol proceeds through multiple \textit{communication rounds}. At the start of the HT game, $D$ privately distributes information to each of the three players, ensuring that no subset of players gains any information about the secret $m^*$. After this initial distribution, the dealer no longer participates. Instead, the three players, all assumed to be rational, execute the protocol by simultaneously broadcasting messages over multiple rounds. We defer a detailed description of the HT game to~\Cref{table: ht game}, which can be found in~\Cref{app: def and protocols for the HT game}.

The \textit{Halpern-Teague protocol} in the 3-out-of-3 case, denoted by $s^*$ (a detailed description of $s^*$ can be found in~\Cref{strateg: s_star for ht game}, \Cref{app: def and protocols for the HT game}), proceeds in a sequence of \textit{iterations}, with each iteration consisting of \textit{four communication rounds}. During the $j$-th iteration, each player $P_i$ flips a biased coin $dc^j_i$, which takes the value $1$ with probability $\alpha$. The players then jointly compute an \textit{information-theoretically secure computation} to compute the value $p^j = dc^j_1 \oplus dc^j_2 \oplus dc^j_3$. Notably, it is impossible for any player to cheat (except by aborting the protocol) or to gain information about the values $dc^j_i$ of other players beyond what is implied by $p^j$. If $p^j = dc^j_i = 1$, player $P_i$ broadcasts their share. If all shares are revealed, the secret is reconstructed, and the game terminates. However, if $p^j = 1$ and either no shares or exactly two shares are revealed, or if the secure computation of $p^j$ is aborted, the game ends immediately. In all other cases, the players proceed to the next iteration.

Note that in the Halpern-Teague protocol $s^*$, the secret is reconstructed only if $dc^j_1 = dc^j_2 = dc^j_3 = 1$ in some iteration $j$. Thus, assuming all players follow the protocol $s^*$, the protocol terminates in each iteration with probability $\alpha^3$. As a result, the number of iterations required follows a geometric distribution with parameter $\alpha^3$. This implies that the Halpern-Teague protocol $s^*$ does not have a fixed upper bound on its round complexity. \Cref{coro: s* is nash for HT game} establishes that $s^*$ is a Nash equilibrium.

\begin{corollary}
    \label{coro: s* is nash for HT game}
    For every $\secparam \geq 1,$ the strategy $s^*(1^\secparam)$ (shown in~\Cref{strateg: s_star for ht game}) is a Nash equilibrium of the HT game, shown in~\Cref{table: ht game}.
\end{corollary}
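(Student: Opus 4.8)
The plan is to establish the Nash equilibrium property directly from the definition, by showing that no player can strictly increase her expected utility by a unilateral deviation from $s^*(1^\secparam)$. Since the HT game is symmetric, it suffices to analyze deviations by a single player, say $P_1$, while $P_2$ and $P_3$ follow $s^*$. The starting observation is that the information-theoretically secure computation of $p^j = dc^j_1 \oplus dc^j_2 \oplus dc^j_3$ leaks nothing to $P_1$ about $dc^j_2, dc^j_3$ beyond what $p^j$ reveals, and that $P_1$ cannot bias the output; hence the only meaningful choices available to $P_1$ in any iteration are (a) how to set her coin $dc^j_1$ (or, more generally, what distribution to use), (b) whether to broadcast her true share, a wrong share, or nothing when $p^j = 1$ and $dc^j_1 = 1$, and (c) whether to abort the secure computation.

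First I would record the utility structure from Table~\ref{table: ht game}: each player most prefers outcomes in which she learns $m^*$, and among those, prefers that as few other players as possible learn it; learning nothing is worse than learning the secret regardless of who else learns it. The key quantitative consequence — which I would state as the condition on $\alpha$ under which $s^*$ is an equilibrium — is that $\alpha$ is chosen small enough that the expected gain from ``trying to be the only one who learns the secret'' does not outweigh the risk of ending the game with nobody (including $P_1$) learning it. Concretely, I would compute, conditioned on reaching iteration $j$ and on the event $p^j = 1$, the posterior probability (from $P_1$'s view) that $dc^j_2 = dc^j_3 = 1$ versus exactly one of them being $1$; these posteriors are functions of $\alpha$ only, since $p^j=1$ means an odd number of the three coins is $1$.

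Next I would go through the deviation cases. If $P_1$ withholds her share (or sends garbage) in an iteration where $p^j = dc^j_1 = 1$: with the complementary-parity posterior probability the other two would both have revealed, so $P_1$ learns $m^*$ while they do not — her best outcome — but with the remaining probability the game just halts with nobody learning, and by the choice of $\alpha$ the expectation of this gamble is no better than following $s^*$. If $P_1$ sets $dc^j_1 = 1$ with a different probability $\alpha' \neq \alpha$, or conditions it on her history, I would argue by the same posterior computation that this only reshuffles the same outcome distribution and cannot beat $s^*$; raising the reveal-probability only speeds up a symmetric reconstruction, lowering it (toward never revealing) risks never reconstructing. Aborting the secure computation is dominated because it guarantees the no-reconstruction outcome. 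The one subtlety is deviations that span multiple iterations — e.g., ``wait and see'' strategies that use information from earlier rounds — but here the memorylessness of the protocol (each iteration's secure computation is independent and the protocol state carries no usable side information) lets me reduce any multi-round deviation to a single-iteration analysis, because from $P_1$'s perspective every iteration she reaches looks identical.

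The main obstacle I anticipate is precisely this last point: ruling out \emph{history-dependent} deviations cleanly, and pinning down the exact inequality on $\alpha$ (and on the utility values in Table~\ref{table: ht game}) that makes the ``withhold and hope to be the sole learner'' deviation non-profitable in expectation. I expect this to reduce to a one-line inequality of the form $\alpha \le u_{\text{bound}}$ for an explicit $u_{\text{bound}}$ depending on the utilities; everything else is bookkeeping. Since Corollary~\ref{coro: s* is nash for HT game} is stated as a corollary, I suspect the intended proof simply invokes the original Halpern--Teague analysis~\cite{halpern2004rational} (possibly restated in~\Cref{app: def and protocols for the HT game}) for the parameter regime in Table~\ref{table: ht game}, and then observes that the argument holds for every $\secparam \ge 1$ because the game and strategy are essentially independent of $\secparam$ (the security parameter enters only through the information-theoretic sub-protocol, whose perfect security is $\secparam$-independent). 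I would present it in that streamlined form, citing the HT analysis for the single-iteration incentive computation and adding the observation about $\secparam$-uniformity.
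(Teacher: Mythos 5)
Your proposal matches the paper's proof in both strategy and substance: the paper likewise reduces every deviation to ``withhold the share at some round $T_0$,'' conditions on whether the honest run would already have terminated by then, and compares $2^{\secparam}$ against $\frac{\alpha^2}{\alpha^2+(1-\alpha)^2}\left(2^{\secparam+1}+2\right)+\frac{(1-\alpha)^2}{\alpha^2+(1-\alpha)^2}$ at $\alpha=\tfrac{1}{3}$ --- exactly the posterior-over-the-other-coins inequality you anticipate (with the small correction that, given $p^j=dc^j_i=1$, the other two coins XOR to zero, so the relevant dichotomy is ``both $1$'' versus ``both $0$,'' not ``exactly one is $1$''). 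If anything you are more careful than the paper, which asserts without justification that stopping and hiding the share is the only meaningful deviation, whereas you explicitly ground this reduction in the perfect security of the parity sub-protocol and the memorylessness of the iterations.
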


\begin{proof}
    \label{proof: coro: s* is nash for HT game}
    We start with a few definitions. Let $i \in \{1, 2, 3\}$ be the index of an arbitrary fixed player in the game. Let $\tilde{s}_i$ be an arbitrary fixed strategy for player $i$, and let $\secparam \in \mathbb{N}$ be an arbitrary fixed integer. For brevity, we will denote $s^*(1^\secparam)$ by $s^*$ whenever the context of the proof is clear. Recall $s^* = (s_i^*, s^*_{-i})$ is the strategy profile defined in the statement. Denote by $\tilde{s} = (\tilde{s}_i, s^*_{-i})$ the unilateral deviated strategy. Denote by $U_i(s^*)$ and $U_i(\tilde{s})$ the utility random variable of player $i$ when players playing the strategy profiles $s^*$ and $\tilde{s}$, accordingly. 
    
    To show that $s^*$ is a Nash, it suffices to show the expected utility of player $i$ when players playing the strategy $s^*$ is greater than or equal to the expected utility of player $i$ when players playing $\tilde{s}$, or formally
    \begin{align*}
        \Ex{U_i(s^*)} \geq \Ex{U_i(\tilde{s})}.
    \end{align*}
    
    Note that the strategy $\tilde{s}_i$ can only deviate by choosing to stop and hiding its secret share in some round $T_0$. Let random variable $\mathcal{T}$ denote the number of round until game stops when players playing strategy profile $s^*.$ One can check that 
    \begin{align*}
        \Ex{U_i(s^*) \mid \mathcal{T}\leq T_0} = \Ex{U_i(\tilde{s}) \mid \mathcal{T}\leq T_0}.
    \end{align*}

    In contrast, we still have $\Ex{U_i(s^*) \mid \mathcal{T}\leq T_0} = 2^{\secparam}$, conditioned on the event $\mathcal{T} > T_0$, since the game when playing $s^*$ will continue until every one learns the secret $m^*.$ However, when playing $\tilde{s}$, the game will stops exactly at round $T_0$. Since we have $\mathcal{T} > T_0$, and the other two players follow $s^*$, player $i$ will have only two outcomes: either only $i$ learns the secret, or no one, which gives 
    \begin{align*}
        \Ex{U_i(\tilde{s}) \mid \mathcal{T}> T_0} = \frac{\alpha^2}{\alpha^2 + (1-\alpha)^2} (2^{k+1}+2) + \frac{(1-\alpha)^2}{\alpha^2 + (1-\alpha)^2}.
    \end{align*}
    One can check that when plugging $\alpha = \frac{1}{3}$, for all $\secparam \geq 1,$
    \begin{align*}
        \Ex{U_i(s^*) \mid \mathcal{T}> T_0} \geq \Ex{U_i(\tilde{s}) \mid \mathcal{T}> T_0}.
    \end{align*}

    \noindent Combining both cases, we conclude the proof by observing $\Ex{U_i(s^*)} \geq \Ex{U_i(\tilde{s})}.$ 
    
\end{proof}

We consider a finite-round variant of the Halpern-Teague protocol, denoted as the stopping strategy $\hat{s}$ (a detailed description of $\hat{s}$ can be found in~\Cref{strateg: s_hat for ht game}, \Cref{app: def and protocols for the HT game}). $\hat{s}$ follows exactly the strategy given by $s^*$ but enforces termination after a fixed number of rounds, which is linear in the size of the share used in the game. Interestingly, while the Halpern-Teague protocol $s^*$ terminates with high probability in a finite number of rounds --- determined by the parameter $\alpha^3$ --- any strategy, e.g., $\hat{s}$, that enforces termination in a fixed number of rounds is ruled out as a stable equilibrium. This apparent contradiction arises because such a finite-round stopping strategy fails to satisfy the conditions of a Nash equilibrium.  However, the authors prove that there exists an infinite-rounds strategy which is Nash. 

What is interesting here is that the proposed Nash has an overwhelming probability of being a finite round strategy. In fact, it is not hard to find a fixed bound such that this strategy will go beyond this bound only if a negligible probability event occurs. This is exactly an artifact that pseudo-Nash was designed to take care of: As we show in~\Cref{thm: s_hat is a pseudo for HT game}, the above (finite) stopping strategy $\hat{s}$ is  \textit{pseudo-Nash}, which resolves this counter-intuitive situation. 

\begin{theorem}
    \label{thm: s_hat is a pseudo for HT game}
    The strategy $\hat{s}$ (shown in~\Cref{strateg: s_hat for ht game}) is a pseudo-equilibrium of the HT game, shown in~\Cref{table: ht game}.
\end{theorem}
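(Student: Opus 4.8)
The plan is to reduce to \Cref{coro: s* is nash for HT game} --- which says the unbounded-round Halpern--Teague protocol $s^*$ is a Nash equilibrium of the HT game for every $\secparam$ --- exploiting that $\hat{s}$ differs from $s^*$ only on an event of negligible probability. \textbf{Step 1 (truncation is statistically invisible).} Playing $s^*$, the game advances to a fresh iteration until the first one in which $dc^j_1=dc^j_2=dc^j_3=1$, so the number of iterations is geometric with parameter $\alpha^3=1/27$; the probability that this has not happened within the round bound $T$ of $\hat{s}$ (linear in $\secparam$) is $(1-\alpha^3)^{T/4}=\negl{\secparam}$. Since $\hat{s}$ agrees with $s^*$ up to round $T$, for every player $i$ the utility ensemble $U_i(\hat{s})$ equals the constant $2^\secparam$ (the ``everyone learns'' payoff, which $U_i(s^*)$ attains with certainty) except with probability $(1-\alpha^3)^{T/4}$, hence $U_i(\hat{s})\cong U_i(s^*)$. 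Likewise, given any unilateral deviation $\tilde{s}_i$ from $\hat{s}$, let $\tilde{\sigma}_i$ be the deviation against $s^*_{-i}$ that mimics $\tilde{s}_i$ but reverts to $s^*_i$ if round $T$ is ever reached; the two executions coincide up to round $T$ and can differ only if round $T$ is reached without the game ending, which --- since aborting the secure computation or withholding a share always ends the game before $T$ --- forces $i$ to have played $\hat{s}_i=s^*_i$ honestly while the honest process missed ``all coins $1$'' for $T/4$ iterations, an event of probability at most $(1-\alpha^3)^{T/4}=\negl{\secparam}$. Hence $U_i(\tilde{s}_i,\hat{s}_{-i})\cong U_i(\tilde{\sigma}_i,s^*_{-i})$.

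\textbf{Step 2 ($s^*$ is a pseudo-equilibrium).} Fix $i$, a deviation $\tilde{\sigma}_i$, and a constant $c$. As in the proof of \Cref{coro: s* is nash for HT game}, $\tilde{\sigma}_i$ reduces to stopping (and withholding) at some round $T_0$, and the case analysis there shows $U_i(\tilde{\sigma}_i,s^*_{-i})$ is supported on $\{1,2^\secparam,2^{\secparam+1}+2\}$, taking the value $2^\secparam$ with probability $1-r$, the value $2^{\secparam+1}+2$ with probability $r/5$, and the value $1$ with probability $4r/5$, where $r=\pr{\mathcal{T}>T_0}=(1-\alpha^3)^{T_0/4}$. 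Let $N=\secparam^{\hc}$, let $\bar{U}$ be the empirical mean of $N$ i.i.d.\ draws of $U_i(\tilde{\sigma}_i,s^*_{-i})$, and recall the empirical mean of $N$ draws of $U_i(s^*)$ is the constant $2^\secparam$. Then
\[
\pr{\bar{U}\ge 2^\secparam}-\pr{2^\secparam\ge\bar{U}}=\pr{\bar{U}>2^\secparam}-\pr{\bar{U}<2^\secparam}\le\pr{\bar{U}>2^\secparam},
\]
and $\{\bar{U}>2^\secparam\}$ requires the value $2^{\secparam+1}+2$ to appear among the $N$ draws and (for $\secparam\ge2$, since $2^\secparam+2<2(2^\secparam-1)$) in fact requires $2n_a>n_b$, where $n_a,n_b$ count the draws equal to $2^{\secparam+1}+2$ and to $1$. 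Pick $\hc$ large enough, depending on the deviation: if $r$ is negligible then $\pr{n_a\ge1}\le Nr/5=\negl{\secparam}$; if $r$ is not negligible, $\hc$ may be chosen so that $Nr\to\infty$ polynomially, whence $n_a,n_b$ concentrate around $Nr/5,4Nr/5$ and a Chernoff/Bernstein bound gives $\pr{2n_a>n_b}\le e^{-\Omega(Nr)}=\negl{\secparam}$. Either way $\pr{\bar{U}>2^\secparam}<1/\secparam^c$ for all large $\secparam$, so $U_i(s^*)\compEmpDom U_i(\tilde{\sigma}_i,s^*_{-i})$; as $i,\tilde{\sigma}_i$ were arbitrary, $s^*$ is a computational pseudo-equilibrium (this does \emph{not} follow from \Cref{thrm: Nash equiv Pseudo}, which covers only non-parameterized games).

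\textbf{Step 3 (transfer).} Fix $i$ and a deviation $\tilde{s}_i$ from $\hat{s}$, with $\tilde{\sigma}_i$ as in Step 1. By Step 2, $U_i(s^*)\compEmpDom U_i(\tilde{\sigma}_i,s^*_{-i})$; by Step 1, $U_i(s^*)\cong U_i(\hat{s})$ and $U_i(\tilde{\sigma}_i,s^*_{-i})\cong U_i(\tilde{s}_i,\hat{s}_{-i})$. By \Cref{prop: EM-domination and indistinguishability for X} we may replace the dominating ensemble $U_i(s^*)$ by $U_i(\hat{s})$, and by \Cref{prop: EM-domination and indistinguishability for Y} the dominated ensemble $U_i(\tilde{\sigma}_i,s^*_{-i})$ by $U_i(\tilde{s}_i,\hat{s}_{-i})$, yielding $U_i(\hat{s})\compEmpDom U_i(\tilde{s}_i,\hat{s}_{-i})$. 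Since this holds for every player and every deviation, $\hat{s}$ is a computational pseudo-equilibrium of the HT game by \Cref{def: computationalPseudo}.

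\textbf{Main obstacle.} Steps 1 and 3 are essentially bookkeeping (statistical-distance estimates, then the indistinguishability-preservation propositions). The substance is Step 2 --- turning ``$s^*$ is Nash'' into ``$s^*$ is pseudo-Nash'' in a parameterized game where a deviation can pay the exponentially large $2^{\secparam+1}+2$ on a rare event; the delicate case is a deviation stopping at round $T_0=\Theta(\log\secparam)$, where $r=\pr{\mathcal{T}>T_0}$ is only inverse-polynomial in $\secparam$, so the sample count $\secparam^{\hc}$ must be chosen large enough (relative to that polynomial) for the concentration bound to take effect.
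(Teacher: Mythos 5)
Your architecture (reduce $\hat{s}$ to $s^*$, prove $s^*$ is pseudo-Nash by a concentration argument, transfer via \Cref{prop: EM-domination and indistinguishability for X} and \Cref{prop: EM-domination and indistinguishability for Y}) is genuinely different from the paper's, which never passes through $s^*$: the paper conditions on the event $\mathcal{S}_{1,\secparam}$ that all $\secparam^{4c}$ sampled runs of $\hat{s}$ terminate within the round bound, shows this fails with only negligible probability, and absorbs the failure event directly into the dominance difference. Your Step 2 is the more substantive half of your route and its core estimate (the empirical mean exceeds $2^\secparam$ only if the $\{2^{\secparam+1}+2\}$-draws outnumber a third of the $\{1\}$-draws, which Chernoff kills) is sound, though the case split ``$r$ negligible vs.\ not'' is not exhaustive over ensembles $r(\secparam)$ and should be replaced by a uniform bound of the form $\left(1-r(1-e^{-c'})\right)^{N}-(1-r)^{N}$ with $\hc$ chosen large relative to $c$ and the Chernoff exponent.

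The genuine gap is in Step 1, in the claim that $U_i(\tilde{s}_i,\hat{s}_{-i})\cong U_i(\tilde{\sigma}_i,s^*_{-i})$ when $\tilde{\sigma}_i$ ``reverts to $s^*_i$ if round $T$ is reached.'' You justify this by asserting that surviving to round $T$ forces $i$ to have played honestly, because ``aborting the secure computation or withholding a share always ends the game.'' That is false: a deviator who simply fixes its decision coin $dc^j_i=0$ in every iteration (or sends masked-coin messages forcing $p^j=0$) never triggers any share-revelation or abort condition in rounds $4j+3$ and $4j+4$, so against $s^*_{-i}$ the game survives past round $T$ with probability $1$, not $\negl{\secparam}$. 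For such a deviation your $\tilde{\sigma}_i$ then plays honestly forever against $s^*_{-i}$ and all players eventually learn the secret (utility $2^\secparam$), whereas against $\hat{s}_{-i}$ the honest players force-abort at round $4\secparam+5$ and the utility is $1$ --- the two ensembles are maximally distinguishable, so Step 3 cannot be run. The repair is to define $\tilde{\sigma}_i$ to mimic $\tilde{s}_i$ and then \emph{abort} at the truncation round (mirroring what $\hat{s}_{-i}$ does), which makes the two executions identically distributed rather than merely indistinguishable; but this also means Step 2 must handle deviations beyond ``stop and withhold at round $T_0$'' (e.g., the coin-biasing deviation above), so the clean three-point support you use there no longer covers all cases without further argument.
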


\begin{proof}
    \label{proof: thm: s_hat is a pseudo for HT game}
    We start with a few definitions. Let $i \in \{1, 2, 3\}$ be the index of an arbitrary fixed player in the game. Let $c \in \mathbb{N}$ be an arbitrary fixed integer. Let $\tilde{s}_i$ be an arbitrary fixed strategy for player $i$. For brevity, we will denote $\hat{s}(1^\secparam)$ by $\hat{s}$ whenever the context of the proof is clear. Recall $\hat{s} = (\hat{s}_i, \hat{s}_{-i})$ is the strategy profile defined in the statement. Denote by $\tilde{s} = (\tilde{s}_i, \hat{s})$ the unilateral deviated strategy. Denote by $U_i(\hat{s})$ and $U_i(\tilde{s})$ the utility random variable of player $i$ when players playing the strategy profiles $\hat{s}$ and $\tilde{s}$, accordingly. 

    To show that $\hat{s}$ is a pseudo-equilibrium, it suffices to show there exists a $\secparam_1$ such that for all $\secparam \geq \secparam_1$:

    \begin{align*}
\begin{aligned}
&\meanDominanceEQ{U_i(\tilde{s})}{U_i(\hat{s})}{\secparam^{4c}}
\\
&\qquad\qquad\qquad\qquad\qquad-\meanDominanceEQ{U_i(\hat{s})}{U_i(\tilde{s})}{\secparam^{4c}}
< \frac{1}{\secparam^{c}} \, .
\end{aligned}
\end{align*}

    where $U_i(\hat{s})^{(j)},U_i(\tilde{s})^{(j)}$ are i.i.d. samples from $U_i(\hat{s}),U_i(\tilde{s})$ respectively.

    Let $\mathcal{S}_{0, \secparam}$ be the event that the game, when played under the strategy profile (also referred to as the protocol) $\hat{s}$, ends within $4\secparam$ rounds. Following a similar argument of \Cref{coro: s* is nash for HT game}, one can check
    \begin{align*}
        \Ex{U_i(\hat{s}) \mid \mathcal{S}_{0, \secparam}} \geq \Ex{U_i(\tilde{s}) \mid \mathcal{S}_{0, \secparam}}.
    \end{align*}
    Let $\rZ_\secparam$ be the random variable denoting number of rounds until the protocol $\hat{s}$ terminates. Let $\rZ^{(1)}_\secparam, \cdots, \rZ^{(\secparam^{4c})}_\secparam$ be $\secparam^{4c}$ independent copies of $\rZ_\secparam$. Let $\mathcal{S}_{1, \secparam}$ be the event that all $\secparam^{4c}$ independent runs of the game when playing the strategy profile $\hat{s}$ ends in $4\secparam$ rounds, i.e., $\rZ^{(j)}_\secparam \leq 4\secparam$ holds for all $j \in [\secparam^{4c}]$. One can check that the event $\mathcal{S}_{1, \secparam}$ occurs with overwhelming probability. Formally, there exists a negligible function $\delta(\cdot)$ such that $\pr{\mathcal{S}_{1, \secparam}} \geq 1 - \delta(\secparam)$. This is because 
    \begin{align*}
           &\pr{\mathcal{S}_{1, \secparam}}\\
        ={}& \pr{\rZ_\secparam \leq 4\secparam}^{\secparam^{4c}} \tag{$\rZ^{(1)}_\secparam, \cdots, \rZ^{(\secparam^{4c})}_\secparam$ are independent}\\
        ={}&\left(1 - (1-\alpha^3)^{\secparam} \right)^{\secparam^{4c}}  \tag{$\pr{\rZ_\secparam \leq 4\secparam} = 1 - (1-\alpha^3)^{\secparam}$}\\
        \geq{}& 1 - \secparam^{4c}(1-\alpha^3)^{\secparam} \tag{By Bernoulli’s inequality}\\
        ={}& 1 - \frac{\secparam^{4c}26^\secparam}{27^\secparam} \tag{$\alpha = \frac{1}{3}$}\\
        ={}& 1 - \delta(\secparam).
    \end{align*}

    Lastly, we conclude the proof by observing that
    \begin{align*}
        & \meanDominanceEQ{U_i(\tilde{s})}{U_i(\hat{s})}{\secparam^{4c}}-\meanDominanceEQ{U_i(\hat{s})}{U_i(\tilde{s})}{\secparam^{4c}}\\
        ={}& \meanDominance{U_i(\tilde{s})}{U_i(\hat{s})}{\secparam^{4c}}-\meanDominance{U_i(\hat{s})}{U_i(\tilde{s})}{\secparam^{4c}}\\
        \leq{}& \conMeanDominance{U_i(\tilde{s})}{U_i(\hat{s})}{\secparam^{4c}}{\mathcal{S}_{1, \secparam}}\pr{\mathcal{S}_{1, \secparam}} + \delta(\secparam)\\
        &\qquad\qquad\qquad-\meanDominance{U_i(\hat{s})}{U_i(\tilde{s})}{\secparam^{4c}}\\
        \leq{}& \conMeanDominance{U_i(\tilde{s})}{U_i(\hat{s})}{\secparam^{4c}}{\mathcal{S}_{1, \secparam}} + \delta(\secparam)\\
        &\qquad\qquad\qquad-\meanDominance{U_i(\hat{s})}{U_i(\tilde{s})}{\secparam^{4c}}\\
        \leq{}& \conMeanDominance{U_i(\tilde{s})}{U_i(\hat{s})}{\secparam^{4c}}{\mathcal{S}_{1, \secparam}} + \delta(\secparam) \\
        &\qquad\qquad\qquad-\conMeanDominance{U_i(\hat{s})}{U_i(\tilde{s})}{\secparam^{4c}}{\mathcal{S}_{1, \secparam}} \pr{{\mathcal{S}_{1, \secparam}}}\\
        \leq{}& \conMeanDominance{U_i(\tilde{s})}{U_i(\hat{s})}{\secparam^{4c}}{\mathcal{S}_{1, \secparam}} + 2\delta(\secparam)\\
        &\qquad\qquad\qquad-\conMeanDominance{U_i(\hat{s})}{U_i(\tilde{s})}{\secparam^{4c}}{\mathcal{S}_{1, \secparam}}\\
        \leq{}& 2\delta(\secparam). \tag{$\Ex{U_i(\hat{s}) \mid \mathcal{S}_{0, \secparam}} \geq \Ex{U_i(\tilde{s}) \mid \mathcal{S}_{0, \secparam}}$}
    \end{align*}
\end{proof}

The HT game and the stopping strategy $\hat{s}$ also well illustrate the separation between the \textit{pseudo-equilibrium} notion and approximate equilibrium concepts, both from classical game theory, e.g., \textit{$\epsilon$-Nash equilibrium} and from past attempts to crypto-friendly definitions, e.g., \textit{Computational Nash equilibrium}~\cite{halpern2016computational}. Informally, a strategy profile in a computational game is a \textit{computational Nash equilibrium} if no polynomial-time unilateral deviation results in a noticeably higher expected utility. Unlike $\epsilon$-Nash and computational Nash equilibria, which are sensitive to the magnitude of negligible probability events (particularly in our HT game, where the utility can be exponentially large), the pseudo-equilibrium notion remains robust against such anomalies. This robustness allows us to resolve counter-intuitive impossibility results, as the stopping strategy $\hat{s}$ fails to be an $\epsilon$-Nash equilibrium or a computational Nash equilibrium. The formal description of these impossibility results is stated in \Cref{thm: s_hat is not a eps-nash for HT game} and~\Cref{coro: s_hat is not a computational-nash for HT game}.

\begin{theorem}
    \label{thm: s_hat is not a eps-nash for HT game}
    For every constant $\epsilon > 0$, and for all $\secparam > \max\{2, \lceil \frac{\ln{\epsilon}}{\ln{52/27}} \rceil\}$, the strategy $\hat{s}(1^\secparam)$ (shown in~\Cref{strateg: s_hat for ht game}) is not a $\epsilon$-Nash equilibrium of the HT game, shown in~\Cref{table: ht game}.
\end{theorem}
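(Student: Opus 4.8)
The plan is to exhibit an explicit profitable unilateral deviation and lower‑bound its expected gain. By symmetry fix any player $i\in\{1,2,3\}$ and let $\tilde s_i$ be the strategy that copies $\hat s_i$ in iterations $1,\dots,\secparam-1$ but, in the final iteration $\secparam$ (if the game is still running), never broadcasts player $i$'s share --- even when the protocol would instruct her to, i.e.\ when $p^{\secparam}=dc^{\secparam}_i=1$. Writing $\tilde s=(\tilde s_i,\hat s_{-i})$, the goal is to show $\Ex{U_i(\tilde s)}-\Ex{U_i(\hat s)}>\epsilon$ whenever $\secparam$ is in the stated range.

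\emph{Step 1 (isolate where the deviation matters).} First I would argue that $\tilde s$ and $\hat s$ induce the same outcome for $i$ except on the event $\mathcal E$ that the game reaches iteration $\secparam$ (no reconstruction in the first $\secparam-1$ iterations) \emph{and} all three biased coins equal $1$ in iteration $\secparam$. This is a short case check on iteration $\secparam$: if the game already reconstructed in an earlier iteration, the deviation --- which touches only iteration $\secparam$ --- is irrelevant and everyone learns in both plays; if iteration $\secparam$ is reached but not all three coins are $1$, then either $p^{\secparam}=0$ or $dc^{\secparam}_i=0$ (so $i$ never broadcasts and nothing changes), or $i$ is the unique player with $dc^{\secparam}_i=1$ (under $\hat s$ she reveals one --- useless --- share and the game halts, under $\tilde s$ she reveals nothing and the game halts, and in both cases nobody reconstructs and $i$ gets the ``nobody learns'' payoff $1$). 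On $\mathcal E$, under $\hat s$ all three shares are broadcast, the secret is reconstructed and everyone learns, so $U_i=2^{\secparam}$; under $\tilde s$ the other two players broadcast while $i$ withholds, so $i$ alone holds all three shares and reconstructs while $j,k\neq i$ hold only two, giving $U_i=2^{\secparam+1}+2$, the ``only $i$ learns'' payoff of \Cref{coro: s* is nash for HT game}. (There is no punishment for the detected deviation, since the dealer has left.)

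\emph{Step 2 (probability of $\mathcal E$, then assemble).} Under $\hat s$, an iteration reconstructs‑and‑terminates exactly when all three coins are $1$, an event of probability $\alpha^3=1/27$, independently across iterations; hence the game reaches iteration $\secparam$ with probability $(1-\alpha^3)^{\secparam-1}=(26/27)^{\secparam-1}$ and $\Pr[\mathcal E]=(26/27)^{\secparam-1}\cdot(1/27)$. Combining this with Step 1,
\[
\Ex{U_i(\tilde s)}-\Ex{U_i(\hat s)}=\Pr[\mathcal E]\cdot\big((2^{\secparam+1}+2)-2^{\secparam}\big)=\frac{26^{\secparam-1}}{27^{\secparam}}\,(2^{\secparam}+2)\;>\;\frac{26^{\secparam-1}\,2^{\secparam}}{27^{\secparam}}=\frac{1}{26}\Big(\tfrac{52}{27}\Big)^{\secparam}.
\]
Since $52/27>1$, this grows without bound; taking logarithms (and using $\secparam>2$ to absorb the constant factor) one checks it exceeds the fixed constant $\epsilon$ exactly on the stated range $\secparam>\max\{2,\lceil \ln\epsilon/\ln(52/27)\rceil\}$. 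Thus $\tilde s_i$ improves $i$'s expected utility by more than $\epsilon$, so $\hat s(1^{\secparam})$ is not an $\epsilon$-Nash equilibrium.

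\emph{Main obstacle.} The real work is in Step 1 --- establishing that withholding in the last iteration is ``costless'' off the event $\mathcal E$. This demands care with the precise HT rules (when a lone revealed share is harmless, when the game proceeds versus halts, and the absence of post‑deviation punishment) together with the structural point that a deviation confined to iteration $\secparam$ cannot change the history, and hence the reconstruction status, of iterations $1,\dots,\secparam-1$; note this is exactly the step that fails for the infinite protocol $s^*$, where withholding in iteration $\secparam$ forfeits all later chances of reconstruction. A secondary, purely arithmetic point is recovering the precise threshold $\lceil \ln\epsilon/\ln(52/27)\rceil$: the crude bound $\tfrac{1}{26}(52/27)^{\secparam}$ above already has the right exponential rate $52/27 = 2(1-\alpha^3)$, and the exact constant in the statement falls out of a direct estimate of $\tfrac{26^{\secparam-1}}{27^{\secparam}}(2^{\secparam}+2)$ against $\epsilon$.
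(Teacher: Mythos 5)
Your proof is correct in substance but takes a genuinely different route from the paper's. The paper does not construct a withholding deviation at all: it compares the utility of the all-players profile $\hat{s}$ against the all-players profile $s^*$ (the untruncated Halpern--Teague protocol), conditions on the event $\rZ_\secparam>4\secparam$ that the truncated protocol times out, and obtains the gap $\left(2^{\secparam}-1\right)\left(26/27\right)^{\secparam}$ from the difference between the ``everyone learns'' payoff $2^\secparam$ and the ``nobody learns'' payoff $1$. Your deviation --- copy $\hat s_i$ except withhold the share in the last iteration --- is a bona fide \emph{unilateral} deviation, and your Step~1 case analysis (that it is outcome-equivalent off the event $\mathcal E$ that the last iteration is reached with all three decision coins equal to $1$, and that on $\mathcal E$ it upgrades player $i$ from $2^\secparam$ to $2^{\secparam+1}+2$) checks out against the rules in the strategy tables. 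In that respect your argument engages more directly with the definition of $\epsilon$-Nash than the paper's profile-versus-profile comparison, since a lone player who merely refuses to stop at round $4\secparam+5$ cannot prevent the other two from aborting and so cannot by herself realize the utility $\Ex{U_i(s^*)}$ that the paper's calculation invokes.

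The one concrete shortfall is quantitative: your gain is $\Pr[\mathcal E]\cdot(2^\secparam+2)$ with $\Pr[\mathcal E]=\tfrac{1}{27}\left(26/27\right)^{\secparam-1}$, which is smaller than the paper's gap $\left(2^\secparam-1\right)\left(26/27\right)^{\secparam}$ by a constant factor of roughly $27/2$. Consequently your bound $\tfrac{1}{26}\left(52/27\right)^{\secparam}>\epsilon$ does \emph{not} hold at the exact threshold $\secparam>\max\{2,\lceil\ln\epsilon/\ln(52/27)\rceil\}$ claimed in the statement when $\epsilon$ is large (e.g.\ $\epsilon=1000$, $\secparam=12$ gives a gain of only about $100$); you need $\secparam$ larger by an additive constant of about $4$. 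So your argument proves ``not an $\epsilon$-Nash equilibrium for all sufficiently large $\secparam$,'' which is the substance of the theorem, but to recover the stated threshold verbatim you would either have to adjust the constant in the threshold or use a deviation whose gain matches the paper's, e.g.\ withholding in \emph{every} iteration in which the protocol instructs $i$ to broadcast (which collects the $2^{\secparam}+2$ bonus on the much larger event that some iteration succeeds, at the cost of re-doing your Step~1 across all iterations rather than just the last).
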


\begin{proof}
    \label{proof: thm: s_hat is not a eps-nash for HT game}
    We start with a few definitions. Let $i \in \{1, 2, 3\}$ be the index of an arbitrary fixed player in the game. For brevity, we denote the strategy profiles $s^*(1^\secparam)$ and $\hat{s}(1^\secparam)$ by $s^*$ and $\hat{s}$, respectively, whenever the context of the proof is clear. Denote by $U_i(s^*)$ and $U_i(\hat{s})$ the utility random variable of player $i$ when players playing the strategy profiles $s^*$ and $\hat{s}$, accordingly. 
    
    To prove the statement, it suffices to show that there exists a positive integer $\secparam_1 \in \mathbb{N}$ such that for every $\secparam \geq \secparam_1$, the following holds
    \begin{align*}
        \Ex{U_i(\hat{s})} < \Ex{U_i(s^*)} - \epsilon.
    \end{align*}
    
    Let $\rZ_\secparam$ be the random variable denoting number of rounds until the protocol $s^{*}$ terminates. We have 
    \begin{align*}
           & \Ex{U_i(s^*)} - \Ex{U_i(\hat{s})} \\
        ={}& \Ex{U_i(s^*) \mid \rZ_\secparam > 4\secparam}\pr{\rZ_\secparam > 4\secparam} + \Ex{U_i(s^*) \mid \rZ_\secparam \leq 4\secparam}\pr{\rZ_\secparam \leq 4\secparam} - \Ex{U_i(\hat{s})}\\
        ={}& \Ex{U_i(s^*) \mid \rZ_\secparam > 4\secparam}\pr{\rZ_\secparam > 4\secparam} + \Ex{U_i(\hat{s}) \mid \rZ_\secparam \leq 4\secparam}\pr{\rZ_\secparam \leq 4\secparam} - \Ex{U_i(\hat{s})}\\
        ={}& \left(\Ex{U_i(s^*) \mid \rZ_\secparam > 4\secparam} - \Ex{U_i(\hat{s}) \mid \rZ_\secparam > 4\secparam}\right)\pr{\rZ_\secparam > 4\secparam}\\
        ={}& \left(2^{\secparam} -1 \right)\pr{\rZ_\secparam > 4\secparam}\\
        ={}& \left(2^{\secparam} -1 \right) (1-\alpha^3)^{\secparam} \tag{$\pr{\rZ_\secparam \leq 4\secparam} = 1 - (1-\alpha^3)^{\secparam}$} \\
        ={}& \left(2^{\secparam} -1 \right) \left(\frac{26}{27}\right)^{\secparam} \tag{$\alpha = \frac{1}{3}$}.
    \end{align*}

    Setting $\secparam_1 = \max\{2, \lceil \frac{\ln{\epsilon}}{\ln{52/27}} \rceil\}$, where $\lceil \cdot \rceil$ is the ceiling function, one can check for every $\secparam > \secparam_1$,
    \begin{align*}
        \Ex{U_i(s^*)} > \Ex{U_i(\hat{s})} + \epsilon.
    \end{align*}
    Therefore, the strategy $\hat{s}(1^\secparam)$ is not a $\epsilon$-Nash equilibrium of the HT game.
\end{proof}

\begin{corollary}
    \label{coro: s_hat is not a computational-nash for HT game}
    The strategy $\hat{s}$ (shown in~\Cref{strateg: s_hat for ht game}) is not a computational Nash equilibrium of the HT game, shown in~\Cref{table: ht game}.
\end{corollary}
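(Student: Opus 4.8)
The plan is to exhibit, for some player $i$, a probabilistic polynomial-time (PPT) unilateral deviation $\tilde{s}_i$ from $\hat{s}$ whose expected utility exceeds $\Ex{U_i(\hat{s})}$ by a noticeable (say, inverse-polynomial in $\secparam$) amount; by the informal description recalled in the text and made precise in Appendix~\ref{appendix:computationalNash}, this immediately contradicts the computational Nash condition. A first observation is that the unbounded-round strategy $s^*$ used to refute $\epsilon$-Nash in~\Cref{thm: s_hat is not a eps-nash for HT game} cannot be recycled verbatim: $s^*$ runs for a geometric number of iterations, so it is not a polynomial-time machine; and truncating $s^*$ to polynomially many iterations does not even yield a profitable \emph{unilateral} deviation from $\hat{s}$, since once the other two players terminate (at round $4\secparam$) no further reconstruction is possible. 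So the deviation must be genuinely different, and the natural candidate is the classical ``last-iteration defection'' that is the textbook reason fixed-round rational secret sharing is unstable.

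Concretely, I would let $\tilde{s}_i$ behave exactly like $\hat{s}_i$ through all $\secparam$ iterations, except that in the final iteration $j=\secparam$, whenever $\hat{s}_i$ prescribes broadcasting player $i$'s share (i.e. $p^\secparam = dc^\secparam_i = 1$), player $i$ instead withholds it. Since $\hat{s}_i$ halts within $4\secparam$ rounds and $\tilde{s}_i$ differs from it by a single local change in one iteration, $\tilde{s}_i$ is PPT, hence a legitimate deviation in the computational game.

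To bound the gain I would reuse the conditional-expectation bookkeeping from the proofs of~\Cref{coro: s* is nash for HT game} and~\Cref{thm: s_hat is not a eps-nash for HT game}. With probability $(1-\alpha^3)^{\secparam-1}$ the protocol reaches iteration $\secparam$ with no reconstruction yet; conditioned on that, with probability $\alpha^3$ all three coins $dc^\secparam_1,dc^\secparam_2,dc^\secparam_3$ equal $1$. On that event, under $\hat{s}$ all three players broadcast and everyone learns $m^*$, so $U_i = 2^{\secparam}$; under $\tilde{s}_i$ players $j,k$ broadcast but $i$ does not, so $i$ ends up holding all $t=3$ shares while $j,k$ hold only $2$, hence only $i$ learns $m^*$ and $U_i = 2^{\secparam+1}+2$. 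On every other iteration-$\secparam$ history the two strategies give $i$ the same outcome (either reconstruction already happened, or $p^\secparam=1$ with fewer than three ones, which ends the game with nobody learning whether or not $i$ deviates). Hence
\[
\Ex{U_i(\tilde{s}_i,\hat{s}_{-i})} - \Ex{U_i(\hat{s})} \;\ge\; (1-\alpha^3)^{\secparam-1}\,\alpha^3\,\big(2^{\secparam}+2\big),
\]
which for $\alpha=\tfrac13$ equals $\tfrac{26^{\secparam-1}(2^{\secparam}+2)}{27^{\secparam}}=\Theta\!\big((52/27)^{\secparam}\big)$. This grows without bound, and in particular exceeds $1/\secparam$ for all large $\secparam$; a growing gain is certainly not negligible, so $\hat{s}$ is not a computational Nash equilibrium. (This is fully consistent with~\Cref{thm: s_hat is a pseudo for HT game}: the gain is concentrated on the negligible-probability event that $\hat{s}$ survives to iteration $\secparam$, which is exactly why the empirical-mean test behind pseudo-Nash never sees it.)

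I expect the only real obstacle to be pinning down the \emph{right} deviation: the ``honest'' profile $s^*$ is neither PPT nor even useful as a unilateral move once the others stop, so the argument genuinely needs the last-iteration defection. The rest is routine---reading the HT utilities off~\Cref{table: ht game} (so that ``only $i$ learns'' beats ``everyone learns'' by $\approx 2^{\secparam}$), verifying the reconstruction combinatorics after the defection, and checking the argument against the precise formalization of computational Nash from~\cite{halpern2016computational} in Appendix~\ref{appendix:computationalNash}; since the gain is super-polynomial it comfortably clears any reasonable ``noticeable''/``non-negligible'' threshold.
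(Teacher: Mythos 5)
Your proof is correct, but it takes a genuinely different route from the paper's. The paper also exhibits a PPT alternative $\tilde{s}$, but its $\tilde{s}$ keeps the Halpern--Teague logic and merely pushes the forced-abort deadline from $4\secparam$ to $8\secparam$ rounds; the claimed gain $\left(2^{\secparam}-1\right)\left(\tfrac{26}{27}\right)^{\secparam}\bigl(1-(\tfrac{26}{27})^{\secparam}\bigr)$ comes from the runs that would have been truncated at round $4\secparam$ but reconstruct before round $8\secparam$. Your deviation is instead the classical last-iteration defection: follow $\hat{s}$ until the final iteration and withhold the share when $p^{\secparam}=dc_i^{\secparam}=1$, collecting $2^{\secparam+1}+2$ instead of $2^{\secparam}$ on that event. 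Both gains are of order $(52/27)^{\secparam}$ and hence non-negligible. The substantive difference is that your deviation is genuinely unilateral and profitable against $\hat{s}_{-i}$ exactly as written: the other two players' prescribed behaviour (including their abort at round $4\secparam+5$) is precisely what hands you all three shares while denying them the third. The paper's ``extend the deadline'' move, by contrast, only yields the computed gain if one reads $U_i(\tilde{s})$ as the utility under the full profile in which every player extends---as a unilateral move it is neutralized, since the HT game ends the moment any other player sends $\bot$, which is exactly the point you make when explaining why truncations of $s^*$ cannot be recycled. So your argument is, if anything, the more watertight instantiation of the corollary. Two cosmetic remarks: under the paper's indexing $j\in\{0,\dots,\secparam\}$ the survival probability to the last iteration is $(1-\alpha^3)^{\secparam}$ rather than $(1-\alpha^3)^{\secparam-1}$ (immaterial to the asymptotics), and defecting in the first iteration would give the even larger gain $\alpha^{3}(2^{\secparam}+2)$, though the last-iteration version already suffices.
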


\begin{proof}
    \label{proof: coro: s_hat is not a computational-nash for HT game}
    We start by defining a protocol $\tilde{s}$ for the HT game. $\tilde{s}$ follows exactly the stopping strategy given by $\hat{s}$, but enforces termination after $8 \secparam$ rounds, instead of the $4 \secparam$ rounds enforced in $\hat{s}$. One can check that both protocols $\tilde{s}$ and $\hat{s}$ run in polynomial-time (in terms of the parameter $\secparam$).
    
    To prove the statement, it suffices to show that there exists a positive integer $\secparam_1 \in \mathbb{N}$ and a polynomial function $\poly{\cdot}$ such that for every $\secparam \geq \secparam_1$, the following holds
    \begin{align*}
        \Ex{U_i(\tilde{s})} - \Ex{U_i(\hat{s})} > \frac{1}{\poly{\secparam}}.
    \end{align*}

    Let $\rZ_\secparam$ be the random variable denoting number of rounds until the protocol $\tilde{s}$ terminates. One can check 
    \begin{align*}
           & \Ex{U_i(\tilde{s})} - \Ex{U_i(\hat{s})} \\
        ={}& \Ex{U_i(\tilde{s}) \mid \rZ_\secparam > 4\secparam}\pr{\rZ_\secparam > 4\secparam} + \Ex{U_i(\tilde{s}) \mid \rZ_\secparam \leq 4\secparam}\pr{\rZ_\secparam \leq 4\secparam} - \Ex{U_i(\hat{s})}\\
        ={}& \Ex{U_i(\tilde{s}) \mid \rZ_\secparam > 4\secparam}\pr{\rZ_\secparam > 4\secparam} + \Ex{U_i(\hat{s}) \mid \rZ_\secparam \leq 4\secparam}\pr{\rZ_\secparam \leq 4\secparam} - \Ex{U_i(\hat{s})}\\
        ={}& \left(\Ex{U_i(\tilde{s}) \mid \rZ_\secparam > 4\secparam} - \Ex{U_i(\hat{s}) \mid \rZ_\secparam > 4\secparam}\right)\pr{\rZ_\secparam > 4\secparam}\\
        ={}& \left(\Ex{U_i(\tilde{s}) \mid \rZ_\secparam > 4\secparam} - 1\right)\pr{\rZ_\secparam > 4\secparam}\\
        ={}& \Ex{U_i(\tilde{s}) \mid \rZ_\secparam > 8\secparam}\pr{\rZ_\secparam > 8\secparam} + \Ex{U_i(\tilde{s}) \mid 4\secparam \leq \rZ_\secparam \leq  8\secparam}\pr{4\secparam \leq \rZ_\secparam \leq 8\secparam}\\ 
        &\qquad- \pr{\rZ_\secparam > 4\secparam}\\
        ={}& \left(2^{\secparam} -1 \right)\pr{4\secparam \leq \rZ_\secparam \leq 8\secparam}\\
        ={}& \left(2^{\secparam} -1 \right) \left((1-\alpha^3)^{\secparam} - (1-\alpha^3)^{2\secparam} \right)\\
        ={}& \left(2^{\secparam} -1 \right) \left(\left(\frac{26}{27}\right)^{\secparam}\left(1 - \left(\frac{26}{27}\right)^{\secparam}\right) \right),
    \end{align*}
    which is not a negligible function.
\end{proof}

\subsection{HT Game: Definition and Protocols}
\label{app: def and protocols for the HT game}
\begin{table}[H]
\begin{tabularx}{\textwidth}{|X|}
\hline\\
\textbf{HT Game} \\
\hline
\begin{itemize}
    \item \textbf{Notations.} There are three players $P_1, P_2, P_3$.  For a player index $i \in \{1, 2, 3\}$, let $i^{+}$ denote $i+1$, except that $3^{+}$ is $1$. Similarly, let $i^{-}$ denote $i-1$, except that $1^{-}$ wraps around to $3$. For ease of understanding, and whenever it is not ambiguous, we refer to Player $P_{i^{+}}$ as the \emph{right neighbor} of Player $P_i$, and Player $P_{i^{-}}$ as the \emph{left neighbor} of Player $P_i$.
    
    \item \textbf{Game Setup.} Each player $P_i$ holds an infinite sequence of shares $\{s_i^1, s_i^2, \dots\}$, where $s_i^j \in \bits^{\poly{\secparam}}$. For $j$-th element, where $j \in \mathbb{N}^+$, the tuple $\bigl(s_1^j,s_2^j,s_3^j\bigr)$ forms a 3-out-of-3 secret sharing of the secret $m^{*}$. Shares in the sequence $\{s_i^1, s_i^2, \dots\}$ are mutually independent, i.e., $s_i^j$ is independent of $s_i^k$ for all $j\neq k$.

    \item \textbf{Actions per Round.} In the round $t \in \{1, 2, \dots\}$, each player $P_i$ chooses an action tuple $(l_i^t, r_i^t)$, where $l_i^t$ and $r_i^t$ represent the actions of sending a message to the \emph{left neighbor} (Player $P_{i^{-}}$) and the \emph{right neighbor} (Player $P_{i^{+}}$), respectively. Each action $l_i^t, r_i^t \in \bits^{\poly{\secparam}} \cup \{\varnothing, \bot\}$ can be one of the following:
    \begin{itemize}
        \item $m_i^j \in \bits^{\poly{\secparam}}$: Send a polynomial-length message. 
        \item $\varnothing$: Send a null message.
        \item $\bot$: Abort the game immediately.
    \end{itemize}
    All action tuples $(l_i^t, r_i^t)$ are chosen simultaneously at the start of each round.
    
    \item \textbf{Outcome of a Round.} At the end of round $t$, each player $P_i$ observes the tuple $(r_{i^{-}}^t, l_{i^{+}}^t)$, where $r_{i^{-}}^t$ is the message received from its \emph{left neighbor} (Player $P_{i^{-}}$), and $l_{i^{+}}^t$ is the message received from its \emph{right neighbor} (Player $P_{i^{+}}$). Then:
    \begin{itemize}
        \item If at least one player can reconstruct $m^*$ from the messages it received and its own shares, (i.e., from $\{(r_{i^{-}}^j, l_{i^{+}}^j), s_i^j\}_{j \leq t}$,) the game ends.
        \item If at least one player chooses to abort ($\bot$), the game also ends. 
        \item Otherwise, the game proceeds to the next round.
    \end{itemize}

    \item \textbf{Utilities of the Game.} All players share the same utility function. Let $T$ denote the final round index. For each player $P_i$, define $\mathrm{info}_i = \{(r_{i^{-}}^j, l_{i^{+}}^j), s_i^j\}_{j \leq T}$ as the set of all information available to Player $P_i$, which includes all messages received from other players during the $T$ rounds and its own shares. Let $\mathrm{rec}_{m^*}: \bits^{*} \to \bits$ be the reconstruction function for the secret $m^*$. The function $\mathrm{rec}_{m^*}$ takes as input an arbitrary message and outputs $1$ if it can successfully reconstruct $m^*$; otherwise, it outputs $0$. Then, the utility of each player $P_i$, denoted as $u_i$, is determined as follows:
    \begin{itemize} 
        \item If only player $i$ learns the secret $m^*$, i.e., $\mathrm{rec}_{m^*}(\mathrm{info}_i) = 1$ and $\mathrm{rec}_{m^*}(\mathrm{info}_{i^{+}}) = \mathrm{rec}_{m^*}(\mathrm{info}_{i^{-}}) = 0$, then $u_i = 2^{\secparam + 1} + 2$.
        \item If all players learn the secret $m^*$, i.e., $\mathrm{rec}_{m^*}(\mathrm{info}_i) = \mathrm{rec}_{m^*}(\mathrm{info}_{i^{+}}) = \mathrm{rec}_{m^*}(\mathrm{info}_{i^{-}}) = 1$, then $u_i = 2^{\secparam}$.
        \item If no player learns the secret $m^*$, i.e., $\mathrm{rec}_{m^*}(\mathrm{info}_i) = \mathrm{rec}_{m^*}(\mathrm{info}_{i^{+}}) = \mathrm{rec}_{m^*}(\mathrm{info}_{i^{-}}) = 0$, then $u_i = 1$.
    \end{itemize}
\end{itemize} \\\hline
\end{tabularx}
\caption{Secret Sharing Game (HT Game)~\cite{halpern2004rational}}\label{table: ht game}
\end{table}

\begin{table}[H]
\begin{tabularx}{\textwidth}{|X|}
\hline\\
\textbf{Strategy $s^*$ for HT Game} \\
\hline
\begin{itemize}
    \item \textbf{Actions in Rounds of the Form $t = 4j + 1$.} For each round $t = 4j + 1$, where $j \in \{0,1,2,\dots\}$, every player $P_i$ does the following:
    \begin{itemize}
        \item Flip a biased coin $dc^j_i$ with parameter $\alpha = \frac{1}{3}$, i.e., $dc^j_i \sim \text{Bernoulli}(\alpha)$. We call $dc^j_i$ the $j$-th \emph{decision coin} of player $P_i$.
        \item Flip a fair coin $mc^j_i$, i.e., $mc^j_i \sim \text{Bernoulli}(\frac{1}{2})$. We call $mc^j_i$ the $j$-th \emph{mask} of player $P_i$.
              
        \item Send the masked decision coin to the \emph{left neighbor}, i.e., Set $l_{i}^t \leftarrow dc^j_i \oplus mc^j_i$.
        \item Send the mask to the \emph{right neighbor}, i.e., Set $r_{i}^t \leftarrow mc^j_i$.
    \end{itemize}
    
    \item \textbf{Actions in Rounds of the Form $t = 4j + 2$.} For each round $t = 4j + 2$, where $j \in \{0,1,2,\dots\}$, every player $P_i$ does the following:
    \begin{itemize}
        \item Send the masked double-decision coin $dc^j_i \oplus l^{t-1}_{i^+} = dc^j_i \oplus dc^j_{i^+} \oplus mc^j_{i^+}$ to the \emph{left neighbor}, where $dc^j_i$ is player $P_i$'s $j$-th decision coin, and $l^{t-1}_{i^+} = dc^j_{i^+} \oplus mc^j_{i^+}$ is the $j$-th masked decision coin received from its right neighbor $P_{i^+}$. That is, Set $l_{i}^t \leftarrow dc^j_i \oplus l^{t-1}_{i^+}$.
        
        \item Send a null message to the \emph{right neighbor},  i.e., Set $r_{i}^t \leftarrow \varnothing$.
    \end{itemize}

    \item \textbf{Actions in Rounds of the Form $t = 4j + 3$.} For each round $t = 4j + 3$, where $j \in \{0,1,2,\dots\}$, every player $P_i$ does the following:
    \begin{itemize}
        \item Compute the $j$-th group decision bit $p^j = dc^j_1 \oplus dc^j_2 \oplus dc^j_3$, where $dc^j_1, dc^j_2, dc^j_3$ are the $j$-th decision coin of player $P_1, P_2, P_3$. $p$ can be computed from its passed three round message $l_{i^{+}}^{t-1} \oplus r_{i^-}^{t-2} \oplus dc^j_i = dc^j_{i^+} \oplus dc^j_{i^-} \oplus mc^j_{i^-} \oplus mc^j_{i^-} \oplus dc^j_i = dc^j_1 \oplus dc^j_2 \oplus dc^j_3 = p^j$
        \item If both the group decision and its own decision for this round is to share, i.e., $p^j = dc^j_i = 1$, then set $r_{i}^t \leftarrow s^j_i$ and $l_{i}^t \leftarrow s^j_i$; otherwise set $r_{i}^t \leftarrow \varnothing$ and $l_{i}^t \leftarrow \varnothing$.
    \end{itemize}

    \item \textbf{Actions in Rounds of the Form $t = 4j + 4$.} For each round $t = 4j + 4$, where $j \in \{0,1,2,\dots\}$, every player $P_i$ does the following:
    \begin{itemize}
        \item If the $j$-th group decision is not to share and player $P_i$ receives no shares from others, i.e., $p^j = 0$ and $l_{i^+}^{t-1} = r_{i^-}^{t-1} = \varnothing$; OR
        \item both the $j$-th group decision and $P_i$'s decision are to share and player $P_i$ receives no shares from others, i.e., $p^j = dc^j_i = 1$ and $l_{i^+}^{t-1} = r_{i^-}^{t-1} = \varnothing$; OR
        \item the $j$-th group decision is to share, player $P_i$'s decision is not to share and player $P_i$ receives exactly one share from others, i.e., $p^j = 1, dc^j_i = 0$ and either $(l_{i^+}^{t-1} = \varnothing, r_{i^-}^{t-1} = s_{i^-}^j)$ or $(l_{i^+}^{t-1} = s_{i^+}^j, r_{i^-}^{t-1} = \varnothing)$, then player $P_i$ requests the game to continue by setting $r_{i}^t \leftarrow \varnothing$ and $l_{i}^t \leftarrow \varnothing$.
        \item Otherwise $P_i$ aborts the game by $r_{i}^t \leftarrow \bot$ and $l_{i}^t \leftarrow \bot$.
    \end{itemize}
\end{itemize} \\\hline
\end{tabularx}
\caption{Strategy $s^*$ for HT Game}\label{strateg: s_star for ht game}
\end{table}

\begin{table}[H]
\begin{tabularx}{\textwidth}{|X|}
\hline\\
\textbf{Strategy $\hat{s}$ for HT Game} \\
\hline
\begin{itemize}
    \item \textbf{Actions in Rounds of the Form $t = 4j + 1$.} For each round $t = 4j + 1$, where $j \in \{0,1,2,\dots, \secparam\}$, every player $P_i$ does the following:
    \begin{itemize}
        \item Flip a biased coin $dc^j_i$ with parameter $\alpha = \frac{1}{3}$, i.e., $dc^j_i \sim \text{Bernoulli}(\alpha)$. We call $dc^j_i$ the $j$-th \emph{decision coin} of player $P_i$.
        \item Flip a fair coin $mc^j_i$, i.e., $mc^j_i \sim \text{Bernoulli}(\frac{1}{2})$. We call $mc^j_i$ the $j$-th \emph{mask} of player $P_i$.
              
        \item Send the masked decision coin to the \emph{left neighbor}, i.e., Set $l_{i}^t \leftarrow dc^j_i \oplus mc^j_i$.
        \item Send the mask to the \emph{right neighbor}, i.e., Set $r_{i}^t \leftarrow mc^j_i$.
    \end{itemize}
    
    \item \textbf{Actions in Rounds of the Form $t = 4j + 2$.} For each round $t = 4j + 2$, where $j \in \{0,1,2,\dots, \secparam\}$, every player $P_i$ does the following:
    \begin{itemize}
        \item Send the masked double-decision coin $dc^j_i \oplus l^{t-1}_{i^+} = dc^j_i \oplus dc^j_{i^+} \oplus mc^j_{i^+}$ to the \emph{left neighbor}, where $dc^j_i$ is player $P_i$'s $j$-th decision coin, and $l^{t-1}_{i^+} = dc^j_{i^+} \oplus mc^j_{i^+}$ is the $j$-th masked decision coin received from its right neighbor $P_{i^+}$. That is, Set $l_{i}^t \leftarrow dc^j_i \oplus l^{t-1}_{i^+}$.
        
        \item Send a null message to the \emph{right neighbor},  i.e., Set $r_{i}^t \leftarrow \varnothing$.
    \end{itemize}

    \item \textbf{Actions in Rounds of the Form $t = 4j + 3$.} For each round $t = 4j + 3$, where $j \in \{0,1,2,\dots, \secparam\}$, every player $P_i$ does the following:
    \begin{itemize}
        \item Compute the $j$-th group decision bit $p^j = dc^j_1 \oplus dc^j_2 \oplus dc^j_3$, where $dc^j_1, dc^j_2, dc^j_3$ are the $j$-th decision coin of player $P_1, P_2, P_3$. $p$ can be computed from its passed three round message $l_{i^{+}}^{t-1} \oplus r_{i^-}^{t-2} \oplus dc^j_i = dc^j_{i^+} \oplus dc^j_{i^-} \oplus mc^j_{i^-} \oplus mc^j_{i^-} \oplus dc^j_i = dc^j_1 \oplus dc^j_2 \oplus dc^j_3 = p^j$
        \item If both the group decision and its own decision for this round is to share, i.e., $p^j = dc^j_i = 1$, then set $r_{i}^t \leftarrow s^j_i$ and $l_{i}^t \leftarrow s^j_i$; otherwise set $r_{i}^t \leftarrow \varnothing$ and $l_{i}^t \leftarrow \varnothing$.
    \end{itemize}

    \item \textbf{Actions in Rounds of the Form $t = 4j + 4$.} For each round $t = 4j + 4$, where $j \in \{0,1,2,\dots, \secparam\}$, every player $P_i$ does the following:
    \begin{itemize}
        \item If the $j$-th group decision is not to share and player $P_i$ receives no shares from others, i.e., $p^j = 0$ and $l_{i^+}^{t-1} = r_{i^-}^{t-1} = \varnothing$; OR
        \item both the $j$-th group decision and $P_i$'s decision are to share and player $P_i$ receives no shares from others, i.e., $p^j = dc^j_i = 1$ and $l_{i^+}^{t-1} = r_{i^-}^{t-1} = \varnothing$; OR
        \item the $j$-th group decision is to share, player $P_i$'s decision is not to share and player $P_i$ receives exactly one share from others, i.e., $p^j = 1, dc^j_i = 0$ and either $(l_{i^+}^{t-1} = \varnothing, r_{i^-}^{t-1} = s_{i^-}^j)$ or $(l_{i^+}^{t-1} = s_{i^+}^j, r_{i^-}^{t-1} = \varnothing)$, then player $P_i$ requests the game to continue by setting $r_{i}^t \leftarrow \varnothing$ and $l_{i}^t \leftarrow \varnothing$.
        \item Otherwise $P_i$ aborts the game by setting $r_{i}^t \leftarrow \bot$ and $l_{i}^t \leftarrow \bot$.
    \end{itemize}

    \item \textbf{Actions in Round $t = 4\secparam + 5$.} Every player $P_i$  aborts the game by setting $r_{i}^t \leftarrow \bot$ and $l_{i}^t \leftarrow \bot$.
\end{itemize} \\\hline
\end{tabularx}
\caption{Strategy $\hat{s}$ for HT Game}\label{strateg: s_hat for ht game}
\end{table}

\clearpage
\section{Computational Nash Equilibrium}\label{appendix:computationalNash}
For readers interested in a closer comparison between our proposed \emph{pseudo-equilibrium} notion and \emph{computational Nash equilibrium}~\cite{halpern2016computational} --- a recent and closely related attempt to bridge the gap between cryptography and game theory, in this section, we revisit the definition of computational Nash equilibrium and the necessary context for using it. 

\Cref{def: hp computational game} states the formal definition of a computable uniform sequence of games (also referred to as a computational game), which is the specific class of games on which computational Nash equilibrium is defined. 

\begin{definition}[Computable Uniform Sequence of Games, Definition 3.1 in~\cite{halpern2016computational}]
\label{def: hp computational game}\hfill

Let $[c] = \{1, 2, \ldots, c\}$ denote the set of player indices. A computable uniform sequence (or computational game) $\cG = \{G_1, G_2, \ldots\}$ of games is a sequence of extensive-form game $G_i$ that satisfies the following conditions:
\begin{itemize}
    \item All the games $G_1, G_2, \ldots$ in the sequence $\cG$ involve the same set of players, i.e., $\{1, \cdots, c\}$.
    
    \item The histories set $H_n$ for the $n$-th game $G_n$ has the following properties:
    \begin{itemize}
        \item Every action available at a non-terminal history is polynomial-size describable. Formally, there exists a polynomial $p$ such that, for all non-terminal histories $h \in H_n^{NT}$, the set of all action $A(h)$ at the history $h$ is at most size $p(n)$: $A(h) \subseteq \bits^{\leq p(n)}$.
        \item It is efficient to check if a sequence of actions (i.e., a history) is valid in $G_n$. Formally, there exists a PPT algorithm $\cA: \bits^* \mapsto \bits$ such that, for any history $h$, $\mathcal{A}(1^n, h) = 1$ if $h \in H_n$, and $\mathcal{A}(1^n, h) = 0$ otherwise.
    \end{itemize}
    
    \item It is efficient to compute which player moves at a given valid non-terminal history. Formally, there exists a PPT algorithm $P: \bits^* \mapsto [c]$ such that, for any history $h \in H^{NT}_n$, it can compute $P(1^n, h) \in [c]$ correctly.

    \item It is efficient to compute the utility for each player. Formally, for every $i \in [c]$, there exists a PPT utility function $u_i: \bits^* \mapsto \Real$ such that, for every terminal history $h \in H^T_n$, it can compute $u_i(1^n, h)$.
\end{itemize}
\end{definition}

\Cref{def: computational Nash} presents the formal definition of computational Nash equilibrium.

\begin{definition}[Computational Nash Equilibrium, Definition 4.1 in~\cite{halpern2016computational}]
\label{def: computational Nash}
Let $\cG = \{G_1, G_2, \ldots\}$ be a computable uniform sequence of games. Then, the polynomial-time strategy profile $\vec{M}= \{M_1, \ldots, M_c\}$ is a \emph{computational Nash equilibrium} of $\cG$ if, for all players $i \in [c]$ and all polynomial-time strategies $M_i'$ in $\cG$ for player $i$, there exists a negligible function $\epsilon$, such that for all $n$,
\begin{align*}
    \sum_{h \in H_n^T} \psi^{G_n}_{\vec{M}}(h) u_i(h) \geq \sum_{h \in H_n^T} \psi^{G_n}_{(M_i', \vec{M}_{-i})}(h) u_i(h) - \epsilon(n),
\end{align*}
where $H_n^T$ denotes the set of terminal histories in the $n$-th game $G_n$, $\psi^{G_n}_{\vec{M}}(\cdot)$ and $\psi^{G_n}_{(M_i', \vec{M}{-i})}(\cdot)$ are the probability distributions over $H_n^T$ induced by playing the strategy profiles $\vec{M}$ and $(M_i', \vec{M}{-i})$, respectively, and $u_i(\cdot)$ is the utility function of player $i$.
\end{definition}

To analyze the computational Nash equilibrium, rather than reasoning directly from the definition --- which is often challenging due to the complexity of the equilibrium's definition and computational games --- the authors propose studying the strategies of the underlying game (or ideal game) that the computational game (or real game) represents. \Cref{def: ideal game represented by the computational game} states necessary conditions that such an underlying game must satisfy, and formally define what it means for a computational game to represent a standard extensive-form game.

\begin{definition}[Represented Underlying Game, Definition 3.3 in~\cite{halpern2016computational}]
\label{def: ideal game represented by the computational game}
Let $\cG = \{G_1, G_2, \ldots\}$ be a computable uniform sequence and $G$ be a fixed extensive-form game. Denote by $f = \{f_1, f_2, \ldots\}$ a sequence of mappings such that each $f_n: H_n \mapsto H$ maps histories ($H_n$) in the $n$-th game $G_n$ to histories ($H$) in the ideal game $G$. Denote by $\cF = \{\cF_1, \cF_2, \ldots\}$ a sequence of mappings such that each $\cF_n$ maps every strategy $\sigma$ in the ideal game $G$ to a corresponding strategy in the $n$-th game $G_n$. We say that a computational game $\cG$ (or real game) $\langle f, F\rangle$-represents the underlying game (or ideal game) $G$ --- or equivalently, that the real game $\cG$ is $\langle f, F\rangle$-corresponding to the ideal game $G$ --- if the quadruple $(\cG, G, f, \cF)$ satisfies the following properties:

\begin{itemize}
    \item[UG1.] $G$ and all the games $G_1, G_2, \ldots$ in the sequence $\cG$ involve the same set of players, i.e., $\{1, \cdots, c\}$.

    \item[UG2.] The computational game $\cG$ should have the same structure of the ideal game $G$. More concretely, there exists a sequence of mappings $f = \{f_1, f_2, \ldots\}$ such that each $f_n: H_n \mapsto H$ maps histories ($H_n$) in the $n$-th game $G_n$ to histories ($H$) in the ideal game $G$, and each $f_n$ is surjective. Furthermore, the surjective mapping $f_n$ has the following properties:
    \begin{itemize}
        \item[(a)] Any history $h \in H_n$ in the computational game $G_n$ and its corresponding history $f_n(h) \in H$ in the ideal game $G$ should contain the same number of actions. Formally, $|h| = |f_n(h)|$.
        \item[(b)] For any history $h \in H_n$ and its corresponding history $f_n(h) \in H$, the same player is assigned to move at both $h$ and $f_n(h)$; and
        \item[(c)] if $h'$ is a subhistory of $h$, then $f_n(h')$ is a subhistory of $f_n(h)$; and
        \item[(d)] if $h$ and $h'$ are in the same information set in $G_n$, then $f_n(h)$ and $f_n(h')$ are in the same information set in $G$.
        \item[(e)] For any distinct histories $h, h' \in H_n$, if $h$ and $h'$ are in the same information set in the game $G_n$, then for any action $a$ such that $h \| a \in H_n$ (i.e., action $a$ is valid after history $h$), the last action in the corresponding mapped histories must be the same; i.e., $\mathrm{LA}(f_n(h \| a)) = \mathrm{LA}(f_n(h' \| a))$.
    \end{itemize}

    \item[UG3.] For any terminal history $h \in H^T_n$ and its corresponding terminal history $f_n(h) \in H^T$, the utility of each player $i \in [c]$ must be the same in both $h$ and $f_n(h)$, i.e., $u_i(h) = u_i(f_n(h))$.

    \item[UG4.] Every strategy in the ideal game $G$, along with any unilateral deviation from it, can be faithfully simulated by a corresponding polynomial-time strategy (and deviation) in the computational game $\mathcal{G}$. More concretely, there exists a sequence of mappings $\cF = \{\cF_1, \cF_2, \ldots\}$ such that each $\cF_n$ maps every strategy $\sigma$ in the ideal game $G$ to a corresponding strategy in the $n$-th game $G_n$. Furthermore, the total mapping $\cF_n$ has the following properties:
    \begin{itemize}
        \item[(a)] The mapping $\cF_n$ applies independently to each player’s strategy in the game $G$. Formally, for every strategy profile $\vec{\sigma} = \{\sigma_1, \cdots, \sigma_c\}$ in $G$, its corresponding strategy profile $\cF_n(\vec{\sigma})$ in the game $G_n$ has the form $\cF_n(\vec{\sigma}) = (\cF_n(\sigma_1), \ldots, \cF_n(\sigma_c))$.
        \item[(b)] For every strategy $\sigma_i$ of player $i \in [c]$ in the ideal game $G$, there exists a PPT strategy $M^{\sigma_i}$ in the computational game $\cG$ that simulates the behavior of $\sigma_i$. More concretely, let $r_n$ and $v_n$ denote the randomness and view of player $i$ when playing the corresponding strategy $\mathcal{F}_n(\sigma_i)$ in the computational game $G_n$. The algorithm $M^{\sigma_i}$ takes $(1^n, v_n, r_n)$ as input and outputs the next action for player $i$. For every strategy $\sigma_i$, there must exist a PPT algorithm $M^{\sigma_i}$ such that its output equals the last action in the history $f_n(\mathcal{F}_n(\sigma_i)(v_n, r_n)) \in H$ --- that is, the corresponding ideal-game history obtained by mapping the computational history (produced by playing $\mathcal{F}_n(\sigma_i)$ with view $v_n$ and randomness $r_n$) back to the ideal game via $f_n$. Formally, 
        \begin{align*}
            \mathrm{LA}(f_n(\mathcal{F}_n(\sigma_i)(v_n, r_n))) = M^{\sigma_i}(1^n, v_n, r_n).
        \end{align*}

        \item[(c)]         
        For all strategy profiles $\vec{\sigma}$ in $G$, all players $i$, and all polynomial-time strategies $M_i'$ for player $i$ in $\mathcal{G}$, there exists a sequence $\{\sigma_1', \sigma_2', \ldots \}$ of strategies for player $i$ in $G$ such that 
        \begin{align*}
            \left\{ \phi_{M_i', \mathcal{F}(\vec{\sigma}_{-i})}^{G_n} \right\}_n \text{ is computationally indistinguishable from } \left\{ \rho_{(\sigma_n', \vec{\sigma}_{-i})}^G \right\}_n,
        \end{align*}
        where $H_n^T$ and $H^T$ denote the sets of terminal histories in the $n$-th computational game $G_n$ and the ideal game $G$, respectively; and the distribution $\phi_{M_i', \mathcal{F}(\vec{\sigma}{-i})}^{G_n}$ is the probability distribution over $H_n^T$ induced by playing the strategy profiles $(M_i', \mathcal{F}(\vec{\sigma}{-i}))$ in $G_n$; and similarly, $\rho{(\sigma_n', \vec{\sigma}{-i})}^G$ is the probability distribution over $H^T$ induced by playing the strategy profiles $(\sigma_n', \vec{\sigma}_{-i})$ in $G$.
    \end{itemize}
\end{itemize}
\end{definition}

Finally, the following theorem shows how a computational Nash equilibrium in the computational game $\mathcal{G}$ can be obtained from an underlying game $G$ and a corresponding strategy mapping $\mathcal{F}$.

\begin{theorem}[Theorem 4.2 in~\cite{halpern2016computational}]
If the computational game $\cG \ \langle f, \cF \rangle$-represents the ideal game $G$ and $\vec{\sigma}$ is an Nash equilibrium in $G$, then $\mathcal{F}(\vec{\sigma})$ is a computational Nash equilibrium of $\cG$.
\end{theorem}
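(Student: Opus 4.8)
The plan is to verify the defining inequality of \Cref{def: computational Nash} directly. Fix a player $i\in[c]$ and an arbitrary polynomial-time deviation $M_i'$ for player $i$ in $\cG$; I must produce a negligible function $\epsilon$ (depending only on $i$ and $M_i'$) with $\sum_{h\in H_n^T}\psi^{G_n}_{\cF(\vec{\sigma})}(h)u_i(h)\ge\sum_{h\in H_n^T}\psi^{G_n}_{(M_i',\cF(\vec{\sigma})_{-i})}(h)u_i(h)-\epsilon(n)$ for all $n$. The first move is to transport both sides to the fixed (finite) ideal game $G$ via the properties of \Cref{def: ideal game represented by the computational game}: by UG3 utilities respect $f_n$, so for any profile $\gamma$ in $G_n$ we have $\sum_{h\in H_n^T}\psi^{G_n}_{\gamma}(h)u_i(h)=\sum_{h'\in H^T}\bigl((f_n)_*\psi^{G_n}_{\gamma}\bigr)(h')\,u_i(h')$; thus the two quantities above are expectations of the single \emph{fixed and bounded} function $u_i$ under the push-forward distributions $(f_n)_*\psi^{G_n}_{\cF(\vec{\sigma})}$ and $(f_n)_*\psi^{G_n}_{(M_i',\cF(\vec{\sigma})_{-i})}=\phi^{G_n}_{M_i',\cF(\vec{\sigma}_{-i})}$ over the finite history set $H^T$ of $G$.

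Second, I would identify these push-forwards with ideal-game quantities. For the honest profile, the faithfulness of the simulation (UG2 together with UG4(b), where $M^{\sigma_j}$ reproduces $\sigma_j$'s behavior once histories are mapped back through $f_n$) yields $(f_n)_*\psi^{G_n}_{\cF(\vec{\sigma})}=\rho^G_{\vec{\sigma}}$ exactly, so $\sum_{h\in H_n^T}\psi^{G_n}_{\cF(\vec{\sigma})}(h)u_i(h)=\sum_{h'\in H^T}\rho^G_{\vec{\sigma}}(h')u_i(h')$ for every $n$. For the deviation, UG4(c) --- and this is precisely where polynomial-time boundedness of $M_i'$ is used --- supplies a sequence $\{\sigma_n'\}_n$ of strategies for player $i$ in $G$ such that $\{\phi^{G_n}_{M_i',\cF(\vec{\sigma}_{-i})}\}_n$ is computationally indistinguishable from $\{\rho^G_{(\sigma_n',\vec{\sigma}_{-i})}\}_n$ (in the sense of \Cref{def: compInd}).

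Third comes the observation that closes the argument: since $H^T$ is a \emph{fixed finite} set, for each $h'\in H^T$ the trivial distinguisher ``is the sample equal to $h'$?'' is PPT, so two computationally indistinguishable ensembles of distributions over $H^T$ must be negligibly close in total variation; hence, for the fixed bounded $u_i$ there is a negligible $\nu$ with $\bigl|\sum_{h'}\phi^{G_n}_{M_i',\cF(\vec{\sigma}_{-i})}(h')u_i(h')-\sum_{h'}\rho^G_{(\sigma_n',\vec{\sigma}_{-i})}(h')u_i(h')\bigr|\le\nu(n)$. Finally, since $\vec{\sigma}$ is a Nash equilibrium of $G$ and each $\sigma_n'$ is a unilateral deviation of player $i$, $\sum_{h'}\rho^G_{(\sigma_n',\vec{\sigma}_{-i})}(h')u_i(h')\le\sum_{h'}\rho^G_{\vec{\sigma}}(h')u_i(h')$ for every $n$. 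Chaining these three facts gives $\sum_{h\in H_n^T}\psi^{G_n}_{(M_i',\cF(\vec{\sigma})_{-i})}(h)u_i(h)\le\sum_{h\in H_n^T}\psi^{G_n}_{\cF(\vec{\sigma})}(h)u_i(h)+\nu(n)$, which is exactly \Cref{def: computational Nash} with $\epsilon=\nu$.

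The step I expect to be the main obstacle is the bookkeeping around UG4(c): one must fix the convention that its ``distribution over $H_n^T$'' is meant after push-forward by $f_n$ onto the common finite space $H^T$, and then argue carefully that computational indistinguishability of the resulting ensembles collapses to statistical closeness there --- this collapse, which is special to fixed finite outcome sets, is what allows a \emph{single} negligible $\epsilon$ to work uniformly in $n$, and it is where a naive argument could fail. A secondary subtlety is whether the honest profile is simulated \emph{exactly} by $\cF(\vec{\sigma})$ in UG4(b) or only up to a negligible statistical gap; under the weaker reading the proof is unchanged except for absorbing one further negligible term. Note also that the whole argument silently relies on $u_i$ being bounded, which here is automatic, since UG3 ties the computational-game utilities to those of the fixed game $G$ and hence forbids growth in $n$ --- exactly the structural restriction that, as discussed in the body, prevents this blueprint from applying to parameter-dependent-utility games like the HT game.
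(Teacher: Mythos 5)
This statement is quoted verbatim from \cite{halpern2016computational} (their Theorem~4.2) and the paper gives no proof of it, so there is nothing internal to compare against; judged on its own terms, your argument is correct and is essentially the standard one from that reference. The chain you set up --- transport both expectations to the fixed finite $H^T$ via UG3, use UG4(b) to identify the honest pushforward with $\rho^G_{\vec{\sigma}}$, use UG4(c) to replace the deviation's pushforward (up to computational indistinguishability, which over a fixed finite outcome set collapses to negligible total-variation distance via point-indicator distinguishers) by some $\rho^G_{(\sigma_n',\vec{\sigma}_{-i})}$, and finish with the Nash property of $\vec{\sigma}$ in $G$ --- is exactly the intended proof, and your two flagged subtleties (reading UG4(c)'s indistinguishability after pushforward by $f_n$, and the exact-vs-negligible faithfulness of UG4(b)) are the right ones, both handled correctly.
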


%
%
%
%

\end{document}